\newtheorem{definition}{Definition}
\newtheorem{theorem}{Theorem}
\newtheorem{lemma}{Lemma}
\newtheorem{proposition}{Proposition}
\newcommand{\nats}{\mathbb{N}}
\newcommand{\nnints}{\mathbb{Z}_{\ge0}}
\newcommand{\reals}{\mathbb{R}}
\newcommand{\nnreals}{\mathbb{R}_{\ge0}}
\renewcommand{\epsilon}{\varepsilon}
\newcommand{\calG}{\mathcal{G}}
\newcommand{\calR}{\mathcal{R}\hspace{0.1mm}}
\newcommand{\E}{\operatorname{\mathbb{E}}}
\newcommand{\V}{\operatorname{\mathbb{V}}}
\newcommand{\cov}{\operatorname{\text{Cov}}}
\newcommand{\bmA}{\mathbf{A}}
\newcommand{\bma}{\mathbf{a}}
\newcommand{\bmr}{\mathbf{r}}
\newcommand{\hf}{\hat{f}}
\newcommand{\hr}{\hat{r}}
\newcommand{\hw}{\hat{w}}
\newcommand{\td}{\tilde{d}}
\def\AlgOne{\textsf{ARRFull}$_{\triangle}$}
\def\AlgTwo{\textsf{ARROneNS}$_{\triangle}$}
\def\AlgThree{\textsf{ARRTwoNS}$_{\triangle}$}
\def\AlgSec{\textsf{RRFull}$_{\triangle}(d_{max})$}
\newcommand{\GPlus}{\textsf{Gplus}}
\newcommand{\IMDB}{\textsf{IMDB}}
\newcommand{\Lap}{\textrm{Lap}}
\newcommand{\CostDL}{\mathrm{Cost}_{DL}}
\newcommand{\CostUL}{\mathrm{Cost}_{UL}}
\newcommand{\footremember}[2]{%
    \thanks{#2}
    \newcounter{#1}
    \setcounter{#1}{\value{footnote}}%
}
\newcommand{\footrecall}[1]{%
    \footnotemark[\value{#1}]%
}
\newif\ifconferenceon\conferenceonfalse
\newcommand{\conference}[1]{#1}
\newcommand{\arxiv}[1]{}
\newcommand{\conference}[1]{}
\newcommand{\arxiv}[1]{#1}
\begin{document}

\title{\Large \bf Communication-Efficient Triangle Counting under Local Differential Privacy}

\author{
{\rm Jacob Imola}\footremember{contributions}{The first and second authors made equal contribution.}\\
UC San Diego
\and
{\rm Takao Murakami}\footrecall{contributions}\\
AIST
\and
{\rm Kamalika Chaudhuri}\\
UC San Diego
} 

\maketitle

\begin{abstract}

Triangle counting in networks under LDP (Local Differential Privacy) is a fundamental task for analyzing connection patterns or calculating a 
clustering coefficient while strongly protecting sensitive friendships from a central server. In particular, a recent study proposes an algorithm
for this task that uses two rounds of interaction between users and the server to significantly reduce estimation error. However,
this algorithm suffers from a prohibitively 
high 
communication cost due to a large noisy graph each user needs to download.

In this work, we propose triangle counting algorithms under LDP with a small estimation error and communication cost.
We first
propose two-rounds algorithms
consisting of edge sampling and carefully selecting edges each user downloads
so that the estimation error is small.
Then we propose
a double clipping technique,
which clips the number of edges and then the number of noisy triangles,
to significantly reduce the sensitivity of each user's query.
Through comprehensive evaluation, we 
show that our algorithms dramatically reduce the communication cost
of the existing algorithm, e.g., from
6 hours to 8 seconds or less at a 20 Mbps download rate, while keeping
a small estimation error.
\end{abstract}

\section{Introduction}
\label{sec:intro}
Counting subgraphs (e.g., triangles, stars, cycles) is
one of the most basic tasks 
for analyzing connection patterns
in
various graph data, e.g., social,
communication, and collaboration networks.
For example,
a triangle is given by a set of three nodes with three edges, whereas a $k$-star is given by a central node connected to $k$ other nodes.
These subgraphs
play a crucial role in calculating
a \textit{clustering coefficient} ($=\frac{3 \times \text{\#triangles}}{\text{\#2-stars}}$) (see Figure~\ref{fig:triangles_stars}). 
The clustering coefficient 
measures the average probability that
two friends of a user will also be a friend
in a social graph \cite{Newman_PRL09}. 
Therefore, it is useful for measuring the effectiveness of friend suggestions. 
In addition, the clustering coefficient represents the degree to which users tend to cluster together. 
Thus, if it is large in some services/communities, we can effectively apply social recommendations \cite{Kolluri_CCS21} to the users. 
Triangles 
and $k$-stars 
are also useful for 
constructing
graph models
\cite{Robins_SN07,Jorgensen_SIGMOD16}; 
see also \cite{Tsourakakis_JGAA11} for other applications of triangle counting. 
However, graph data often involve sensitive data such as sensitive edges (friendships),
and they 
can be leaked from 
exact numbers of triangles and $k$-stars \cite{Imola_USENIX21}.

To analyze subgraphs while protecting user privacy, DP (Differential Privacy) \cite{DP} has been widely adopted as a privacy metric \cite{Ding_TKDE21,Imola_USENIX21,Karwa_PVLDB11,Sun_CCS19,Ye_ICDE20,Ye_TKDE21,Zhang_SIGMOD15}.
DP protects user privacy against adversaries with arbitrary background knowledge and is known as a gold standard for data privacy.
According to the underlying model, DP can be categorized into \textit{central (or global) DP} and \textit{LDP (Local DP)}.
Central DP assumes a scenario where
a central server has personal data of all users.
Although accurate analysis of subgraphs is possible under
this model \cite{Ding_TKDE21,Karwa_PVLDB11,Zhang_SIGMOD15}, there is a risk that the entire graph is leaked from the server by illegal access or internal fraud \cite{data_breach2021,CambridgeAnalytica}.
In addition, central DP cannot be applied to  \textit{decentralized social networks} 
\cite{Diaspora,Mastodon,Minds,Paul_CN14} 
where the entire graph is distributed across many servers. 
We can even consider 
\textit{fully decentralized applications} where a server does not have
any original edge, 
e.g., a mobile app that sends a noisy degree (noisy number of friends) to the server, which then estimates a degree distribution. 
Central DP cannot be used in such applications. 

In contrast, LDP assumes a scenario where each user obfuscates her personal data (friends list in 
the case of graphs) 
by herself and sends the obfuscated data to a possibly malicious server; i.e., it does not assume trusted servers.
Thus, it does not suffer from a data breach and can also be applied to the decentralized applications. 
LDP has been widely studied in tabular data where each row corresponds to a user's personal data (e.g., age, browser setting, location) 
\cite{Acharya_AISTATS19,Bassily_NIPS17,Erlingsson_CCS14,Kairouz_ICML16,Murakami_USENIX19,Wang_USENIX17} 
and also in graph data \cite{Imola_USENIX21,Qin_CCS17,Ye_ICDE20,Ye_TKDE21}.
For example, $k$-star counts can be very accurately estimated under LDP because each user can count $k$-stars of which she is a center and sends a noisy version of her $k$-star count to the server \cite{Imola_USENIX21}.

However, more complex subgraphs such as triangles are much harder to count under LDP because each user
cannot see
edges between other users.
For example, in Figure~\ref{fig:triangles_stars}, user $v_1$ cannot see
edges between $v_2$, $v_3$, and $v_6$ 
and therefore 
cannot count triangles involving $v_1$.
Thus,
existing algorithms \cite{Imola_USENIX21,Ye_ICDE20,Ye_TKDE21}
obfuscate each user's edges (rather than her triangle count) by
RR (Randomized Response)
\cite{Warner_JASA65} and send noisy edges to a server.
Consequently, the server suffers from a
prohibitively
large estimation error
(e.g., relative error $> 10^2$ in large graphs, 
as shown in Appendix~\ref{sec:one-round}) 
because all three edges are noisy in any noisy triangle the server sees.

\begin{figure}[t]
  \centering
  \includegraphics[width=0.85\linewidth]{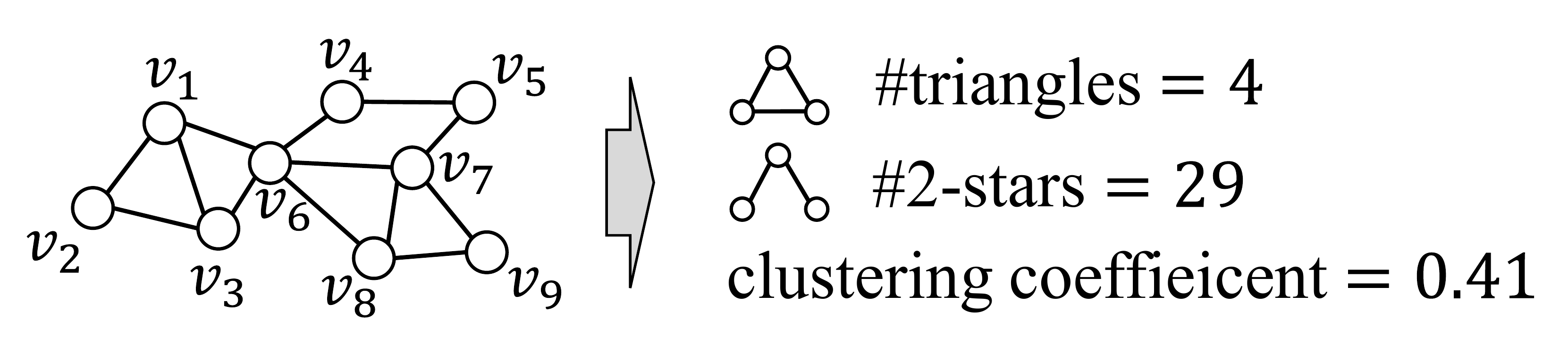}
  \vspace{-4mm}
  \caption{Triangles, $2$-stars, and clustering coefficient.}
  \label{fig:triangles_stars}
\end{figure}

A recent study \cite{Imola_USENIX21} shows that the estimation error in
locally private triangle counting
is significantly reduced by introducing an additional round of interaction between users and the server.
Specifically, if the server publishes the noisy graph (all noisy edges) sent by users at the first round, then each user can count her noisy triangles such that \textit{only one edge} is noisy (as she knows two edges connected to her).
Thus,
the algorithm in \cite{Imola_USENIX21} sends each user's noisy triangle count (with additional noise) to the server at the second round.
Then the server can accurately estimate the triangle count. 
This algorithm also requires a much smaller number of interactions 
(i.e., only two) than collaborative approaches \cite{Kairouz_FTML21,Shokri_CCS15} that generally require many interactions. 

Unfortunately, 
the algorithm in \cite{Imola_USENIX21} 
is still
impractical 
for a large-scale graph.
Specifically, the noisy graph sent by users is dense, hence extremely large for a large-scale graph, e.g.,
$500$
Gbits for a graph of
a million users.
The problem is that \textit{every user} needs to download such huge data; e.g., when the download speed is $20$ Mbps (which is a recommended speed in YouTube \cite{YouTube_speed}), every user needs about 7 hours to download the noisy graph.
Since the communication ability might be limited for some users, the algorithm in \cite{Imola_USENIX21} cannot be
used for
applications with large and diverse users.

In summary, existing triangle algorithms under LDP suffer from either a prohibitively large estimation error or a prohibitively 
high 
communication cost.
They also suffer from the same issues when calculating the clustering coefficient.

\smallskip
\noindent{\textbf{Our Contributions.}}~~We 
propose locally private triangle counting algorithms with a small estimation error and small communication cost.
Our contributions are as follows:

\begin{itemize}
    \item We propose two-rounds triangle algorithms
consisting of \textit{edge sampling} after RR and \textit{selecting edges each user downloads}.
In particular, we show that a simple extension of \cite{Imola_USENIX21} with edge sampling suffers from a large estimation error for a large or dense graph where the number of 4-cycles (such as $v_1$-$v_2$-$v_3$-$v_6$-$v_1$
in Figure~\ref{fig:triangles_stars}) is large.
To address this issue, we propose some strategies for selecting edges to download to reduce the error caused by the 4-cycles, which we call the \textit{4-cycle trick}.
\item We
show that
the algorithms with the $4$-cycle trick
still suffer from a large estimation error due to
large Laplacian noise for each user.
To significantly reduce the Laplacian noise, 
we
propose a \textit{double clipping} technique,
which clips 
a degree (the number of edges) of each user 
with LDP and then clips the number of noisy triangles. 
\item We evaluate our algorithms using two real datasets.
We show that our entire algorithms with the 4-cycle trick and double clipping
dramatically reduce the communication cost
of
\cite{Imola_USENIX21}.
For example,
for a graph with about $900000$ users,
we reduce the download cost from $400$ Gbits ($6$ hours when $20$ Mbps) to 
$160$ Mbits ($8$ seconds) or less 
while keeping the relative error much smaller than 1.
\end{itemize}
Thus, locally private triangle counting is now much more practical. 
In Appendix~\ref{sec:cluster}, we also show that we can estimate the clustering coefficient with a small estimation error and download cost. 
For example, our algorithms are useful for measuring the effectiveness of friend suggestions or social recommendations in decentralized 
social networks, 
e.g., Diaspora \cite{Diaspora}, Mastodon \cite{Mastodon}. 
Our source code
is available at 
\cite{TriangleLDP}. 

All the proofs of our privacy and utility analysis 
are given in \conference{the full version \cite{Imola_arXiv22}}\arxiv{Appendices~\ref{sec:proof_seq_comp_edge_LDP}, \ref{sec:proof_algorithms}, and \ref{sec:proof_double_clip}}.

\smallskip
\noindent{\textbf{Technical Novelty.}}~~Below we explain more about 
the technical novelty of this paper. 
Although we focus on two-rounds local algorithms in the same way as \cite{Imola_USENIX21}, we introduce several new algorithmic ideas previously unknown in the literature. 

First, our 4-cycle trick is totally new. 
Although some studies focus on 4-cycle counting \cite{Bera_STACS17,Kallaugher_PODS19,Manjunath_ESA11,McGregor_PODS20}, this work is the first to use 4-cycles to improve communication efficiency. 
Second, selective download of parts of a centrally computed quantity is also new. 
This is not limited to graphs -- even in machine learning, 
there are no such strategic download techniques previously, to our knowledge. 
Third, our utility analysis of our 
triangle 
algorithms (Theorem~\ref{thm:l2loss_algorithms}) is 
different from \cite{Imola_USENIX21} in that ours introduces subgraphs such as 4-cycles and $k$-stars. 
This leads us to our 4-cycle trick. 
Fourth, we propose two triangle algorithms that introduce the 4-cycle trick and show that the more tricky one provides the best performance because of 
its low sensitivity in DP.

Finally, 
our double clipping is new. 
Andrew \textit{et al.} \cite{Andrew_NeurIPS21} propose an adaptive clipping technique, which applies clipping twice. 
However, they focus on federated averaging, 
and their problem setting is different from our graph setting. 
In particular, they require a private quantile of the norm distribution. 
In contrast, we need only a much simpler estimate: a private degree. 
Here, we use the fact that the degree has a small sensitivity (sensitivity $=1$) in DP for edges. 
We also provide a new, reasonably tight bound on the probability that the noisy triangle count exceeds a clipping threshold (Theorem~\ref{thm:triangle_excess}). 
Thanks to the two differences, we obtain a significant communication improvement: two or three orders of magnitude.

\section{Related Work}
\label{sec:related}
\noindent{\textbf{Triangle Counting.}}~~Triangle
counting has been extensively studied in a non-private setting \cite{Bera_KDD20,Bera_PODS20,Chu_KDD11,Eden_FOCS15,Seshadhri_SDM13,Suri_WWW11,Tsourakakis_KDD09,Wu_TKDE16} 
(it is almost a sub-field in itself)
because it requires high time complexity for large graphs.

Edge sampling \cite{Bera_PODS20,Eden_FOCS15,Tsourakakis_KDD09,Wu_TKDE16} is one of the most basic techniques to improve 
scalability.
Although edge sampling is simple, it is quite effective -- it is reported in \cite{Wu_TKDE16} that edge sampling outperforms other sampling techniques such as node sampling and triangle sampling.
Based on this, we adopt edge sampling after RR\footnote{We also note that a study in \cite{Nguyen_TDP16} proposes a graph publishing algorithm in the central model that independently changes 1-cells (edges) to 0-cells (no edges) with some probability and then 
changes a fixed number of 0-cells to 1-cells \textit{without replacement}. 
However, each 0-cell is \textit{not} independently sampled in this case, and consequently, 
their proof that relies on the independence of the noise to each 0-cell is incorrect. 
In contrast, our algorithms provide DP because we apply sampling after RR, i.e., post-processing.} 
with new techniques such as the 4-cycle trick and double clipping.
Our entire algorithms significantly improve the communication cost, as well as the space and time complexity, under LDP (see Sections~\ref{sub:clip_theoretical_analysis}
and \ref{sec:experiments}).

\smallskip
\noindent{\textbf{DP on Graphs.}}~~For private graph analysis, DP has been widely adopted as a privacy metric.
Most of them adopt central (or global) DP
\cite{Day_SIGMOD16,Ding_TKDE21,Hay_ICDM09,Karwa_PVLDB11,Kasiviswanathan_TCC13,Raskhodnikova_arXiv15,Zhang_SIGMOD15}, 
which suffers from the data breach issue.

LDP on graphs has recently studied in some studies, e.g., synthetic data generation \cite{Qin_CCS17}, subgraph counting \cite{Imola_USENIX21,Sun_CCS19,Ye_ICDE20,Ye_TKDE21}.
A study in
\cite{Sun_CCS19} proposes subgraph counting algorithms in a setting where each user
allows her friends to see all her connections.
However, this setting is unsuitable for
many applications; e.g., in Facebook, a user can easily change her setting so that
her friends cannot see her connections.

Thus, we consider a model where each user can see only her friends.
In this model, some one-round algorithms \cite{Ye_ICDE20,Ye_TKDE21}
and two-rounds algorithms\cite{Imola_USENIX21} have been proposed.
However, they suffer from a prohibitively large estimation error or high communication cost, as explained in Section~\ref{sec:intro}.

Recently proposed network LDP protocols \cite{Cyffers_arXiv21} consider, instead of a central server, collecting private data with user-to-user communication protocols along a graph. 
They focus on sums, histograms, and SGD (Stochastic Gradient Descent) and do not provide subgraph counting algorithms. 
Moreover, they focus on hiding each user's private dataset rather than hiding an edge in a graph. 
Thus, their approach cannot be applied to our task of subgraph counting under LDP for edges. 
The same applies to 
another work \cite{Sabater_arXiv21} 
that improves 
the utility of an averaging query
by correlating the noise of users according to a graph.

\smallskip
\noindent{\textbf{LDP.}}~~RR~\cite{Kairouz_ICML16,Warner_JASA65} 
and 
RAPPOR~\cite{Erlingsson_CCS14} 
have been widely used 
for tabular data 
in 
LDP. 
Our work uses RR in part of our algorithm but
builds off of it significantly. One noteworthy result in this area is HR (Hadamard Response) \cite{Acharya_AISTATS19}, which is state-of-the-art for tabular data
and requires low communication. However, this result is not applied to graph
data and does not address the communication issues considered in this paper.
Specifically, applying HR to each bit in a neighbor list will result in 
$O(n^2)$ ($n$: \#users) download cost 
in the same way as 
the previous work \cite{Imola_USENIX21} that uses RR. 
Applying HR to an entire neighbor list
(which has $2^n$ possible values) will similarly result in 
$O(n \log 2^n) = O(n^2)$ download cost. 

Previous work on distribution estimation~\cite{Kairouz_ICML16,Murakami_USENIX19,Wang_USENIX17} or 
heavy hitters \cite{Bassily_NIPS17} 
addresses a different problem than ours, as they assume that every user has
i.i.d. 
(independent and identically distributed) 
samples. 
In our setting, a user's neighbor list is non-i.i.d. (as one edge is shared by two users), 
which does not
fit into their statistical framework.

\section{Preliminaries}
\label{sec:preliminaries}

\subsection{Notations}
\label{sub:notations}
We begin with basic notations. 
Let $\nats$, $\reals$, $\nnints$, and $\nnreals$ be the sets of natural numbers, real numbers, non-negative integers, and non-negative real numbers, respectively. 
For $z\in\nats$, let $[z]$ a set of natural numbers from $1$ to $z$; i.e., $[z] = \{1, 2, \ldots, z\}$. 

Let $G=(V,E)$ be an undirected graph, where $V$ is a set of nodes and $E \subseteq V \times V$ is a set of edges. 
Let $n\in\nats$ be the number of nodes in $V$. 
Let $v_i \in V$ be the $i$-th node; i.e., $V=\{v_1,\ldots,v_n\}$. 
We consider a social graph where each node in $V$ represents a user and an edge $(v_i,v_j) \in E$ represents that $v_i$ is a friend with $v_j$. 
Let $d_{max} \in \nats$ be the maximum degree of $G$. 
Let $\calG$ be a set of graphs with $n$ nodes. 
Let $f_\triangle: \calG \rightarrow \nnints$ be a triangle 
count query 
that takes $G \in \calG$ as input and outputs 
a triangle count $f_\triangle(G)$ (i.e., number of triangles) in $G$.

Let $\bmA=(a_{i,j}) \in \{0,1\}^{n \times n}$ be a symmetric adjacency matrix corresponding to $G$; i.e., $a_{i,j} = 1$ if and only if $(v_i,v_j) \in E$. 
We consider a local privacy model~\cite{Qin_CCS17,Imola_USENIX21}, where each user obfuscates her \textit{neighbor list} $\bma_i = (a_{i,1}, \ldots, a_{i,n})\in\{0,1\}^n$ (i.e., the $i$-th row of $\bmA$) using 
a \textit{local randomizer} 
$\calR_i$ with domain $\{0,1\}^n$ and sends obfuscated data $\calR_i(\bma_i)$ to a server. 
We also assume a two-rounds algorithm in which user $v_i$ downloads a message $M_i$ from the server at the second round. 

We also show the basic notations in Table~\ref{tab:notations} of Appendix~\ref{sec:notations_subgraphs}.

\subsection{Local Differential Privacy on Graphs}
\label{sub:LDP}

\noindent{\textbf{LDP on Graphs.}}~~When 
we apply LDP (Local DP) to graphs, 
we follow the direction of \textit{edge DP}~\cite{Nissim_STOC07,Raskhodnikova_Encyclopedia16} that has been developed for the central DP model. 
In 
edge DP, 
the existence of an edge
between any two users is protected; i.e., two computations, one using a graph with the
edge and one using the graph without the edge, 
are indistinguishable. 
There is also another privacy notion called \textit{node DP}~\cite{Hay_ICDM09,Zhang_USENIX20}, which hides the existence of one user along with 
all her edges. 
However, in the local model, many applications send a user ID to a server; e.g., each user sends the number of her friends along with her user ID. 
For such applications, we cannot use node DP but can use edge DP to hide her edges, i.e., friends. 
Thus, we focus on edge DP in the local model in the same way as~\cite{Imola_USENIX21,Qin_CCS17,Sun_CCS19,Ye_ICDE20,Ye_TKDE21}. 

Specifically, 
assume that user $v_i$ uses her local randomizer $\calR_i$. 
We assume that the server and other users can be 
honest-but-curious adversaries and that they can obtain all edges except for user $v_i$'s edges 
as prior knowledge. 
Then we 
use the following definition for $\calR_i$:

\begin{definition} [$\epsilon$-edge LDP~\cite{Qin_CCS17}] \label{def:edge_LDP} 
Let $\epsilon \in \nnreals$. 
  For 
  $i \in [n]$, 
  let $\calR_i$ be a 
  local randomizer 
  of user $v_i$ that 
  takes $\bma_i$ as input. We say $\calR_i$ provides
  \emph{$\epsilon$-edge LDP} 
  if for any two neighbor lists 
  $\bma_i, \bma'_i \in \{0,1\}^n$ 
  that differ in one bit and any 
  $s \in \mathrm{Range}(\calR_i)$, 
\begin{align}
\Pr[\calR_i(\bma_i) = s] \leq e^\epsilon \Pr[\calR_i(\bma'_i) = s].
\label{eq:edge_LDP}
\end{align}
\end{definition}
For example, a local randomizer $\calR_i$ that applies Warner's RR (Randomized Response) \cite{Warner_JASA65}, which flips 0/1 with probability $\frac{1}{e^\epsilon + 1}$, to each bit of $\bma_i$ 
provides $\epsilon$-edge LDP. 

The parameter $\epsilon$ is called the privacy budget. 
When 
$\epsilon$ is small (e.g., $\epsilon \leq 1$~\cite{DP_Li}), each bit is strongly protected by edge LDP. 
Edge LDP can also be used to hide \textit{multiple bits} -- 
by group privacy~\cite{DP}, two neighbor lists $\bma_i, \bma'_i \in \{0,1\}^n$ that differ in $k \in \nats$ bits are indistinguishable up to the factor $k\epsilon$. 

Edge LDP is useful for protecting a neighbor list $\bma_i$ of each user $v_i$. 
For example, 
a user in Facebook can change her setting so that anyone (except for the central server) cannot see her friend list $\bma_i$. 
Edge LDP hides $\bma_i$ even from the server. 

As with regular LDP, the guarantee of edge LDP does not break 
even if 
the server or other users act maliciously. 
However, 
adding or removing an edge
affects the neighbor list of two users. 
This means that each user needs to trust 
her friend 
to not reveal 
an edge between them. 
This also applies to Facebook -- even if $v_i$ keeps $\bma_i$ secret, her edge with $v_j$ can be disclosed if $v_j$ reveals $\bma_j$. 
To protect each edge during the whole process, 
we use 
another 
privacy notion 
called relationship DP~\cite{Imola_USENIX21}:

\begin{definition} [$\epsilon$-relationship DP~\cite{Imola_USENIX21}] 
\label{def:entire_edge_LDP} 
  Let $\epsilon \in \nnreals$. For 
  $i \in [n]$, 
  let $\calR_i$ be a 
  local randomizer of user $v_i$ that 
  takes $\bma_i$ as input. We say 
  $(\calR_1, \ldots, \calR_n)$ provides 
\emph{$\epsilon$-relationship DP} 
if for any two neighboring graphs $G, G' \in \calG$ that differ in one edge and 
  any $(s_1, \ldots, s_n) \in \mathrm{Range}(\calR_1) \times \ldots \times \mathrm{Range}(\calR_n)$, 
\begin{align}
  &\Pr[(\calR_1(\bma_1), \ldots, \calR_n(\bma_n)) = (s_1, \ldots, s_n)] \nonumber\\
  &\leq e^\epsilon \Pr[(\calR_1(\bma'_1), \ldots, \calR_n(\bma'_n)) = (s_1,
  \ldots, s_n)],
\label{eq:entire_edge_LDP}
\end{align}
  where $\bma_i$ (resp. $\bma_i'$) $\in \{0,1\}^n$ is the $i$-th row of the
  adjacency matrix of graph $G$ (resp. $G'$).
\end{definition}
If 
users $v_i$ and $v_j$ follow the protocol, 
\eqref{eq:entire_edge_LDP} holds for graphs $G,G'$ that differ 
in $(v_i, v_j)$. 
Thus, 
relationship DP applies to
all edges of a user 
whose 
neighbors are 
trustworthy.

While users need to trust other 
friends 
to maintain a relationship
DP guarantee, only one edge per user is at risk for each malicious 
friend 
that
does not follow the protocol. This is
because only one edge can exist between two users.  
Thus, although the trust assumption in relationship DP is stronger than that of LDP, it is much weaker than that of central DP in which all edges can be revealed by the server.

It is possible to use
a tuple of local randomizers with edge LDP to obtain a relationship DP guarantee:
\begin{proposition} [Edge LDP and relationship DP~\cite{Imola_USENIX21}] 
\label{prop:edge_LDP_entire_edge_LDP} 
  If 
  each 
  of local randomizers $\calR_1, \ldots, \calR_n$ 
  provides 
  $\epsilon$-edge LDP, then $(\calR_1, \ldots, \calR_n)$ provides 
  $2\epsilon$-relationship DP. 
  Additionally, if each $\calR_i$ uses only bits $a_{i,1}, \ldots, a_{i,i-1}$ for users with smaller IDs (i.e., only the lower triangular part of $\bmA$), then $(\calR_1, \ldots, \calR_n)$ provides 
  $\epsilon$-relationship DP. 
\end{proposition}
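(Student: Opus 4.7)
The plan is a direct composition-style argument that exploits the independence of the local randomizers $\calR_1,\dots,\calR_n$ and the fact that a single-edge modification touches exactly two rows of the adjacency matrix.

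First I would fix two neighboring graphs $G,G'\in\calG$ differing in the single edge $(v_i,v_j)$, and let $\bmA,\bmA'$ be their (symmetric) adjacency matrices. By symmetry, $\bmA$ and $\bmA'$ differ exactly in the two entries $(i,j)$ and $(j,i)$. Translating this to neighbor lists, $\bma_k=\bma_k'$ for every $k\notin\{i,j\}$, while $\bma_i$ and $\bma_i'$ differ only in the $j$-th bit and $\bma_j$ and $\bma_j'$ differ only in the $i$-th bit. Since the randomizers $\calR_1,\ldots,\calR_n$ act independently on their respective inputs, the joint output distribution factorizes as $\prod_{k=1}^{n}\Pr[\calR_k(\bma_k)=s_k]$, and analogously for $G'$. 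Taking the ratio of the two joint probabilities, every factor with $k\notin\{i,j\}$ cancels, leaving only the ratios for $k=i$ and $k=j$.

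For the first statement, I then apply the $\epsilon$-edge LDP guarantee of $\calR_i$ (since $\bma_i,\bma_i'$ differ in exactly one bit) and of $\calR_j$ (for the same reason). Each contributes a factor of at most $e^\epsilon$, so the product is at most $e^{2\epsilon}$, yielding $2\epsilon$-relationship DP as claimed.

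For the second statement, I would use the additional restriction that $\calR_k$ depends only on bits $a_{k,1},\ldots,a_{k,k-1}$. Without loss of generality assume $i<j$. Then the bit flipped in $\bma_i$ is the $j$-th coordinate $a_{i,j}$, which lies \emph{outside} the prefix $a_{i,1},\ldots,a_{i,i-1}$ that $\calR_i$ is allowed to read; consequently $\calR_i(\bma_i)$ and $\calR_i(\bma_i')$ are identically distributed and that factor also cancels. Only the $\calR_j$ factor remains, and $\bma_j,\bma_j'$ still differ in one bit (the $i$-th, with $i<j$, which \emph{is} in $\calR_j$'s readable prefix), so $\epsilon$-edge LDP gives a single factor of $e^\epsilon$, establishing $\epsilon$-relationship DP.

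I do not expect a serious obstacle: the only care needed is (i) being explicit that symmetry of $\bmA$ forces \emph{two} rows to change under a single-edge modification (this is precisely where the factor of $2$ in the first statement comes from), and (ii) checking the indexing in the lower-triangular case so that exactly one of the two changed bits falls inside a randomizer's readable prefix. No probabilistic inequalities beyond Definition~\ref{def:edge_LDP} applied coordinatewise are needed.
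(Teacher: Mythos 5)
The paper does not prove Proposition~\ref{prop:edge_LDP_entire_edge_LDP} itself; it cites \cite{Imola_USENIX21} for it, so there is no in-paper proof to compare against. Your argument is the standard and correct one: a single-edge change touches exactly two entries of the symmetric $\bmA$ (hence exactly two rows/neighbor lists), the joint output distribution factorizes over users, all factors with index $\notin\{i,j\}$ cancel in the likelihood ratio, and edge LDP bounds each of the remaining two factors by $e^{\epsilon}$; restricting to the lower-triangular prefix makes one of those two factors equal to $1$ (since the flipped bit $a_{i,j}$ with $j>i$ is outside $\calR_i$'s readable prefix), leaving a single $e^{\epsilon}$. Both conclusions follow.

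One point worth making explicit, since it is the only load-bearing assumption you invoke without proof: the factorization
\[
\Pr\bigl[(\calR_1(\bma_1),\ldots,\calR_n(\bma_n))=(s_1,\ldots,s_n)\bigr]=\prod_{k=1}^{n}\Pr[\calR_k(\bma_k)=s_k]
\]
requires the $n$ randomizers to use independent coins and take only their own row as input, which is exactly the non-interactive setting of Definitions~\ref{def:edge_LDP}--\ref{def:entire_edge_LDP}. (The paper itself flags that those definitions do not cover interaction among users within a round.) When the paper later applies the proposition to the two-round composed randomizers, the factorization must be replaced by a conditioning/chain-rule argument over rounds; that extension is not part of the proposition as stated and is not something your proof needs to address, but it is good to be aware that independence is doing real work here.
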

The doubling factor in $\epsilon$ comes from the fact that 
\eqref{eq:entire_edge_LDP} applies to an entire edge, whereas
\eqref{eq:edge_LDP} applies to just one neighbor list, 
and 
adding an entire
edge may cause changes to two neighbor lists. 
However, 
if 
each $\calR_i$ ignores
bits $a_{i,i}, \ldots, a_{i,n}$ for users with larger IDs, 
then this doubling
factor can be avoided. 
Our algorithms also use only the lower triangular part of $\bmA$ to avoid this doubling issue.

\smallskip
\noindent{\textbf{Interaction among Users and Multiple Rounds.}}~~While interaction in LDP has been studied before~\cite{Joseph_SODA20}, neither of Definitions~\ref{def:edge_LDP} and \ref{def:entire_edge_LDP} 
allows the interaction among users in a one-round protocol where 
user $v_i$ sends $\calR_i(\bma_i)$ to the server. 

However, the interaction 
among users 
is possible in a multi-round protocol. 
Specifically, 
at the first round, 
user $v_i$ applies a randomizer $\calR_i^1$ and 
sends $\calR_i^1(\bma_i)$ to the server. 
At the second round, the server 
calculates a message $M_i$ for $v_i$ by 
performing 
some post-processing on $\calR_i^1(\bma_i)$, possibly with the private outputs by other users. 
Let $\lambda_i$ be the post-processing algorithm on $\calR_i^1(\bma_i)$; 
i.e., $M_i = \lambda_i(\calR_i^1(\bma_i))$. 
The server sends $M_i$ to $v_i$. 
Then, $v_i$ uses a randomizer $\calR_i^2(M_i)$ that depends on $M_i$ and sends $\calR_i^2(M_i)(\bma_i)$ back to the server.
This entire computation 
provides 
DP by 
a (general) sequential composition \cite{DP_Li}: 

\begin{proposition} [Sequential composition of edge LDP] 
\label{prop:seq_comp_edge_LDP} 
  For 
  $i \in [n]$, let 
  $\calR_i^1$ be a local randomizer of user $v_i$ that takes $\bma_i$ as input. 
  Let $\lambda_i$ be a post-processing algorithm on $\calR_i^1(\bma_i)$, and $M_i = \lambda_i(\calR_i^1(\bma_i))$ be its output. 
  Let $\calR_i^2(M_i)$ be a local randomizer of $v_i$ that depends on $M_i$. 
  If $\calR_i^1$ provides $\epsilon_1$-edge LDP and for any 
  $M_i \in \mathrm{Range}(\lambda_i)$,
  $\calR_i^2(M_i)$ provides $\epsilon_2$-edge LDP, 
then the sequential composition 
$(\calR_i^1(\bma_i), \calR_i^2(M_i)(\bma_i))$
provides $(\epsilon_1 + \epsilon_2)$-edge LDP.
\end{proposition}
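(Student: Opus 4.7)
The plan is to unfold the definition of edge LDP applied to the joint output $(\calR_i^1(\bma_i), \calR_i^2(M_i)(\bma_i))$ and reduce the joint probability to a product of two conditional probabilities, each of which can be bounded by one of the two hypotheses.

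Fix any pair of neighbor lists $\bma_i, \bma'_i \in \{0,1\}^n$ differing in exactly one bit and any pair of outputs $(s_1, s_2) \in \mathrm{Range}(\calR_i^1) \times \mathrm{Range}(\calR_i^2(\cdot))$. The first step is to observe that, conditioned on $\calR_i^1(\bma_i) = s_1$, the message $M_i = \lambda_i(s_1)$ is a deterministic function of $s_1$, so it is the \emph{same} value regardless of whether the underlying neighbor list is $\bma_i$ or $\bma'_i$. Combined with the standard assumption that the internal randomness of $\calR_i^1$ and $\calR_i^2$ is independent, this lets me factor
\[
\Pr[\calR_i^1(\bma_i) = s_1, \, \calR_i^2(M_i)(\bma_i) = s_2]
= \Pr[\calR_i^1(\bma_i) = s_1] \cdot \Pr[\calR_i^2(\lambda_i(s_1))(\bma_i) = s_2],
\]
and analogously for $\bma'_i$ with the same $\lambda_i(s_1)$ in the second factor.

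Next I apply the two hypotheses. The $\epsilon_1$-edge LDP guarantee of $\calR_i^1$ bounds the first factor by $e^{\epsilon_1} \Pr[\calR_i^1(\bma'_i) = s_1]$. For the second factor, because $\lambda_i(s_1) \in \mathrm{Range}(\lambda_i)$, the hypothesis ensures that $\calR_i^2(\lambda_i(s_1))$ is itself $\epsilon_2$-edge LDP, so the second factor is bounded by $e^{\epsilon_2} \Pr[\calR_i^2(\lambda_i(s_1))(\bma'_i) = s_2]$. Multiplying the two bounds and refactoring the product back into a joint probability on $\bma'_i$ yields the desired $e^{\epsilon_1 + \epsilon_2}$ ratio, which is exactly the statement of $(\epsilon_1 + \epsilon_2)$-edge LDP for the composed mechanism.

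The main obstacle to watch out for is the fact that $\calR_i^2$ is not a fixed randomizer but depends on $M_i$, which itself depends on the very input $\bma_i$ that we are trying to protect; this is why naive sequential composition statements are not directly applicable. The argument works only because the hypothesis is stated uniformly over all $M_i \in \mathrm{Range}(\lambda_i)$, so once we pin $M_i = \lambda_i(s_1)$ by conditioning on $s_1$, the second-round randomizer becomes a fixed $\epsilon_2$-edge LDP mechanism whose guarantee applies verbatim. Handling this pinning carefully, together with the independence of randomness across rounds, is the only subtle point; after that the proof is a direct multiplication of the two LDP inequalities.
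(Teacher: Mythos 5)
Your proof is correct and takes essentially the same route as the paper's: both decompose the joint probability via the chain rule, use the fact that conditioning on the first-round output $s_1$ pins $M_i = \lambda_i(s_1)$ to a fixed value, apply the two edge-LDP guarantees to the two resulting factors, and recombine. Your explicit discussion of why the adaptivity of $\calR_i^2$ on $M_i$ is not a problem (because the hypothesis is uniform over $\mathrm{Range}(\lambda_i)$, and conditioning pins $M_i$) is exactly the subtle point the paper's proof also relies on, just stated more verbosely.
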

We provide a proof of Proposition~\ref{prop:seq_comp_edge_LDP} in \conference{the full version \cite{Imola_arXiv22}}\arxiv{Appendix~\ref{sec:proof_seq_comp_edge_LDP}}.

\smallskip
\noindent{\textbf{Global Sensitivity.}}~~We use the notion of global sensitivity~\cite{DP} to provide edge LDP: 
\begin{definition}
In edge LDP (Definition~\ref{def:edge_LDP}), the global sensitivity of a function $f: \{0,1\}^n \rightarrow \reals$ is given by:
\begin{align*}
    GS_f = \underset{\bma_i, \bma'_i \in \{0,1\}^n, \bma_i \sim \bma'_i}{\max} |f(\bma_i) - f(\bma'_i)|,
\end{align*}
where $\bma_i \sim \bma'_i$ represents that $\bma_i$ and $\bma'_i$ differ in one bit.
\end{definition}
For example, adding the Laplacian noise with mean $0$ scale $\frac{GS_f}{\epsilon}$ (denoted by $\Lap(\frac{GS_f}{\epsilon})$) to 
$f(\bma_i)$ 
provides $\epsilon$-edge LDP. 

\subsection{Utility and Communication-Efficiency}
\label{sub:utility_communication_efficiency}
\noindent{\textbf{Utility.}}
We consider a private estimate of $f_\triangle(G)$. 
Our private estimator $\hf_\triangle : \calG \rightarrow \reals$ is a post-processing of 
local randomizers $(\calR_1, \ldots, \calR_n)$ 
that 
satisfy $\epsilon$-edge LDP.
Following previous work, we use the $l_2$ loss (i.e., squared error)
\cite{Kairouz_ICML16,Wang_USENIX17,Murakami_USENIX19} and the relative error 
\cite{Bindschaedler_SP16,Chen_CCS12,Xiao_SIGMOD11} 
as utility metrics.

Specifically,
let $l_2^2$ be the expected $l_2$ loss function on a graph $G$, which maps the 
estimate $\hf_\triangle(G)$ and the true value $f_\triangle(G)$ to the expected $l_2$ loss; i.e., 
$l_2^2(f_\triangle(G), \hf_\triangle(G)) = \E[(\hf_\triangle(G) - f_\triangle(G))^2]$.
The 
expectation is taken over the randomness in the estimator $\hf$, which is
necessarily a randomized algorithm since it satisfies edge LDP.
In our theoretical analysis, we analyze the expected $l_2$ loss, as with~\cite{Kairouz_ICML16,Wang_USENIX17,Murakami_USENIX19}.

Note that the $l_2$ loss is large when $f_\triangle(G)$ is large. 
Therefore, in our experiments, we use the relative error given by $\frac{|\hf_\triangle(G) - f_\triangle(G)|}{\max\{f_\triangle(G), \eta\}}$, 
where $\eta \in \nnreals$ is a small value. 
Following convention 
\cite{Bindschaedler_SP16,Chen_CCS12,Xiao_SIGMOD11}, 
we set $\eta$ to $0.001n$. 
The estimate is very accurate when the relative error is much smaller than $1$.

\smallskip
\noindent{\textbf{Communication-Efficiency.}}
A prominent concern when performing local computations is that the computing power of
individual users is often limited. Of particular concern to our private
estimators, and a bottleneck of previous work in locally private triangle
counting~\cite{Imola_USENIX21}, is the communication overhead between users and
the server. This communication takes the form of users 
\emph{downloading} any necessary data required to compute their local randomizers and 
\emph{uploading} the output of their local randomizers. We distinguish the two
quantities because often downloading is cheaper than uploading.

Consider a $\tau$-round protocol, where $\tau \in \nats$. 
At round $j \in [\tau]$, user $v_i$ applies a local randomizer $\calR_i^j(M_i^j)$ to her neighbor list $\bma_i$, where
$M_i^j$ is a message sent from the server to user $v_i$ during round $j$.
We define the \emph{download cost} 
as 
the number of bits required to describe
$M_i^j$ and the \emph{upload cost} 
as 
the number of bits required to
describe $\calR_i^j(M_i^j)(\bma_i)$. 
Over all rounds and all users, we evaluate the \textit{maximum per-user download/upload cost}, which is given by:
\begin{align}
  \CostDL &= \textstyle{\max_{i=1}^n \sum_{j=1}^\tau \E[|M_i^j|] ~~~~ \text{(bits)}} \label{eq:cost_DL}\\
  \CostUL &= \textstyle{\max_{i=1}^n \sum_{j=1}^\tau \E[|\calR_i^j(M_i^j)(\bma_i)|]} ~~~~ \text{(bits)}. \label{eq:cost_UL}
\end{align}

The above expectations go over the probability distributions of computing the local
randomizers 
and any post-processing done by the server. 
We evaluate the maximum of the expected download/upload cost over users.

\section{Communication-Efficient Triangle Counting Algorithms}
\label{sec:algorithms}
The current state-of-the-art triangle counting algorithm~\cite{Imola_USENIX21} under edge LDP suffers from an extremely large per-user download cost; 
e.g., every user has to download a message of $400$ Gbits or more when $n=900000$. 
Therefore, it is impractical for a large graph. 
To address this issue, we propose three communication-efficient triangle algorithms under edge LDP.

We explain the overview and details of our proposed algorithms in Sections~\ref{sub:algorithms_overview} and \ref{sub:three_algorithms}, respectively.
Then we analyze the theoretical properties of our algorithms in Section~\ref{sub:algorithms_theoretical_analysis}.

\subsection{Overview}
\label{sub:algorithms_overview}

\noindent{\textbf{Motivation.}}~~The drawback of the triangle algorithm in \cite{Imola_USENIX21} is a prohibitively 
high 
download cost at the second round.
This comes from the fact that
in their algorithm, 
each user $v_i$ applies Warner's RR
(Randomized Response)~\cite{Warner_JASA65} to
bits for smaller user IDs in her neighbor list $\bma_i$ (i.e., lower triangular part of $\bmA$)
and then downloads the whole noisy graph.
Since Warner's RR outputs 1 (edge) with high probability (e.g., about $0.5$ when $\epsilon$ is close to $0$), the
number of edges in the noisy graph is extremely large---about half of the $\binom{n}{2}$ possible edges will be edges.

In this paper, we address this issue by introducing two strategies: \textit{sampling edges} and \textit{selecting edges each user downloads}.
First, each user $v_i$ samples each 1 (edge) after applying Warner's RR.
Edge sampling has been widely studied in a
non-private triangle counting problem \cite{Bera_PODS20,Eden_FOCS15,Tsourakakis_KDD09,Wu_TKDE16}.
In particular, Wu \textit{et al.}~\cite{Wu_TKDE16} compare various non-private triangle algorithms (e.g., edge sampling, node sampling, triangle sampling) and show that edge sampling provides almost the lowest estimation error.
They also formally prove that edge sampling outperforms node sampling.
Thus, sampling edges after Warner's RR is a natural choice for our private setting.

Second, we propose three strategies for selecting edges each user downloads.
The first strategy is to simply select all noisy edges; i.e., each user downloads the whole noisy graph in the same way as \cite{Imola_USENIX21}.
The second and third strategies select some edges (rather than all edges) in a more clever manner so that the estimation error is significantly reduced.
We provide a more detailed explanation in Section~\ref{sub:three_algorithms}.

\begin{figure}[t]
  \centering
  \includegraphics[width=0.99\linewidth]{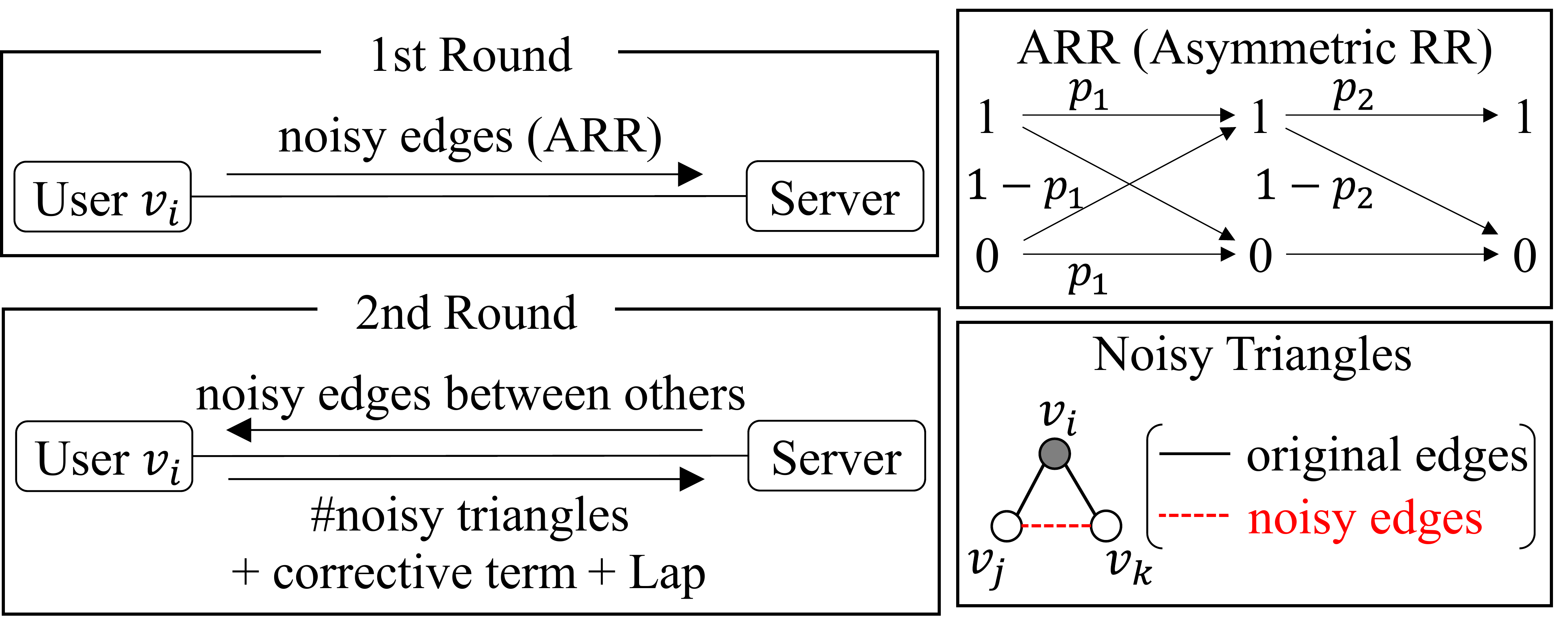}
  \vspace{-4mm}
  \caption{Overview of our communication-efficient triangle counting algorithms
  ($p_1 =\frac{e^{\epsilon}}{e^{\epsilon}+1}$,
  $p_2 \in [0,1]$).}
  \label{fig:alg_overview}
\end{figure}

\smallskip
\noindent{\textbf{Algorithm Overview.}}~~Figure~\ref{fig:alg_overview} shows the overview of our proposed algorithms.

At the first round, each user $v_i$ obfuscates
bits for smaller user IDs in her neighbor list $\bma_i$
by an LDP mechanism which we call the \textit{ARR (Asymmetric Randomized Response)} 
and sends the obfuscated bits to a server. 
The ARR is a combination of Warner's RR and edge sampling; i.e.,
we apply Warner's RR that outputs 1 or 0 as it is with probability
$p_1$ ($=\frac{e^{\epsilon}}{e^{\epsilon}+1}$)
and then sample each 1 with probability $p_2\in[0,1]$.
Unlike Warner's RR, the ARR is asymmetric in that the flip probability in the whole process is different
depending on the input value.
As with Warner's RR, the ARR provides edge LDP.
We can also significantly reduce the number of 1s (hence the communication cost) by setting
$p_2$ small.

At the second round, the server calculates a message $M_i$ for user $v_i$ consisting of some or all noisy edges between others. 
We propose three strategies for calculating $M_i$. 
User $v_i$ downloads $M_i$ from the server.
Then, since user $v_i$ knows her edges, $v_i$ can count \textit{noisy triangles} ($v_i$, $v_j$, $v_k$) such that $j<k<i$ and only one edge ($v_j$, $v_k$) is noisy, as shown in Figure~\ref{fig:alg_overview}. 
The condition $j<k<i$ is imposed to use only the lower triangular part of $\bmA$, i.e., 
to avoid the doubling issue in Section~\ref{sub:LDP}. 
User $v_i$ adds 
a corrective term
and the Laplacian noise
to the noisy triangle count 
and sends it to a server.
The corrective term is added to enable the server to obtain an unbiased estimate of $f_\triangle(G)$. 
The Laplacian noise provides
edge LDP.
Finally,
the server calculates an unbiased estimate of $f_\triangle(G)$
from the noisy data sent by users.
By composition (Proposition~\ref{prop:seq_comp_edge_LDP}),
our algorithms provide
edge LDP in total.

\smallskip
\noindent{\textbf{Remark.}}~~Note that it is also possible for the server to calculate an unbiased estimate of $f_\triangle(G)$ at the first round.
However, this results in a
prohibitively
large estimation error
because
all edges sent by users are noisy; i.e., three edges are noisy in any triangle.
In contrast, only one edge is noisy in each noisy triangle at the second round because each user $v_i$ knows two original edges
connected to $v_i$.
Consequently, we can obtain an unbiased estimate with a much smaller variance.
See Appendix~\ref{sec:one-round} for a detailed comparison.

\subsection{Algorithms}
\label{sub:three_algorithms}

\smallskip
\noindent{\textbf{ARR.}}~~First, we formally define the ARR.
The ARR has two parameters: $\epsilon \in \nnreals$ and $\mu \in [0,\frac{e^{\epsilon}}{e^{\epsilon} + 1}]$.
The parameter $\epsilon$ is the privacy budget, and $\mu$ controls the communication cost.

Let
$ARR_{\epsilon,\mu}$ be the ARR with parameters $\epsilon$ and $\mu$. It takes $0/1$ as input and outputs $0/1$ with the following probability:
\begin{align}
    \Pr[ARR_{\epsilon,\mu}(1) = b] &= \begin{cases}\mu & (b=1) \\ 1-\mu & (b=0)\end{cases} \label{eq:ARR_1}\\
    \Pr[ARR_{\epsilon,\mu}(0) = b] &= \begin{cases}\mu\rho & (b=1) \\ 1-\mu \rho & (b=0), \end{cases} \label{eq:ARR_0}
\end{align}
where $\rho = e^{-\epsilon}$.
By Figure~\ref{fig:alg_overview},
we can view this randomizer as a combination of Warner's RR~\cite{Warner_JASA65}
and edge sampling, where $\mu=p_1 p_2$.
In fact, the ARR with $\mu = p_1 =\frac{e^{\epsilon}}{e^{\epsilon}+1}$ (i.e., $p_2=1$) is equivalent to Warner's RR.

Each user $v_i$ applies the ARR to bits for smaller user IDs in her neighbor list $\bma_i$; i.e., $\calR_i(\bma_i) = (ARR_{\epsilon,\mu}(a_{i,1}), \ldots, \allowbreak ARR_{\epsilon,\mu}(a_{i,i-1}))$.
Then $v_i$ sends $\calR_i(\bma_i)$ to the server.
Since applying Warner's RR to $\bma_i$ provides $\epsilon$-edge LDP (as described in Section~\ref{sub:LDP}) and the sampling is a post-processing process, applying the ARR to $\bma_i$ also provides $\epsilon$-edge LDP by the immunity to post-processing~\cite{DP}.

Let $E' \subseteq V \times V$ be a set of noisy edges sent by users.

\smallskip
\noindent{\textbf{Which Noisy Edges to Download?}}~~Now, the main question
tackled in this paper is: \textit{Which noisy edges should each user $v_i$
download at the second round?}
Note that
user $v_i$
is not allowed to
download only a set of noisy edges that form noisy triangles
(i.e., $\{(v_j,v_k) \in E' | (v_i,v_j) \in E, (v_i,v_k) \in E$\}),
because it tells the server
who are friends with $v_i$.
In other words, user $v_i$ cannot leak her original edges to the server when she
downloads noisy edges; the server must choose which part of $E'$ to include in
the message $M_i$ it sends her.

Thus, a natural solution would be to download \textit{all noisy edges between others}
(with smaller user IDs); i.e.,
$M_i =\{(v_j, v_k) \in E' | j<k<i\}$.
We denote our algorithm with this full download strategy by \AlgOne{}.
The (inefficient) two-rounds algorithm in~\cite{Imola_USENIX21} is a special case of \AlgOne{}
without sampling ($\mu = p_1$).
In other words, \AlgOne{} is a generalization of the two-rounds algorithm in~\cite{Imola_USENIX21} using the ARR.

In this paper,
we show that we can do much better
than \AlgOne{}.
Specifically, we prove in Section~\ref{sub:algorithms_theoretical_analysis} that \AlgOne{} results in a high estimation error when the number of 4-cycles (cycles of length 4) in $G$ is large.
Intuitively, this can be explained as follows.
Suppose that $v_i$, $v_j$, $v_{i'}$, and $v_k$
($j<k<i$, $j<k<i'$)
form a 4-cycle. 
There is no triangle in this graph.
However, if there is a noisy edge between $v_j$ and $v_k$, then two (incorrect) noisy triangles appear: ($v_i$, $v_j$, $v_k$) counted by $v_i$ and ($v_{i'}$, $v_j$, $v_k$) counted by $v_{i'}$.
More generally, let $E_{ijk}$ (resp.~$E_{i'jk}$) $\in \{0,1\}$ be a random variable that takes $1$ if ($v_i$, $v_j$, $v_k$) (resp.~($v_{i'}$, $v_j$, $v_k$)) forms a noisy triangle and $0$ otherwise.
Then, the covariance $\cov(E_{ijk},E_{i'jk})$ between $E_{ijk}$ and $E_{i'jk}$ is large because the presence/absence of a single noisy edge ($v_j$, $v_k$) affects the two noisy triangles.

To address this issue, we introduce a
trick 
that makes the two noisy triangles \textit{less correlated with each other}.
We call this the \textit{4-cycle trick}.
Specifically,
we propose two algorithms in which
the server uses noisy edges connected to $v_i$ when it calculates a message $M_i$ for $v_i$.
In the first algorithm,
the server selects
noisy edges $(v_j, v_k)$ such that
one noisy edge is connected from $v_k$ to $v_i$;
i.e.,
$M_i = \{(v_j, v_k) \in E' | (v_i, v_k) \in E', j<k<i\}$.
We call this algorithm \AlgTwo{}, as one noisy edge is connected to $v_i$.
In the second algorithm,
the server selects
noisy edges $(v_j, v_k)$ such that
two noisy edges are connected from these nodes to $v_i$; i.e.,
$M_i = \{(v_j, v_k) \in E' | (v_i, v_j) \in E', (v_i, v_k) \in E', j<k<i\}$.
We call this algorithm \AlgThree{}, as two noisy edges are connected to $v_i$.
Note that user $v_i$ does not leak her original edges to the server at the time of download in these algorithms, because
the server uses only noisy edges $E'$ sent by users to calculate $M_i$.

Figure~\ref{fig:noisy_edge_DL} shows our three algorithms.
The download cost $\CostDL{}$ in (\ref{eq:cost_DL}) is
$O(\mu n^2 \log n)$, $O(\mu^2 n^2 \log n)$, and $O(\mu^3 n^2 \log n)$,
respectively, when we regard $\epsilon$ as a constant.
In our experiments, we set
the parameter $\mu$ in the ARR so that
$\mu$ in \AlgOne{} is equal to $\mu^2$ in \AlgTwo{} and also equal to $\mu^3$ in \AlgThree{};
e.g., $\mu=10^{-6}$, $10^{-3}$, and $10^{-2}$ in \AlgOne{}, \AlgTwo{}, and \AlgThree{}, respectively.
Then
the download cost
is the same between the three algorithms.

\begin{figure}[t]
  \centering
  \includegraphics[width=0.99\linewidth]{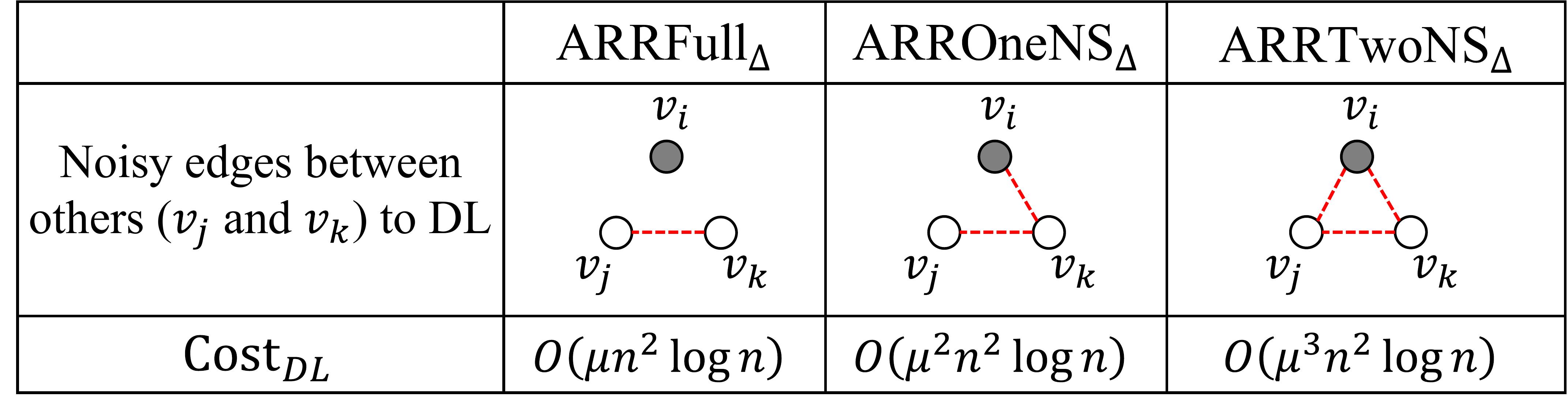}
  \vspace{-4mm}
  \caption{Noisy edges to download
  in our three algorithms.}
  \label{fig:noisy_edge_DL}
\vspace{4mm}
  \centering
  \includegraphics[width=0.95\linewidth]{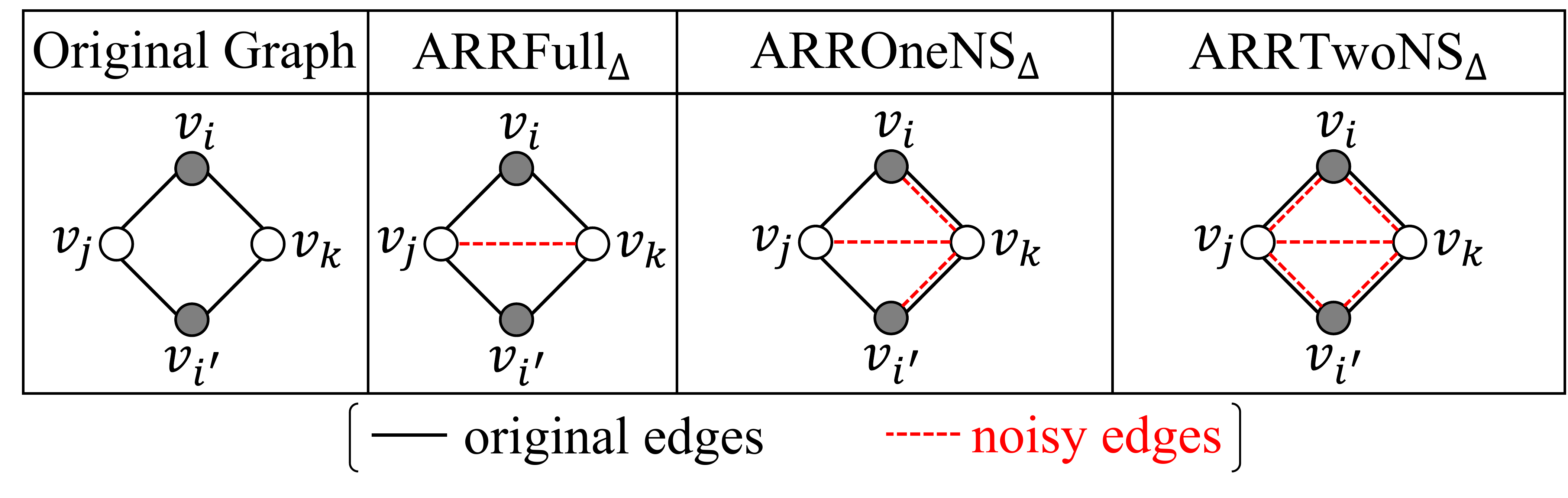}
  \vspace{-4mm}
  \caption{4-cycle trick.
  \AlgOne{} counts two (incorrect) noisy triangles when one noisy edge appears.
  \AlgTwo{} and \AlgThree{} avoid this by increasing independent noise.}
  \label{fig:four-cycle}
\end{figure}

Figure~\ref{fig:four-cycle} shows our $4$-cycle trick.
\AlgOne{} counts two (incorrect) noisy triangles when a noisy edge ($v_j$, $v_k$) appears.
In contrast,
\AlgTwo{} (resp.~\AlgThree{}) counts both the two noisy triangles
only when three (resp.~five) independent noisy edges appear,
as shown in Figure~\ref{fig:four-cycle}.
Thus, this bad event happens with a much smaller probability.
For example,
\AlgOne{} ($\mu=10^{-6}$),
\AlgTwo{} ($\mu=10^{-3}$), and
\AlgThree{} ($\mu=10^{-2}$)
count both the two noisy triangles with probability $10^{-6}$, $10^{-9}$, and $10^{-10}$, respectively.
The covariance $\cov(E_{ijk},E_{i'jk})$ of \AlgTwo{} and \AlgThree{} is also much smaller than that of \AlgOne{}.

In our experiments, we show that \AlgTwo{} and \AlgThree{} significantly outperforms \AlgOne{}
for a large-scale graph or dense graph, in both of which
the number of 4-cycles in $G$ 
is 
large.

\smallskip
\noindent{\textbf{\AlgTwo{} vs. \AlgThree{}.}}~~One
might expect that \AlgThree{} outperforms \AlgTwo{} because \AlgThree{} addresses the 4-cycle issue more aggressively; i.e.,
the number of independent noisy edges in a 4-cycle is larger in \AlgThree{},
as shown in Figure \ref{fig:four-cycle}.
However,
\AlgTwo{} can reduce the global sensitivity of the Laplacian noise
at the second round
more effectively than \AlgThree{}, as explained in Section~\ref{sec:double_clip}.
Consequently, \AlgTwo{}, which is the most tricky algorithm, achieves the smallest estimation error in our experiments.
See Sections~\ref{sec:double_clip} and \ref{sec:experiments} for details of the global sensitivity and experiments, respectively.

\smallskip
\noindent{\textbf{Three Algorithms.}}~~Below we explain the details of our three algorithms.
For ease of explanation, we assume that the maximum degree $d_{max}$ is public in Section~\ref{sub:three_algorithms}\footnote{For example, $d_{max}$ is public in Facebook: $d_{max} = 5000$~\cite{Facebook_Limit}.
If the server does not have prior knowledge about $d_{max}$, she can privately estimate $d_{max}$ and use graph projection to guarantee that each user's degree never exceeds the private estimate of $d_{max}$~\cite{Imola_USENIX21}.
In any case, the assumption in Section~\ref{sub:three_algorithms} does not undermine our algorithms, because our entire algorithms with double clipping in Section~\ref{sec:double_clip} does \textit{not} assume that $d_{max}$ is public.}.
Note, however, that
our double clipping (which is proposed to significantly reduce the global sensitivity) in Section~\ref{sec:double_clip} does \textit{not} assume that $d_{max}$ is public.
Consequently, our entire algorithms 
do \textit{not} require the assumption that $d_{max}$ is public.

\setlength{\algomargin}{5mm}
\begin{algorithm}[t]
  \SetAlgoLined
  \KwData{Graph $G \in \calG$ represented as neighbor lists $\bma_1, \ldots, \bma_n
    \in \{0,1\}^n$, privacy budgets
  $\epsilon_1,\epsilon_2 \in \nnreals$, $d_{max} \in \nnints$,
  $\mu \in [0,\frac{e^{\epsilon_1}}{e^{\epsilon_1} + 1}]$.
  }
  \KwResult{Private estimate $\hf_\triangle(G)$ of $f_\triangle(G)$.}
  [s] $\rho \leftarrow e^{-\epsilon_1}$\;
  [$v_i$, s] $\mu^* \leftarrow \mu$, $\mu^2$, and $\mu^3$ in F, O, and T, respectively\;
  \tcc{First round.}
  \For{$i=1$ \KwTo $n$}{
    [$v_i$] $\bmr_i \leftarrow (ARR_{\epsilon_1,\mu}(a_{i,1}), \ldots,
    ARR_{\epsilon_1,\mu}(a_{i,i-1}))$\;
    [$v_i$] Upload $\bmr_i = (r_{i,1}, \ldots, r_{i,i-1})$ to the server\;
  }
  [s] $E' = \{(v_j, v_k) :r_{k,j} = 1, j < k\}$\;
  \tcc{Second round.}
  \For{$i=1$ \KwTo $n$}{
    [s] Compute
    $M_i$ by (\ref{eq:M_i_I}), (\ref{eq:M_i_II}), and (\ref{eq:M_i_III}) in F, O, and T, respectively\;
    [$v_i$] Download $M_i$ from the server\;
    [$v_i$] $t_i \leftarrow |\{(v_i,v_j,v_k) :
    a_{i,j} = a_{i,k} = 1, (v_j,v_k) \in M_i, j<k<i \}|$\;
    [$v_i$] $s_i \leftarrow |\{(v_i,v_j,v_k) :
    a_{i,j} = a_{i,k} = 1, j<k<i\}|$\;
    [$v_i$] $w_i \leftarrow t_i - \mu^* \rho s_i$\;
    [$v_i$] $\hw_i \leftarrow w_i + \Lap(\frac{d_{max}}{\epsilon_2})$\;
    [$v_i$] Upload $\hw_i$ to the server\;
  }
  [s] $\hf_\triangle(G) \leftarrow \frac{1}{\mu^*(1-\rho)}\sum_{i=1}^n \hw_i$\;
  \KwRet{$\hf_\triangle(G)$}
  \caption{Our three algorithms.
  ``F'', ``O'', ``T'' are shorthands for
  \AlgOne{}, \AlgTwo{}, and \AlgThree{}, respectively.
  [$v_i$] and [s] represent that the process is run by $v_i$ and the server, respectively.
  }\label{alg:unify}
\end{algorithm}

Recall that the server calculates a message $M_i$ for $v_i$ as:
\begin{align}
\hspace{-1mm} M_i \hspace{-0.5mm} &= \hspace{-0.5mm} \{(v_j, v_k) \in E' | j<k<i\} \label{eq:M_i_I}\\
\hspace{-1mm} M_i \hspace{-0.5mm} &= \hspace{-0.5mm} \{(v_j, v_k) \in E' | (v_i, v_k) \in E', j<k<i\} \label{eq:M_i_II}\\
\hspace{-1mm} M_i \hspace{-0.5mm} &= \hspace{-0.5mm}  \{(v_j, v_k) \in E' | (v_i, v_j) \in E', (v_i, v_k) \in E', j<k<i\} \label{eq:M_i_III}
\end{align}
in \AlgOne{}, \AlgTwo{}, \AlgThree{}, respectively.

Algorithm~\ref{alg:unify} shows our three algorithms.
These algorithms are processed differently in lines 2 and 9; ``F'', ``O'', ``T'' are shorthands for \AlgOne{}, \AlgTwo{}, and \AlgThree{}, respectively.
The privacy budgets for the first and second
rounds are $\epsilon_1, \epsilon_2 \in \nnreals$, respectively.

The first round appears in lines 3-7 of Algorithm~\ref{alg:unify}.
In this round, each user applies
$ARR_{\epsilon_1, \mu}$
defined by (\ref{eq:ARR_1}) and (\ref{eq:ARR_0}) to bits $a_{i,1}, \ldots, a_{i,i-1}$ for smaller user IDs in her neighbor list $\bma_i$, i.e., lower triangular part of $\bmA$.
Let $\bmr_i = (r_{i,1}, \ldots, r_{i,i-1}) \in \{0,1\}^{i-1}$ be the obfuscated bits of $v_i$.
User $v_i$ uploads $\bmr_i$ to the server.
Then the server combines the noisy edges together, forming $E' = \{(v_j, v_k) : r_{k,j} = 1, j < k\}$.

The second round appears in lines 8-17 of Algorithm~\ref{alg:unify}.
In this round, the server computes a message $M_i$
by (\ref{eq:M_i_I}), (\ref{eq:M_i_II}), or (\ref{eq:M_i_III}),
and user $v_i$ downloads it.
Then user $v_i$ calculates the number $t_i \in \nnints$ of noisy triangles ($v_i$, $v_j$, $v_k$) such that
only one edge ($v_j$, $v_k$) is noisy, as shown in Figure~\ref{fig:alg_overview}.
User $v_i$ also calculate a corrective term $s_i \in \nnints$.
The corrective term $s_i$ is the number of
possible triangles involving $v_i$ 
and is computed to obtain an unbiased estimate of $f_\triangle(G)$.
User $v_i$ calculates $w_i = t_i - \mu^* \rho s_i$, where $\rho = e^{-\epsilon_1}$ and
$\mu^* = \mu$, $\mu^2$, and $\mu^3$
in ``F'', ``O'', and ``T'', respectively.
Then $v_i$ adds the Laplacian noise $\Lap(\frac{d_{max}}{\epsilon_2})$ to $w_i$ to provide $\epsilon_2$-edge LDP and sends the noisy value $\hw_i$ ($= w_i + \Lap(\frac{d_{max}}{\epsilon_2})$) to the server. 
Note that adding one edge increases both $t_i$ and $s_i$ 
by at most $d_{max}$. 
Thus, the global sensitivity of $w_i$ is at most $d_{max}$. 
Finally, the server calculates an estimate of $f_\triangle(G)$ as: $\hf_\triangle(G) = \frac{1}{\mu^*(1-\rho)}\sum_{i=1}^n \hw_i$.
As we prove later, $\hf_\triangle(G)$ is an unbiased estimate of $f_\triangle(G)$.

\subsection{Theoretical Analysis}
\label{sub:algorithms_theoretical_analysis}
We now introduce the theoretical guarantees on
the privacy, communication, and
utility of 
our algorithms. 

\smallskip
\noindent{\textbf{Privacy.}}~~We first show the privacy guarantees:
\begin{theorem}\label{thm:privacy_algorithms}
  For $i \in [n]$,
  let
  $\calR_i^1, \calR_i^2(M_i)$
  be the randomizers used by user $v_i$ in
  rounds $1$ and $2$ of Algorithm~\ref{alg:unify}. 
  Let
  $\calR_i(\bma_i) = (\calR_i^1(\bma_i), \calR_i^2(M_i)(\bma_i))$
  be the composition of the two randomizers. Then,
  $\calR_i$
  satisfies
  $(\epsilon_1+\epsilon_2)$-edge LDP and
  $(\calR_1,
  \ldots, \calR_n)$ satisfies  $(\epsilon_1+\epsilon_2)$-relationship DP.
\end{theorem}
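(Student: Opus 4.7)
The plan is to decompose the analysis by round, using Proposition~\ref{prop:seq_comp_edge_LDP} to compose. Writing $\calR_i^1$ for the first-round randomizer (applying ARR to the lower-triangular bits $a_{i,1},\ldots,a_{i,i-1}$) and $\calR_i^2(M_i)$ for the second-round randomizer (computing $\hw_i = w_i + \Lap(d_{max}/\epsilon_2)$), I would first show that $\calR_i^1$ is $\epsilon_1$-edge LDP and that $\calR_i^2(M_i)$ is $\epsilon_2$-edge LDP for every message $M_i \in \mathrm{Range}(\lambda_i)$, then invoke Proposition~\ref{prop:seq_comp_edge_LDP} to conclude the $(\epsilon_1+\epsilon_2)$-edge LDP guarantee for $\calR_i$, and finally invoke Proposition~\ref{prop:edge_LDP_entire_edge_LDP} to obtain relationship DP without the factor of two.

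For the first round, Warner's RR applied independently to each bit of the neighbor list provides $\epsilon_1$-edge LDP: two neighbor lists differing in a single bit yield output distributions whose ratio is bounded by $e^{\epsilon_1}$ on that one coordinate and is exactly $1$ elsewhere. Since $ARR_{\epsilon_1,\mu}$ simply follows Warner's RR by independently subsampling the ``1''s, it is a post-processing of Warner's RR and therefore also $\epsilon_1$-edge LDP by the immunity to post-processing; hence $\calR_i^1$ is $\epsilon_1$-edge LDP.

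The crux of the argument is the second round, where I need to bound the global sensitivity of $w_i = t_i - \mu^* \rho\, s_i$ viewed as a function of $\bma_i$ with $M_i$ fixed. Fix two neighbor lists $\bma_i, \bma_i'$ differing only in bit $j^*$, say $a_{i,j^*}=0$ and $a'_{i,j^*}=1$, and set $\Delta s = s_i(\bma_i')-s_i(\bma_i)$ and $\Delta t = t_i(\bma_i')-t_i(\bma_i)$. Every triangle added to $s_i$ corresponds to a neighbor $v_k$ of $v_i$ (other than $v_{j^*}$) with $\min(k,j^*) < \max(k,j^*) < i$, so $\Delta s \le d_{max}-1$ under the degree bound. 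Each added noisy triangle counted by $\Delta t$ additionally requires $(v_{\min(k,j^*)},v_{\max(k,j^*)}) \in M_i$, giving $0 \le \Delta t \le \Delta s$. Since $\mu^* \rho \le \frac{e^{\epsilon_1}}{e^{\epsilon_1}+1}\cdot e^{-\epsilon_1} < 1$, and since $|a-b|\le \max(a,b)$ for $a,b\ge 0$, we obtain
\[
|w_i(\bma_i')-w_i(\bma_i)| = |\Delta t - \mu^* \rho\, \Delta s| \le \max(\Delta t,\mu^* \rho\, \Delta s) \le \Delta s \le d_{max}.
\]
Thus the Laplacian noise $\Lap(d_{max}/\epsilon_2)$ supplies $\epsilon_2$-edge LDP for every fixed $M_i$. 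I expect this sensitivity bound to be the main technical step, since naively bounding $|w_i(\bma_i')-w_i(\bma_i)|$ by $\Delta t + \mu^* \rho\, \Delta s$ loses a factor of two; the key observation is that $\Delta t$ and $\Delta s$ enter $w_i$ with opposite signs and satisfy $\Delta t \le \Delta s$ while $\mu^* \rho \le 1$.

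Combining the two bounds via Proposition~\ref{prop:seq_comp_edge_LDP} yields $(\epsilon_1+\epsilon_2)$-edge LDP for $\calR_i = (\calR_i^1,\calR_i^2(M_i))$. For the relationship DP statement, note that both $\calR_i^1$ and $\calR_i^2(M_i)$ only ever read bits $a_{i,1},\ldots,a_{i,i-1}$ of $\bma_i$ (the lower-triangular part): the first round obfuscates exactly those bits, and the second round computes $t_i,s_i$ only over triples with $j<k<i$ using those same bits. Consequently the hypothesis of the second clause of Proposition~\ref{prop:edge_LDP_entire_edge_LDP} is satisfied, so $(\calR_1,\ldots,\calR_n)$ provides $(\epsilon_1+\epsilon_2)$-relationship DP without any doubling of the privacy parameter.
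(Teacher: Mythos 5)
Your proposal is correct and follows essentially the same route as the paper: bound the sensitivity of $w_i$ by $d_{max}$ for a fixed message $M_i$, conclude $\epsilon_2$-edge LDP from the scale of the Laplace noise, compose with the $\epsilon_1$-edge LDP of the ARR via Proposition~\ref{prop:seq_comp_edge_LDP}, and invoke the lower-triangular clause of Proposition~\ref{prop:edge_LDP_entire_edge_LDP} for relationship DP. The only cosmetic difference is that you also note $\Delta t \le \Delta s$, which is true but not needed — the paper's bound already follows from $\Delta t,\Delta s\in[0,d_{max}]$, $\mu^*\rho<1$, and the fact that $\Delta t$ and $\mu^*\rho\,\Delta s$ are both non-negative and appear with opposite signs.
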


Note that the doubling issue in Section~\ref{sub:LDP} does not occur, 
because we use only the lower triangular part of $\bmA$. 
By the immunity to post-processing, the estimate  $\hf_\triangle(G)$
also satisfies $(\epsilon_1+\epsilon_2)$-edge LDP and $(\epsilon_1+\epsilon_2)$-relationship DP.

\smallskip
\noindent{\textbf{Communication.}}~~Recall that we evaluate the algorithms based on their download cost~\eqref{eq:cost_DL} and upload cost~\eqref{eq:cost_UL}.

\textit{Download Cost:} The download cost is the number of bits required
to download $M_i$.
$M_i$ can be represented
as
a list of edges between others, and each edge
can be
identified with two indices (user IDs), i.e., $2 \log n$ bits.
There are $\frac{(n-1)(n-2)}{2} \approx \frac{n^2}{2}$ edges between others.
$ARR_{\epsilon_1,\mu}$ outputs 1 with probability at most $\mu$. 
In addition, each noisy triangle must have $1$, $2$, and $3$ noisy edges in \AlgOne{}, \AlgTwo{}, and \AlgThree{}, respectively, as shown in Figure~\ref{fig:noisy_edge_DL}.

Thus, 
the download cost in Algorithm~\ref{alg:unify} 
can be written as:
\begin{align}
   Cost_{DL} &\leq \mu^* n^2 \log n,  \label{eq:CostDL_F}
\end{align}
where $\mu^* = \mu$, $\mu^2$, and $\mu^3$ in \AlgOne{}, \AlgTwo{}, and \AlgThree{}, respectively. 
In (\ref{eq:CostDL_F}), 
we upper-bounded $\CostDL{}$ by using the fact that $ARR_{\epsilon_1,\mu}$ outputs 1 with probability at most $\mu$. 
However, when $d_{max} \ll n$, 
$ARR_{\epsilon_1,\mu}$ outputs 1 with probability $\mu e^{- \epsilon_1}$ in most cases. 
In that case, we can roughly approximate $\CostDL{}$ by replacing $\mu$ with $\mu e^{- \epsilon_1}$ in (\ref{eq:CostDL_F}). 

\textit{Upload Cost:}
The upload cost comes
from the number of bits required to upload $\calR_i^1(\bma_i)$ and
$\calR_i^2(M_i)(\bma_i)$.
Uploading $\calR_i^1(\bma_i)$ involves uploading
$\bmr_i$ (line 5), which is a list of up to $n$ noisy neighbors. By sending just
the indices (user IDs) of the $1$s in $\bmr_i$, each user sends $\|\bmr_i\|_1 \log n$ bits,
where $\|\bmr_i\|_1$ is the number of 1s in $\bmr_i$.
When we use $ARR_{\epsilon_1,\mu}$,
we have $\E[\|\bmr_i\|_1] \leq \mu n$.
Uploading
$\calR_i^2(M_i)$
involves uploading a single real number $\hw_i$ (line 15), which is negligibly small (e.g., 64 bits when we use a double-precision floating-point).

Thus,
the upload cost in Algorithm~\ref{alg:unify} can be written as:
\begin{align}
  Cost_{UL} \leq \mu n \log n.
\label{eq:CostUL_proposal}
\end{align}
Clearly, 
$\CostUL{}$ 
is much smaller than 
$\CostDL{}$ 
for large $n$.

\smallskip
\noindent{\textbf{Utility.}}~~Analyzing the expected $l_2$ loss $l_2^2(f_\triangle(G), \hf_\triangle(G))$ of the algorithms involves first proving that
the estimator $\hf_\triangle$ is unbiased
and then
analyzing
the
variance $\V[\hf_\triangle(G)]$ to obtain an upper-bound on
$l_2^2(f_\triangle(G), \hf_\triangle(G))$.
This is given in the following:

\begin{theorem}\label{thm:l2loss_algorithms}
  Let $G \in \calG$, $\epsilon_1, \epsilon_2 \in \nnreals$, and
  $\mu \in [0,\frac{e^{\epsilon_1}}{e^{\epsilon_1} + 1}]$.
  Let $\hf_\triangle^F(G), \hf_\triangle^O(G)$,
  and $\hf_\triangle^T(G)$ be
  the
  estimates output
  respectively by \AlgOne{}, \AlgTwo{}, and \AlgThree{} in Algorithm~\ref{alg:unify}.
  Then, $\E[\hf_\triangle^F(G)] = \E[\hf_\triangle^O(G)] = \E[\hf_\triangle^T(G)] = f_\triangle(G)$ (i.e., estimates are unbiased) and
   \begin{align*}
      l_2^2(f_\triangle(G), \hf_\triangle^F(G)) &\leq \textstyle{\frac{2C_4(G)+S_2(G)}{\mu(1-e^{\epsilon_1})^2} + \frac{2nd_{max}^2}{\mu^2(1-e^{\epsilon_1})^2\epsilon_2^2}} \\
      l_2^2(f_\triangle(G), \hf_\triangle^O(G)) &\leq \textstyle{\frac{\mu(2 C_4(G) + 6S_3(G)) + S_2(G)}{\mu^2(1-e^{\epsilon_1})^2} \hspace{-0.5mm}+\hspace{-0.5mm} \frac{2nd_{max}^2}{\mu^4(1-e^{\epsilon_1})^2\epsilon_2^2}}\\
      l_2^2(f_\triangle(G), \hf_\triangle^T(G)) &\leq \textstyle{\frac{\mu^2(2 C_4(G) + 6S_3(G)) + S_2(G)}{\mu^3(1-e^{\epsilon_1})^2} \hspace{-0.5mm}+\hspace{-0.5mm} \frac{2nd_{max}^2}{\mu^6(1-e^{\epsilon_1})^2\epsilon_2^2},}
   \end{align*}
   where $C_4(G)$ is the number of $4$-cycles in $G$
   and $S_k(G)$ is the number of $k$-stars in $G$. 
\end{theorem}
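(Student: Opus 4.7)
The plan is to first verify unbiasedness of $\hf_\triangle$, after which the expected $l_2$ loss equals $\V[\hf_\triangle(G)] = (\mu^*(1-\rho))^{-2}\,\V[\sum_i \hw_i]$, where $\rho := e^{-\epsilon_1}$ and $\mu^* \in \{\mu,\mu^2,\mu^3\}$ for \AlgOne{}, \AlgTwo{}, \AlgThree{} respectively. Writing $t_i = \sum_{j<k<i,\, a_{i,j}=a_{i,k}=1} E_{ijk}$ with the indicator $E_{ijk} := \mathbf{1}\{(v_j,v_k)\in M_i\}$, a short case analysis on the independent ARR bits that $E_{ijk}$ depends on---only $(v_j,v_k)$ in \AlgOne{}; additionally $(v_i,v_k)$ in \AlgTwo{}; additionally $(v_i,v_j)$ and $(v_i,v_k)$ in \AlgThree{}---combined with $\Pr[ARR(1)=1]=\mu$ and $\Pr[ARR(0)=1]=\mu\rho$ yields $\E[E_{ijk}] = \mu^*$ when $a_{j,k}=1$ and $\mu^*\rho$ otherwise. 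Letting $N_i$ denote the number of true triangles $(v_i,v_j,v_k)$ with $j<k<i$, summing over $(j,k)$ gives $\E[w_i] = \mu^*(1-\rho) N_i$; summing over $i$ and rescaling recovers $f_\triangle(G)$. Since the Laplace noises are i.i.d.\ across users and independent of round one, $\V[\sum_i \hw_i] = \V[\sum_i t_i] + 2n d_{max}^2/\epsilon_2^2$, which after dividing by $(\mu^*(1-\rho))^2$ already produces the Laplacian term of each displayed bound.

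The remaining task is to bound $\V[\sum_{(i,j,k)} E_{ijk}]$. For the diagonal, $\V[E_{ijk}] \leq \E[E_{ijk}] \leq \mu^*$ and the indexing triples form a subset of the 2-stars of $G$, giving a $\mu^* S_2(G)$ contribution and thus the $S_2(G)$ piece of each stated bound after rescaling. Off-diagonal covariances are nonzero only when the sets of independent ARR bits underlying two distinct indicators overlap, and I will show exactly two families arise. \emph{(i) 4-cycle overlap:} $\{j,k\}=\{j',k'\}$ with $i\neq i'$; the bits $a_{i,j},a_{i,k},a_{i',j},a_{i',k}$ must all equal $1$, realising the 4-cycle $v_j v_i v_k v_{i'}$ in $G$; the shared ARR bit sits on $(v_j,v_k)$, and the remaining independent factors give covariance at most $\mu,\mu^3,\mu^5$ in \AlgOne{}, \AlgTwo{}, \AlgThree{} respectively. \emph{(ii) 3-star overlap} (empty for \AlgOne{}): a shared ARR bit is one of the ``spokes'' $(v_i,v_k)$ or $(v_i,v_j)$ used by \AlgTwo{}/\AlgThree{}; the graph bits forced to $1$ realise a 3-star at the common spoke endpoint, with covariance at most $\mu^3$ or $\mu^5$. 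Counting ordered pairs of triples compatible with $j<k<i$ then supplies the constants $2 C_4(G)$ and $6 S_3(G)$ in the statement, and the rescaling by $(\mu^*(1-\rho))^{-2}$, combined with the covariance scales $\mu,\mu^3,\mu^5$, produces the $2 C_4(G)$ (for \AlgOne{}) and $\mu^{k}(2 C_4(G)+6 S_3(G))$ (for \AlgTwo{}/\AlgThree{}, with $k=1,2$) contributions in the displayed bounds.

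The hard part is completing case (ii). Beyond the easy \emph{same-centre} subcase $(i,j,k)$ versus $(i,j',k)$ sharing the spoke $(v_i,v_k)$, \AlgTwo{}/\AlgThree{} admit \emph{cross-centre} subcases in which the noisy edge $(v_j,v_k)$ of one triple coincides with a spoke of another---for instance, in \AlgTwo{} one gets $i'=k$, $k'=j$ with $j'<j$, forcing $a_{i,j}=a_{i,k}=a_{k,j}=a_{k,j'}=1$, a configuration that contains a 3-star at $v_k$. Every such subcase must be verified to (a) embed a 3-star whose count is absorbed into $S_3(G)$ and (b) yield covariance of the same order ($\mu^3$ for \AlgTwo{}, $\mu^5$ for \AlgThree{}), so that all 3-star-type contributions bundle into the single $6 S_3(G)$ coefficient. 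Once this bookkeeping is complete, combining the diagonal, covariance, and Laplacian contributions and dividing by $(\mu^*(1-\rho))^2$ delivers the three stated inequalities.
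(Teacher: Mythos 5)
Your proposal follows the same structure as the paper's: unbiasedness, bias--variance decomposition, then a covariance analysis on $\sum_i t_i$ organised by which independent ARR bits two indicators $E_{ijk}$, $E_{i'j'k'}$ share. The diagonal $\mu^* S_2(G)$ term, the 4-cycle covariance scales $\mu,\mu^3,\mu^5$, and the Laplacian term all match. The genuine gap is exactly the one you flag as ``the hard part'': you assert, without proof, that the cross-centre overlaps---where a shared ARR bit is the noisy base $(v_j,v_k)$ of one triple and a spoke $(v_{i'},v_{k'})$ of another---all ``bundle into the single $6S_3(G)$ coefficient.'' This needs verification because each such overlap realises a \emph{paw} (a triangle with a pendant edge; 4 nodes, 4 edges), which is neither a pure 3-star nor a 4-cycle. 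Notably, the paper's own proof also fails to account for this: it enumerates only 4-cycle, 3-path, 3-star, triangle, and 2-star and declares other subgraphs ``immediately ruled out,'' yet the paw admits valid choices with $i\ne i'$ that this enumeration misses, so your caution here is well placed rather than excessive.

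To close the gap you would show, per 3-star of $G$, that the combined number of same-centre ($i=i'$, shared spoke) and cross-centre (paw, $i\ne i'$) contributing pairs is at most $6$, each with $\Pr[F_{ijk}=F_{i'j'k'}=1]\le \mu^3$ (resp.\ $\mu^5$ for \AlgThree{}). The ordering constraints $j<k<i$ and $j'<k'<i'$ yield: if the 3-star's centre has the largest index among its four vertices, there are at most $6$ same-centre pairs and no cross-centre ones; if the centre has the second-largest index, there are no same-centre pairs and at most $2$ cross-centre ones; otherwise there are none. So the total is always at most $6$ per 3-star and the $6S_3(G)$ coefficient is justified---but that case analysis is precisely the missing bookkeeping, and until it is carried out the proof is incomplete.
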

For each of the three upper-bounds in Theorem~\ref{thm:l2loss_algorithms}, the first and second terms are the estimation errors caused by empirical estimation and the Laplacian noise, respectively.
We also note that
$C_4(G) = S_3(G) = O(n d_{max}^3)$
and $S_2(G) = O(n d_{max}^2)$.
Thus, for small
$\mu$,
the $l_2$ loss of empirical estimation can be expressed as $O(n d_{max}^3)$, $O(n d_{max}^2)$, and $O(n d_{max}^2)$ in \AlgOne{}, \AlgTwo{}, \AlgThree{}, respectively
(as the factors of $C_4(G)$ and $S_3(G)$
diminish for small $\mu$).

This highlights our $4$-cycle trick.
The large $l_2$ loss of \AlgOne{} is caused by the number $C_4(G) = O(n d_{max}^3)$ of $4$-cycles.
\AlgTwo{} and \AlgThree{} addresses this issue by increasing independent noise, as shown in Figure~\ref{fig:four-cycle}.

\section{Double Clipping}
\label{sec:double_clip}

In Section~\ref{sec:algorithms}, 
we showed that the estimation error caused by empirical estimation (i.e., the first term in Theorem~\ref{thm:l2loss_algorithms}) is significantly reduced by the $4$-cycle trick. 
However, 
the estimation error is 
still very large in our algorithms presented in Section~\ref{sec:algorithms}, as shown in our experiments. 
This is because 
the estimation error by the Laplacian noise (i.e., the second term in Theorem~\ref{thm:l2loss_algorithms}) 
is very large, especially for small $\epsilon_2$ or 
$\mu$. 
This error term is tight and unavoidable as long as we use $d_{max}$ as a global sensitivity, which suggests that we need a better global sensitivity analysis.
To significantly reduce the global sensitivity, we propose a novel \textit{double clipping} technique. 

We describe the overview and details of our double clipping in Sections~\ref{sub:clip_overview} and \ref{sub:algorithms}, respectively. 
Then we perform theoretical analysis in Section~\ref{sub:clip_theoretical_analysis}.

\subsection{Overview}
\label{sub:clip_overview}

\noindent{\textbf{Motivation.}}~~Figure~\ref{fig:reduce_noisy_triangles} shows noisy triangles involving edge $(v_i,v_j)$ counted by user $v_i$ in our three algorithms. 
Our algorithms in Section~\ref{sec:algorithms} 
use the fact that the number of such noisy triangles (hence the global sensitivity) is upper-bounded by the maximum degree $d_{max}$ because 
adding one edge increases the triangle count by at most $d_{max}$. 
Unfortunately, this upper-bound is too large, as shown in our experiments. 

In this paper, we 
significantly reduce this upper-bound by using the parameter $\mu$ in the ARR and user $v_i$'s degree $d_i \in \nnints$ for users with smaller IDs. 
For example, 
the number of noisy triangles involving $(v_i,v_j)$ in \AlgOne is 
expected to be around 
$\mu d_i$ 
because one noisy edge is included in each noisy triangle (as shown in Figure~\ref{fig:reduce_noisy_triangles}) and all noisy edges are independent. 
$\mu d_i$ is very small, especially when we set $\mu \ll 1$ to reduce the communication cost.  

However, we cannot directly use $\mu d_i$ as an upper-bound of the global sensitivity in \AlgOne for two reasons. 
First, $\mu d_i$ leaks the exact value of user $v_i$'s degree $d_i$ 
and 
violates edge LDP. 
Second, the number of noisy triangles involving $(v_i,v_j)$ exceeds $\mu d_i$ with high probability (about $0.5$). 
Thus, the noisy triangle count cannot be upper-bounded by $\mu d_i$. 

\begin{figure}[t]
  \centering
  \includegraphics[width=0.78\linewidth]{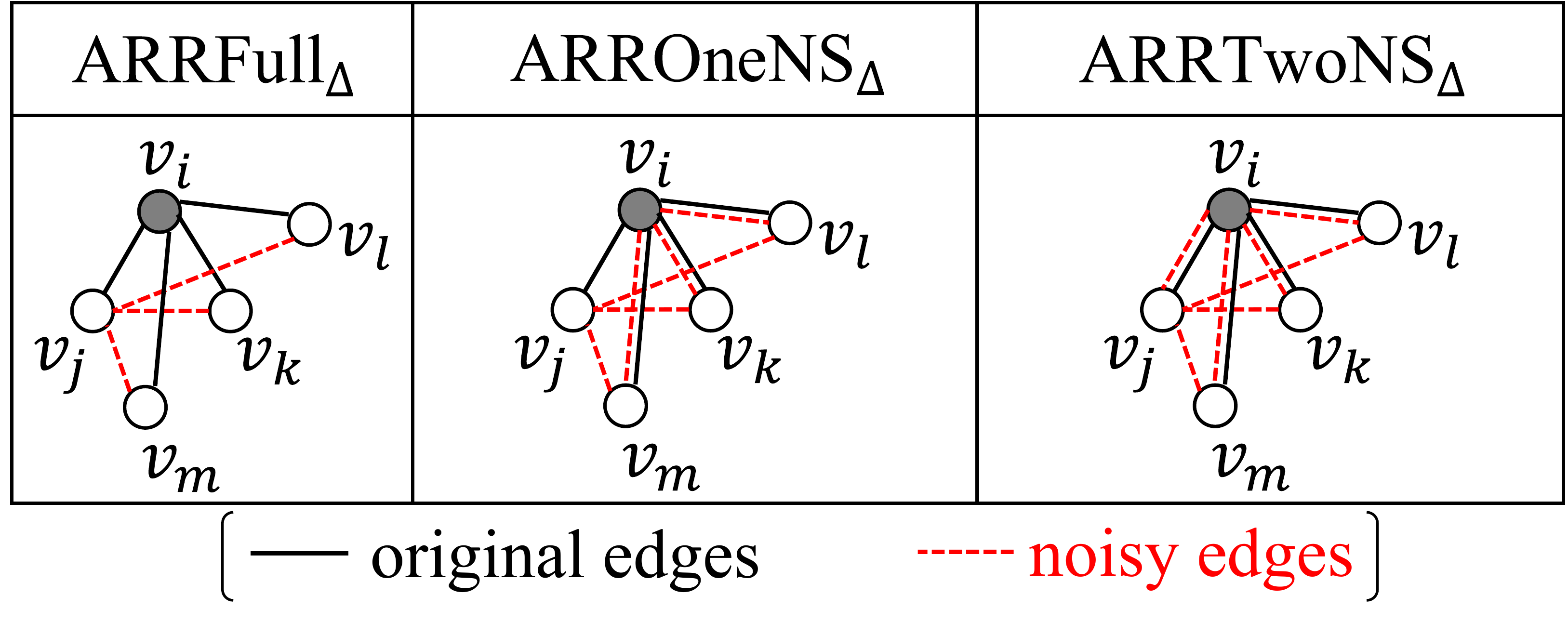}
  \vspace{-4mm}
  \caption{Noisy triangles involving edge $(v_i,v_j)$ counted by user $v_i$ ($j<k,l,m<i$).} 
  \label{fig:reduce_noisy_triangles}
\vspace{2mm}
  \centering
  \includegraphics[width=0.99\linewidth]{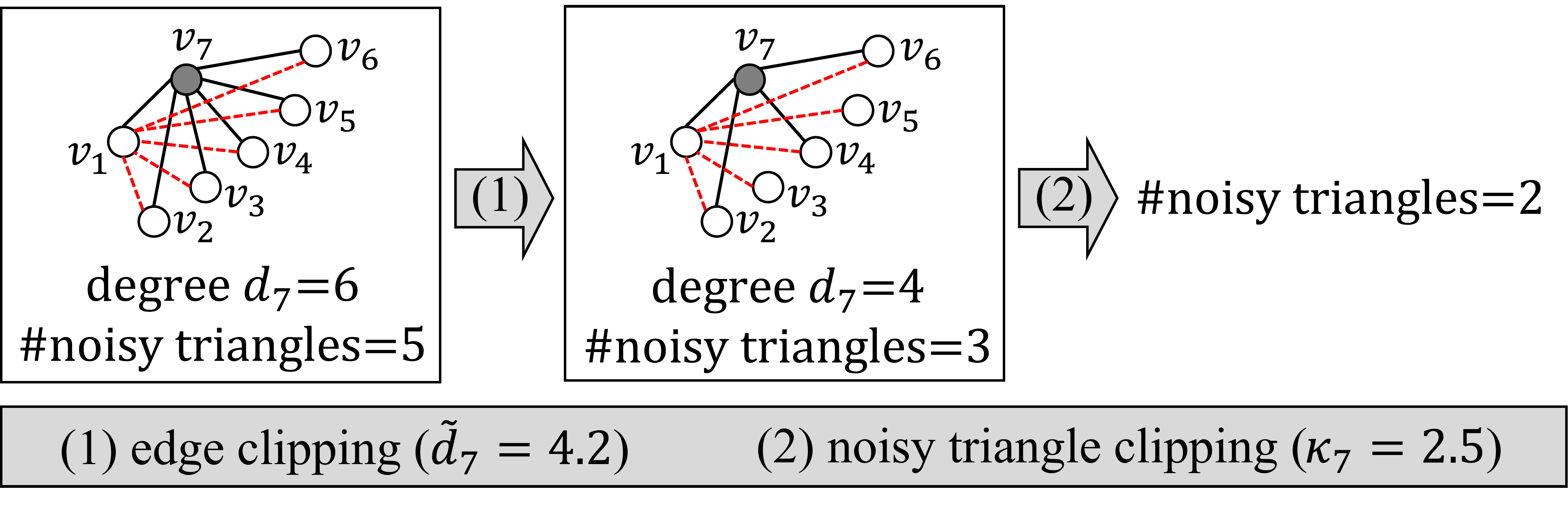}
  \vspace{-4mm}
  \caption{Overview of double clipping applied to edge ($v_1,v_7$).} 
  \label{fig:double-clip_overview}
\end{figure}

To address these two issues, 
we propose a double clipping technique, which is explained below. 

\smallskip
\noindent{\textbf{Algorithm Overview.}}~~Figure~\ref{fig:double-clip_overview} shows 
the overview of our double clipping, which consists of an 
\textit{edge clipping} 
and \textit{noisy triangle clipping}. 
The edge clipping 
addresses the first issue (i.e., leakage of $d_i$) 
as follows. 
It privately computes 
a noisy version of 
$d_i$ (denoted by $\td_i$) with edge LDP. 
Then it 
removes some neighbors from a neighbor list $\bma_i$ so that the degree of $v_i$ never exceeds 
the noisy degree $\td_i$. 
This removal process is also known as graph projection \cite{Day_SIGMOD16,Ding_TKDE21,Kasiviswanathan_TCC13,Raskhodnikova_arXiv15}. 
Edge clipping 
is 
used in \cite{Imola_USENIX21} to obtain a 
noisy version of 
$d_{max}$. 

The main novelty in our double clipping lies at the \textit{noisy triangle clipping} to address the second issue (i.e., excess of the noisy triangle count). 
This issue appears 
when 
we attempt to reduce the global sensitivity by using 
a very small sampling probability for each edge. 
Therefore, the noisy triangle clipping has not been studied in the existing works on private triangle counting 
\cite{Ding_TKDE21,Imola_USENIX21,Karwa_PVLDB11,Kasiviswanathan_TCC13,Sun_CCS19,Ye_ICDE20,Ye_TKDE21,Zhang_SIGMOD15}, 
because they do not apply a sampling technique. 

Our noisy triangle clipping reduces the noisy triangle count so that it never exceeds a 
user-dependent clipping threshold 
$\kappa_i \in \nnreals$. 
Then a crucial issue is how to set 
an appropriate 
threshold 
$\kappa_i$. 
We theoretically analyze the probability that the noisy triangle count exceeds $\kappa_i$ 
(referred to as the \textit{triangle excess probability}) 
as a function of 
the ARR parameter $\mu$ and the 
noisy degree $\td_i$. 
Then we set $\kappa_i$ so that 
the triangle excess probability 
is very small ($=10^{-6}$ in our experiments). 

We use the clipping threshold $\kappa_i$ as a global sensitivity. 
Note that $\kappa_i$ provides edge LDP because $\td_i$ provides edge LDP, 
i.e., immunity to post-processing \cite{DP}. 
$\kappa_i$ is also very small when $\mu \ll 1$, as it is determined based on $\mu$.

\subsection{Algorithms}
\label{sub:algorithms}
Algorithm~\ref{alg:clip} shows our double clipping algorithm. 
All the processes are run by user $v_i$ at the second round. 
Thus, there is no interaction with the server in Algorithm~\ref{alg:clip}.

\setlength{\algomargin}{5mm}
\begin{algorithm}[t]
  \SetAlgoLined
  \KwData{Neighbor list $\bma_i \in \{0,1\}^n$, privacy budget
  $\epsilon_0 \in \nnreals$ 
  $\mu \in [0,\frac{e^{\epsilon_1}}{e^{\epsilon_1} + 1}]$, 
  $\alpha \in \nnreals$, 
  $\beta \in \nnreals$.
  }
  \KwResult{$\hw_i$.}
  $\mu^* \leftarrow \mu$, $\mu^2$, and $\mu^3$ in F, O, and T, respectively\;
  \tcc{Edge clipping.}
  $\td_i = \max\{d_i + \Lap(\frac{1}{\epsilon_0}) + \alpha$, 0\}\;
  \tcc{Remove $d_i - \lfloor \td_i \rfloor$ neighbors if $d_i > \td_i$.}
  $\bma_i \leftarrow \texttt{GraphProjection}(\bma_i, \td_i)$\;
  \tcc{Noisy triangle clipping.}
  \For{$j$ \rm{such that} $a_{i,j} = 1$ \rm{and} $j<i$}{
    $t_{i,j} \leftarrow |\{(v_i,v_j,v_k) : a_{i,k} = 1, (v_j,v_k) \in M_i, j<k<i \}|$\;
  }
   \tcc{Calculate $\kappa_i \in [\mu^* \td_i, \td_i]$ s.t. the triangle excess probability is $\beta$ or less.}
  $\kappa_i \leftarrow \texttt{ClippingThreshold}(\mu, \td_i, \beta)$\;
  $t_i \leftarrow \sum_{a_{i,j} = 1, j<i} \min \{t_{i,j}, \kappa_i\}$\;
  $s_i \leftarrow |\{(v_i,v_j,v_k) : a_{i,j} = a_{i,k} = 1, j<k<i\}|$\;
  $w_i \leftarrow t_i - \mu^* \rho s_i$\;
  $\hw_i \leftarrow w_i + \Lap(\frac{\kappa_i}{\epsilon_2})$\;
  \KwRet{$\hw_i$}
  \caption{Our double clipping algorithm. 
  ``F'', ``O'', ``T'' are shorthands for 
  \AlgOne{}, \AlgTwo{}, and \AlgThree{}, respectively.
  All the processes are run by user $v_i$.
  }\label{alg:clip}
\end{algorithm}

\smallskip
\noindent{\textbf{Edge Clipping.}}~~The edge clipping appears in lines 2-3 of Algorithm~\ref{alg:clip}. 
It uses a privacy budget $\epsilon_0 \in \nnreals$. 

In line 2, user $v_i$ adds the Laplacian noise $\Lap(\frac{1}{\epsilon_0})$ to her degree $d_i$. 
Since adding/removing one edge changes $d_i$ by at most $1$, this process provides $\epsilon_0$-edge LDP. 
$v_i$ also adds some non-negative constant 
$\alpha \in \nnreals$ 
to $d_i$. 
We add this value so that edge removal (in line 3) occurs with a very small probability; 
e.g., in our experiments, we set $\alpha = 150$, where 
edge removal occurs with probability $1.5 \times 10^{-7}$ when $\epsilon_0 = 0.1$. 
A similar technique is introduced in \cite{Sun_CCS19} to provide ($\epsilon, \delta)$-DP \cite{DP} with small $\delta$. 
The difference between ours and \cite{Sun_CCS19} is that we perform edge clipping 
to always provide $\epsilon$-DP; i.e., $\delta = 0$.
Let $\td_i \in \nnreals$ be the noisy degree of $v_i$.

In line 3, user $v_i$ calls the function \texttt{GraphProjection}, which performs graph projection as follows; 
if $d_i > \td_i$, randomly remove $d_i - \lfloor \td_i \rfloor$ neighbors from $\bma_i$; otherwise, do nothing. 
Consequently, the degree 
of $v_i$ never exceeds $\td_i$. 

\begin{table*}[t]
  \centering
  \begin{tabular}{|l|c|c|c|}
    \hline
    & \AlgOne & \AlgTwo & \AlgThree \\ \hline
    Privacy 
    & \multicolumn{3}{|c|}{$(\epsilon_0 + \epsilon_1 + \epsilon_2)$-edge LDP and $(\epsilon_0 + \epsilon_1 + \epsilon_2)$-relationship DP} \\ \hline
    Expected $l_2$ loss 
    & $O\left(\frac{n d_{max}^3}{\mu(1-e^{-\epsilon_1})^2} + \frac{2 \sum_{i=1}^n \kappa_{i}^2}{\mu^2(1-e^{-\epsilon_1})^2 \epsilon_2^2}\right)$
    & $O\left(\frac{n d_{max}^2}{\mu^2(1-e^{-\epsilon_1})^2} + \frac{2 \sum_{i=1}^n \kappa_{i}^2}{\mu^4 (1-e^{-\epsilon_1})^2 \epsilon_2^2} \right)$
    & $O\left(\frac{n d_{max}^2}{\mu^3(1-e^{-\epsilon_1})^2} + \frac{2 \sum_{i=1}^n \kappa_{i}^2}{\mu^6 (1-e^{-\epsilon_1})^2 \epsilon_2^2} \right)$ \\ \hline
    $\text{Cost}_{DL}$ & 
    $\mu n^2 \log n$ 
    & 
    $\mu^2 n^2 \log n$ 
    & 
    $\mu^3 n^2 \log n$ 
    \\ \hline
    $\text{Cost}_{UL}$ & 
    $\mu n \log n$ & $\mu n \log n$ & $\mu n \log n$ 
    \\ \hline
  \end{tabular}
  \vspace{-2mm}
  \caption{Performance guarantees 
  of our three algorithms with double clipping when the edge removal and triangle removal do not occur. 
  The expected $l_2$ loss assumes that $\mu$ is small. 
  The download (resp.~upload) cost is an upper-bound in (\ref{eq:CostDL_F}) 
  (resp.~(\ref{eq:CostUL_proposal})).
  }
  \label{tab:privacy_utility_cost}
\end{table*}

\smallskip
\noindent{\textbf{Noisy Triangle Clipping.}}~~The noisy triangle clipping appears in lines 4-11 of Algorithm~\ref{alg:clip}. 

In lines 4-6, 
user $v_i$ calculates the number $t_{i,j} \in \nnints$ of noisy triangles ($v_i, v_j, v_k$) ($j<k<i$) involving $(v_i,v_j)$ 
(as shown in Figure~\ref{fig:reduce_noisy_triangles}). 
Note that the total number $t_i$ of noisy triangles of $v_i$ can be expressed as: 
$t_i = \sum_{a_{i,j}=1, j<i} t_{i,j}$. 
In line 7, $v_i$ calls the function \texttt{ClippingThreshold}, which calculates a clipping threshold 
$\kappa_i \in [\mu^* \td_i, \td_i]$ 
($\mu^* = \mu$, $\mu^2$, and $\mu^3$ in 
``F'', ``O'', and ``T'', respectively) 
based on the ARR parameter $\mu$ and the noisy degree $\td_i$ so that 
the triangle excess probability does not exceed some constant $\beta \in \nnreals$. 
We explain how to calculate 
the triangle excess probability 
in Section~\ref{sub:clip_theoretical_analysis}. 
In line 8, $v_i$ calculates the total number $t_i$ of noisy triangles by summing up $t_{i,j}$, with the exception that $v_i$ adds $\kappa_i$ 
if $t_{i,j} > \kappa_i$. 
In other words, triangle removal occurs 
if 
$t_{i,j} > \kappa_i$. 
Then, 
the number 
of noisy triangles involving $(v_i,v_j)$ never exceeds $\kappa_i$. 

Lines 9-11 in Algorithm~\ref{alg:clip} are the same as lines 12-14 in Algorithm~\ref{alg:unify}, except that 
the global sensitivity in the former (resp.~latter) is $\kappa_i$ (resp.~$d_{max}$). 
Line 11 in Algorithm~\ref{alg:clip} provides $\epsilon_2$-edge LDP because the number of triangles involving $(v_i,v_j)$ is now upper-bounded by $\kappa_i$. 

\smallskip
\noindent{\textbf{Our Entire Algorithms with Double Clipping.}}~~We can run our algorithms \AlgOne{}, \AlgTwo{}, \AlgThree{} with double clipping just by replacing lines 11-14 in Algorithm~\ref{alg:unify} with lines 2-11 in Algorithm~\ref{alg:clip}. 
That is, after calculating $\hw_i$ by Algorithm~\ref{alg:clip}, $v_i$ uploads $\hw_i$ to the server. 
Then the server calculates an estimate of $f_\triangle(G)$ as $\hf_\triangle(G) = \frac{1}{\mu^*(1-\rho)}\sum_{i=1}^n \hw_i$. 

We also note that the input $d_{max}$ in Algorithm~\ref{alg:unify} is no longer necessary thanks to the edge clipping; i.e., our entire algorithms with double clipping do not assume that $d_{max}$ is public. 

\subsection{Theoretical Analysis}
\label{sub:clip_theoretical_analysis}
We now perform a theoretical analysis on the privacy and utility of our double clipping. 

\smallskip
\noindent{\textbf{Privacy.}}~~We begin with the privacy guarantees:
\begin{theorem}\label{thm:privacy_DC}
  For $i \in [n]$, 
  let $\calR_i^1, \calR_i^2(M_i)$ be the randomizers used by user $v_i$ in
  rounds $1$ and $2$ of our algorithms with double clipping (Algorithms~\ref{alg:unify} and \ref{alg:clip}). 
  Let $\calR_i(\bma_i) = (\calR_i^1(\bma_i), \calR_i^2(M_i)(\bma_i))$ 
  be the composition of the two randomizers. 
  Then,
  $\calR_i$ satisfies $(\epsilon_0 + \epsilon_1 + \epsilon_2)$-edge LDP, 
  and $(\calR_1,
  \ldots, \calR_n)$ satisfies $(\epsilon_0 + \epsilon_1 + \epsilon_2)$-relationship DP.
\end{theorem}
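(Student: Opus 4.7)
The plan is to apply sequential composition of edge LDP (Proposition~\ref{prop:seq_comp_edge_LDP}) to a three-way decomposition of each user $v_i$'s overall randomizer $\calR_i$, and then lift to relationship DP using the lower-triangular part of Proposition~\ref{prop:edge_LDP_entire_edge_LDP}. Round~1 is $\calR_i^1 = ARR_{\epsilon_1, \mu}$, which was already shown to be $\epsilon_1$-edge LDP. Round~2 is further split into (a) the noisy-degree release $\td_i = \max\{d_i + \Lap(1/\epsilon_0) + \alpha, 0\}$, and (b) the clipped, Laplace-noised triangle report $\hw_i = w_i + \Lap(\kappa_i/\epsilon_2)$; everything else in Algorithm~\ref{alg:clip} (graph projection, the construction of $M_i$, and the invocation of \texttt{ClippingThreshold}) is post-processing of already-released quantities or of other users' outputs and therefore incurs no extra privacy cost.

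For step (a), the lower-triangular degree $d_i$ has global sensitivity $1$ under a one-bit flip of $\bma_i$, so the Laplace mechanism with scale $1/\epsilon_0$ is $\epsilon_0$-edge LDP; the clipping at $0$ and the additive constant $\alpha$ are post-processing. For step (b), the central claim is that, conditioned on $\td_i$ and on $M_i$ (which depends only on round-$1$ outputs of \emph{other} users and is thus fixed with respect to $\bma_i$), the map $\bma_i \mapsto w_i$ has global sensitivity at most $\kappa_i$. This leverages two facts: after graph projection the projected neighbor list has degree at most $\td_i$, so the deterministic correction $s_i$ changes by at most $\td_i - 1$ under a single flip; and each clipped summand $\min\{t_{i,j}, \kappa_i\}$ entering $t_i$ is bounded by $\kappa_i$. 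Combining these with the inequalities $\rho \leq 1$ and $\mu^* \td_i \leq \kappa_i$ (both guaranteed by the specification of \texttt{ClippingThreshold} producing $\kappa_i \in [\mu^* \td_i, \td_i]$), the total change in $w_i = t_i - \mu^* \rho s_i$ under any single-bit flip is absorbed into $\kappa_i$, so adding $\Lap(\kappa_i/\epsilon_2)$ yields $\epsilon_2$-edge LDP.

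Composing (a) and (b) via Proposition~\ref{prop:seq_comp_edge_LDP} and then composing the result with $\calR_i^1$ gives $(\epsilon_0 + \epsilon_1 + \epsilon_2)$-edge LDP for $\calR_i$. For relationship DP, observe that every stage of our algorithm---the ARR in round~$1$, the definition of $d_i$, the edge clipping, the triangle and star counts, and the clipping threshold---reads only the lower-triangular bits $a_{i,1}, \ldots, a_{i,i-1}$ of $\bma_i$, and $M_i$ likewise depends only on such lower-triangular bits of other users. Thus the second part of Proposition~\ref{prop:edge_LDP_entire_edge_LDP} applies and upgrades the edge-LDP guarantee to $(\epsilon_0 + \epsilon_1 + \epsilon_2)$-relationship DP for the tuple $(\calR_1, \ldots, \calR_n)$ without the usual factor-of-two loss.

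The main obstacle is the sensitivity argument in step (b): it requires carefully accounting for how a single edge flip simultaneously perturbs the clipped triangle count $t_i$ and the scaled correction $\mu^* \rho s_i$, and then using the inequalities $\kappa_i \geq \mu^* \td_i$ and $\rho \leq 1$ to show that the two contributions together remain within $\kappa_i$; every other ingredient (Laplace on $d_i$, ARR, post-processing, and the lower-triangular observation) is essentially off the shelf once the decomposition is set up.
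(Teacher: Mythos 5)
Your high-level plan — decomposing $\calR_i$ into ARR (round~1, $\epsilon_1$), a noisy-degree release ($\epsilon_0$), and a Laplace-noised clipped triangle report ($\epsilon_2$), then composing via Proposition~\ref{prop:seq_comp_edge_LDP} and upgrading to relationship DP by the lower-triangular observation in Proposition~\ref{prop:edge_LDP_entire_edge_LDP} — matches the paper's structure, and the ingredients you identify ($\rho \le 1$ and $\kappa_i \ge \mu^*\td_i$) are the right ones. The weakness is in the sensitivity argument for step~(b), which you yourself flag as the obstacle but do not actually close.

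The gap you leave is exactly the subtlety the paper's proof is built around. You reason as though a single bit flip in $\bma_i$ induces a single bit change in the \emph{projected} neighbor list $\bar{\bma}_i$, and then argue from the boundedness of each clipped summand. But \texttt{GraphProjection} can convert a one-bit change in $\bma_i$ into a \emph{two}-bit change in $\bar{\bma}_i$: if adding the new edge pushes $v_i$ above $\td_i$, the (random) projection to degree $\td_i$ may drop a \emph{different} edge than the one that was added, so $\bar{\bma}_i$ and $\bar{\bma}'_i$ end up differing in two positions with $|\bar{\bma}_i| = |\bar{\bma}'_i| = \td_i$. The paper explicitly enumerates three cases (identical projections, one-bit change, two-bit change), gives a concrete example of the two-bit case, and bounds $\Delta w_i$ separately in each — in the two-bit case using $t_i' - t_i \in [-\kappa_i, \kappa_i]$ together with $s_i = s_i' = \binom{\td_i}{2}$ so that the correction term cancels. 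Your argument never reaches this case, and your statement "$s_i$ changes by at most $\td_i - 1$ under a single flip" is only correct in the one-bit-change case. Until the projection's effect on the neighbor list is handled, the claimed sensitivity bound of $\kappa_i$ on $w_i$ is not established, and hence neither is $\epsilon_2$-edge LDP for step~(b).

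A secondary, smaller issue: saying "each clipped summand $\min\{t_{i,j},\kappa_i\}$ is bounded by $\kappa_i$" is not by itself enough to conclude $|t_i' - t_i| \le \kappa_i$, since a bit change in $\bar{\bma}_i$ alters which $j$'s contribute to the sum and can also move several of the $t_{i,j}$'s simultaneously. You need the per-case bookkeeping the paper performs (and the sign structure: in the one-bit-addition case both $t_i' - t_i$ and $s_i' - s_i$ are nonnegative, which is what lets the two contributions be combined into a single $\kappa_i$ bound via $\mu^*\rho\,\td_i \le \kappa_i$) rather than a generic triangle-inequality appeal.
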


\smallskip
\noindent{\textbf{Utility.}}~~Next, we show the triangle excess probability: 

\begin{theorem}\label{thm:triangle_excess}
In Algorithm~\ref{alg:clip}, the triangle excess probability (i.e., probability that the number of noisy triangles $t_{i,j}$ involving edge $(v_i, v_j)$ exceeds a clipping threshold $\kappa_i$) is:
  \begin{align}
    \hspace{-1mm} \Pr(t_{i,j} > \kappa_i) &\leq \textstyle{\exp \left[-\td_i D \left(\frac{\kappa_i}{\td_i} \parallel \mu \right) \right]} \label{eq:AlgI_clip_bound} \\
    \hspace{-1mm} \Pr(t_{i,j} > \kappa_i) &\leq \textstyle{\exp \left[-\td_i D \left(\frac{\kappa_i}{\td_i} \parallel \mu^2 \right) \right]} \label{eq:AlgII_clip_bound}\\
    \hspace{-1mm} \Pr(t_{i,j} > \kappa_i) &\leq 
    \textstyle{\mu \exp \left[-\td_i D \left(\frac{\max\{\kappa_i,\mu^2 \td_i\}}{\td_i} \parallel \mu^2 \right) \right]}
    \label{eq:AlgIII_clip_bound}
  \end{align}
  in \AlgOne{}, \AlgTwo{}, and \AlgThree{}, respectively, 
  where 
  $D(p_1 \parallel p_2)$ is the Kullback-Leibler divergence between two Bernoulli distributions; i.e., 
\begin{align*}
    \textstyle{D(p_1 \parallel p_2) = p_1 \log \frac{p_1}{p_2} + (1-p_1) \log \frac{1-p_1}{1-p_2}.}
\end{align*}
\end{theorem}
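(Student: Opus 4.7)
The plan is to view $t_{i,j}$ as a sum of at most $\td_i$ indicator random variables, one per eligible common neighbor of $v_i$ and $v_j$, analyze their joint distribution under the ARR outputs in each algorithm, and apply the Chernoff--Hoeffding bound in its Kullback--Leibler form. Fix $i$ and $j$ with $a_{i,j}=1$ and $j<i$, and write $t_{i,j}=\sum_{k} X_k$ summed over indices $k$ with $j<k<i$ and $a_{i,k}=1$. The edge clipping in line 3 of Algorithm~\ref{alg:clip} guarantees that the post-projection degree of $v_i$ restricted to users with smaller IDs is at most $\lfloor \td_i \rfloor \le \td_i$, so this sum has at most $\td_i$ Bernoulli terms.

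Next I identify $X_k$ and verify independence in each variant. In \AlgOne{}, $X_k=\mathbf{1}\{r_{k,j}=1\}$ depends only on $v_k$'s ARR randomness at round 1, so the $\{X_k\}$ are mutually independent across $k$, with $\Pr[X_k=1]$ equal to $\mu$ when $a_{k,j}=1$ and to $\mu e^{-\epsilon_1}$ otherwise---hence at most $\mu$. In \AlgTwo{}, $X_k=\mathbf{1}\{r_{k,j}=1 \wedge r_{i,k}=1\}$; the first factor is $v_k$'s coin and the second is the output of $v_i$'s ARR on her $k$-th coordinate. Since ARR is applied independently to each bit of $\bma_i$ and different users' randomness is independent, the $X_k$ remain mutually independent across $k$, and $\Pr[X_k=1] \le \mu\cdot\mu = \mu^2$ (using $a_{i,k}=1 \Rightarrow \Pr[r_{i,k}=1]=\mu$). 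In both cases I invoke the standard multiplicative Chernoff bound in KL form---if $Y$ is a sum of at most $n$ independent Bernoullis each with parameter $\le p$ and $a/n \ge p$, then $\Pr[Y>a] \le \exp(-n D(a/n \,\|\, p))$---with $n=\td_i$ and $p=\mu$ or $p=\mu^2$, which immediately yields \eqref{eq:AlgI_clip_bound} and \eqref{eq:AlgII_clip_bound}. The admissibility condition $\kappa_i/\td_i \ge p$ holds because $\kappa_i \in [\mu^*\td_i,\td_i]$ with $\mu^*=\mu$ (resp.~$\mu^2$) in \AlgOne{} (resp.~\AlgTwo{}).

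The main obstacle is \AlgThree{}, where $X_k=\mathbf{1}\{r_{k,j}=1 \wedge r_{i,j}=1 \wedge r_{i,k}=1\}$ and the shared coin $r_{i,j}$ destroys independence across $k$. The plan is to condition on $r_{i,j}$: since $a_{i,j}=1$ we have $\Pr[r_{i,j}=1]=\mu$; conditioned on $r_{i,j}=0$ every $X_k$ is zero (so $t_{i,j}=0 \le \kappa_i$); and conditioned on $r_{i,j}=1$ the surviving indicators $\mathbf{1}\{r_{k,j}=1 \wedge r_{i,k}=1\}$ are mutually independent across $k$ with per-term probability at most $\mu^2$, exactly as in the \AlgTwo{} analysis. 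Thus
\[
\Pr(t_{i,j}>\kappa_i) \le \mu \cdot \Pr(t_{i,j}>\kappa_i \mid r_{i,j}=1).
\]
When $\kappa_i \ge \mu^2\td_i$, the KL-form Chernoff bound applied to the conditional distribution yields $\exp(-\td_i D(\kappa_i/\td_i \,\|\, \mu^2))$; when $\kappa_i < \mu^2\td_i$ this argument falls out of regime, but the conditional probability is trivially at most $1 = \exp(-\td_i D(\mu^2 \,\|\, \mu^2))$. Substituting $\max\{\kappa_i,\mu^2\td_i\}/\td_i$ for $\kappa_i/\td_i$ in the argument of $D$ unifies the two regimes and produces \eqref{eq:AlgIII_clip_bound}---which is precisely why the $\max$ appears in the theorem statement.
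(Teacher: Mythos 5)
Your proof is correct and follows essentially the same route as the paper's: bound the post-clipping degree by $\td_i$, identify $t_{i,j}$ as a sum of Bernoulli indicators with per-term probability at most $\mu^*$, apply the KL-form Chernoff bound, and for \AlgThree{} condition on the shared bit $r_{i,j}$ (equivalently, whether $(v_i,v_j)\in E'$) to restore independence, with the $\max$ absorbing the out-of-regime case $\kappa_i<\mu^2\td_i$ via $D(\mu^2\,\|\,\mu^2)=0$. The only cosmetic difference is that you make the bit-level independence structure (the coins $r_{k,j},r_{i,k},r_{i,j}$) explicit, whereas the paper reasons directly about membership in $E'$; the substance is identical.
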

In all of (\ref{eq:AlgI_clip_bound}), (\ref{eq:AlgII_clip_bound}), and (\ref{eq:AlgIII_clip_bound}), we use the Chernoff bound, which is known to be reasonably tight \cite{Arratia_BMB89}. 

\smallskip
\noindent{\textbf{Setting $\kappa_i$.}}~~The function \texttt{ClippingThreshold} in Algorithm~\ref{alg:clip} sets a clipping threshold $\kappa_i$ of user $v_i$ based on Theorem~\ref{thm:triangle_excess}. 
Specifically, we set $\kappa_i = \lambda_i \mu^* \td_i$, where $\lambda_i \in \nats$, and calculate $\lambda_i$ as follows. 
We initially set $\lambda_i = 1$ and keep increasing $\lambda_i$ by $1$ 
until the upper-bound (i.e., right-hand side of (\ref{eq:AlgI_clip_bound}), (\ref{eq:AlgII_clip_bound}), or (\ref{eq:AlgIII_clip_bound})) is smaller than or equal to the triangle excess probability $\beta$. 
In our experiments, we set $\beta = 10^{-6}$. 

\smallskip
\noindent{\textbf{Large $\kappa_i$ of \AlgThree{}.}}~~By 
(\ref{eq:AlgI_clip_bound}) and (\ref{eq:AlgII_clip_bound}), the upper-bound on the triangle excess probability is the same between \AlgOne{} and \AlgTwo{}. 
In contrast, 
\AlgThree{} has a larger upper-bound. 
For example, 
when $\kappa_i = 15 \mu^* \td_i$, $\mu^* = 10^{-3}$, and $\td_i=1000$, 
the right-hand sides of (\ref{eq:AlgI_clip_bound}), (\ref{eq:AlgII_clip_bound}), and (\ref{eq:AlgIII_clip_bound}) are $2.5 \times 10^{-12}$, $2.5 \times 10^{-12}$, and $3.3 \times 10^{-2}$, respectively. 
Consequently, \AlgThree{} has a larger global sensitivity $\kappa_i$ for the same value of $\beta$.

We can explain a large global sensitivity $\kappa_i$ of \AlgThree{} as follows. 
The number $t_{i,j}$ of noisy triangles involving $(v_i,v_j)$ in 
\AlgOne{} is expected to be around $\mu d_i$ because one noisy edge is in each noisy triangle (as in Figure~\ref{fig:reduce_noisy_triangles}) and all noisy edges are independent. 
For the same reason, $t_{i,j}$ in \AlgTwo{} is expected to be around $\mu^2 d_i$.  
However, 
$t_{i,j}$ in \AlgThree{} is \textit{not} expected to be around $\mu^3 d_i$, because all the noisy triangles have noisy edge $(v_i,v_j)$ in common (as in Figure~\ref{fig:reduce_noisy_triangles}). 
Then, 
the expectation of $t_{i,j}$ 
largely depends on the presence/absence of the noisy edge $(v_i,v_j)$; i.e., if noisy edge $(v_i,v_j)$ exists, 
it is $\mu^2 d_i$; otherwise, $0$. 
Thus, 
$\kappa_i$ 
cannot be effectively reduced by double clipping. 

\smallskip
\noindent{\textbf{Summary.}}~~The 
performance guarantees 
of our three algorithms with double clipping can be summarized in Table~\ref{tab:privacy_utility_cost}.

The first and second terms of the expected 
$l_2$ loss are the $l_2$ loss of empirical estimation and that of the Laplacian noise, respectively. 
For small 
$\mu$, 
the $l_2$ loss of empirical estimation can be expressed as $O(n d_{max}^3)$, $O(n d_{max}^2)$, and $O(n d_{max}^2)$ in \AlgOne{}, \AlgTwo{}, \AlgThree{}, respectively, as explained in Section~\ref{sub:algorithms_theoretical_analysis}. 
The $l_2$ loss of the Laplacian noise is 
$O(\sum_{i=1}^n \kappa_i^2)$, 
which is much smaller than $O(n d_{max}^2)$. 
Thus, our \AlgTwo{} that effectively reduces $\kappa_i$ provides the smallest error, 
as shown in our experiments.

We also note that 
both the space and the time complexity to compute and send $M_i$ in our algorithms 
are $O(\mu^* n^2)$ (as $|E'| =  O(\mu^* n^2)$), which is much smaller than \cite{Imola_USENIX21} ($=O(n^2)$).

\section{Experiments}
\label{sec:experiments}

To evaluate each component of our algorithms in Sections~\ref{sec:algorithms} and \ref{sec:double_clip} as well as our entire algorithms (i.e., \AlgOne, \AlgTwo, \AlgThree with double clipping), we pose the following three research questions:
\begin{description}[leftmargin=9.75mm]
    \item[RQ1.] 
    How do our three triangle counting algorithms 
    (i.e., \AlgOne, \AlgTwo, \AlgThree) in Section~\ref{sec:algorithms} compare with each other in terms of accuracy?
    \item[RQ2.] 
    How much does our double clipping technique in Section~\ref{sec:double_clip} decrease the estimation error?
    \item[RQ3.] 
    How much do our entire algorithms reduce the communication cost, compared to the existing algorithm~\cite{Imola_USENIX21}, while keeping high utility (e.g., relative error $\ll 1$)?
\end{description}
In Appendix~\ref{sec:one-round}, we also compare our entire algorithms with one-round algorithms.

\subsection{Experimental Set-up}
\label{sub:setup}
In our experiments, we used two real graph datasets:

\smallskip
\noindent{\textbf{Gplus.}}~~The Google+ dataset~\cite{McAuley_NIPS12} (denoted by \GPlus{}) 
was collected from users who had shared circles. 
From the dataset, we 
constructed 
a social graph $G=(V,E)$ with $107614$ nodes (users) and $12238285$ edges, where 
edge $(v_i,v_j) \in E$ represents that $v_i$ follows or is followed by $v_j$. 
The average (resp.~maximum) degree in $G$ is $113.7$ (resp.~$20127$). 

\smallskip
\noindent{\textbf{IMDB.}}~~The IMDB (Internet Movie Database)~\cite{IMDB_GD05} (denoted by \IMDB{}) includes a bipartite graph between $896308$ actors and $428440$ movies. 
From this, we constructed a graph $G=(V,E)$ with $896308$ nodes (actors) and $57064358$ edges, where edge $(v_i,v_j) \in E$ represents that $v_i$ and $v_j$ have played in the same movie. 
The average (resp.~maximum) degree in $G$ is $63.7$ (resp.~$15451$). 
Thus, \IMDB{} is more sparse than \GPlus{}. 

\smallskip
In \conference{the full version \cite{Imola_arXiv22}}\arxiv{Appendix~\ref{sec:BAmodel}}, we also evaluate our algorithms using a synthetic graph based on the Barab\'{a}si-Albert model~\cite{NetworkScience}, which has a power-law degree distribution. 

We evaluated our algorithms while changing $\mu^*$, where $\mu^* = \mu$, $\mu^2$, and $\mu^3$ in \AlgOne{}, \AlgTwo{}, and \AlgThree{}, respectively. 
$\CostDL{}$ is the same between the three algorithms. 
We typically 
set the total privacy budget $\epsilon$ to $\epsilon=1$ 
(at most $2$) 
because it is acceptable in many practical scenarios \cite{DP_Li}. 

In our double clipping, we set $\alpha = 150$ and $\beta = 10^{-6}$ so that both edge removal and triangle removal occur with a very small probability ($\leq 10^{-6}$ when $\epsilon_0 = 0.1$). 
Then for each algorithm, we evaluated the relative error between the true triangle count $f_\triangle(G)$ and its estimate $\hf_\triangle(G)$. 
Since the estimate $\hf_\triangle(G)$ varies depending on the randomness of LDP mechanisms, we ran each algorithm $\tau \in \nats$ times ($\tau=20$ and $10$ for \GPlus{} and \IMDB{}, respectively) and averaged the relative error over the $\tau$ cases.

\subsection{Experimental Results}
\label{sub:results}

\smallskip
\noindent{\textbf{Performance Comparison.}}~~First, 
we 
evaluated our algorithms with the Laplacian noise. 
Specifically, we evaluated all possible combinations of our three algorithms with and without our double clipping (six combinations in total) and compared them with 
the existing two-rounds algorithm in~\cite{Imola_USENIX21}.  
For algorithms with double clipping, we divided the total privacy budget $\epsilon$ as: 
$\epsilon_0 = \frac{\epsilon}{10}$ and 
$\epsilon_1 = \epsilon_2 = \frac{9\epsilon}{20}$. 
Here, we set a very small budget ($\epsilon_0 = \frac{\epsilon}{10}$) for edge clipping because the degree has a small sensitivity (sensitivity$=1$). 
For algorithms without double clipping, we divided $\epsilon$ as $\epsilon_1 = \epsilon_2 = \frac{\epsilon}{2}$ and 
used the maximum degree $d_{max}$ as the global sensitivity. 

\begin{figure}[t]
  \centering
  \includegraphics[width=0.96\linewidth]{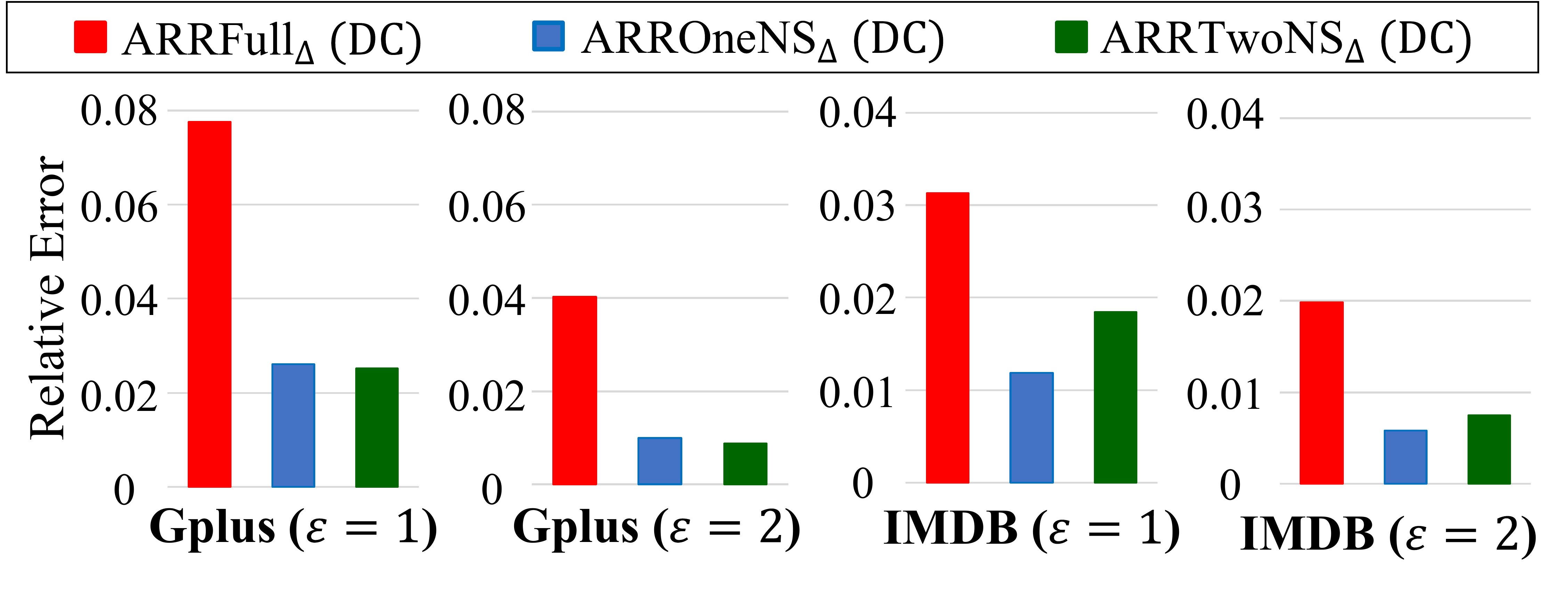}
  \vspace{-6mm}
  \caption{Relative error of our three algorithms with double clipping (``DC'') when $\epsilon=1$ or $2$ and 
  $\mu^*=10^{-3}$ 
  ($n=107614$ in \GPlus{}, $n=896308$ in \IMDB{}).} 
  \label{fig:res2_w_Lap_abst}
 \vspace{1mm}
  \centering
  \includegraphics[width=0.99\linewidth]{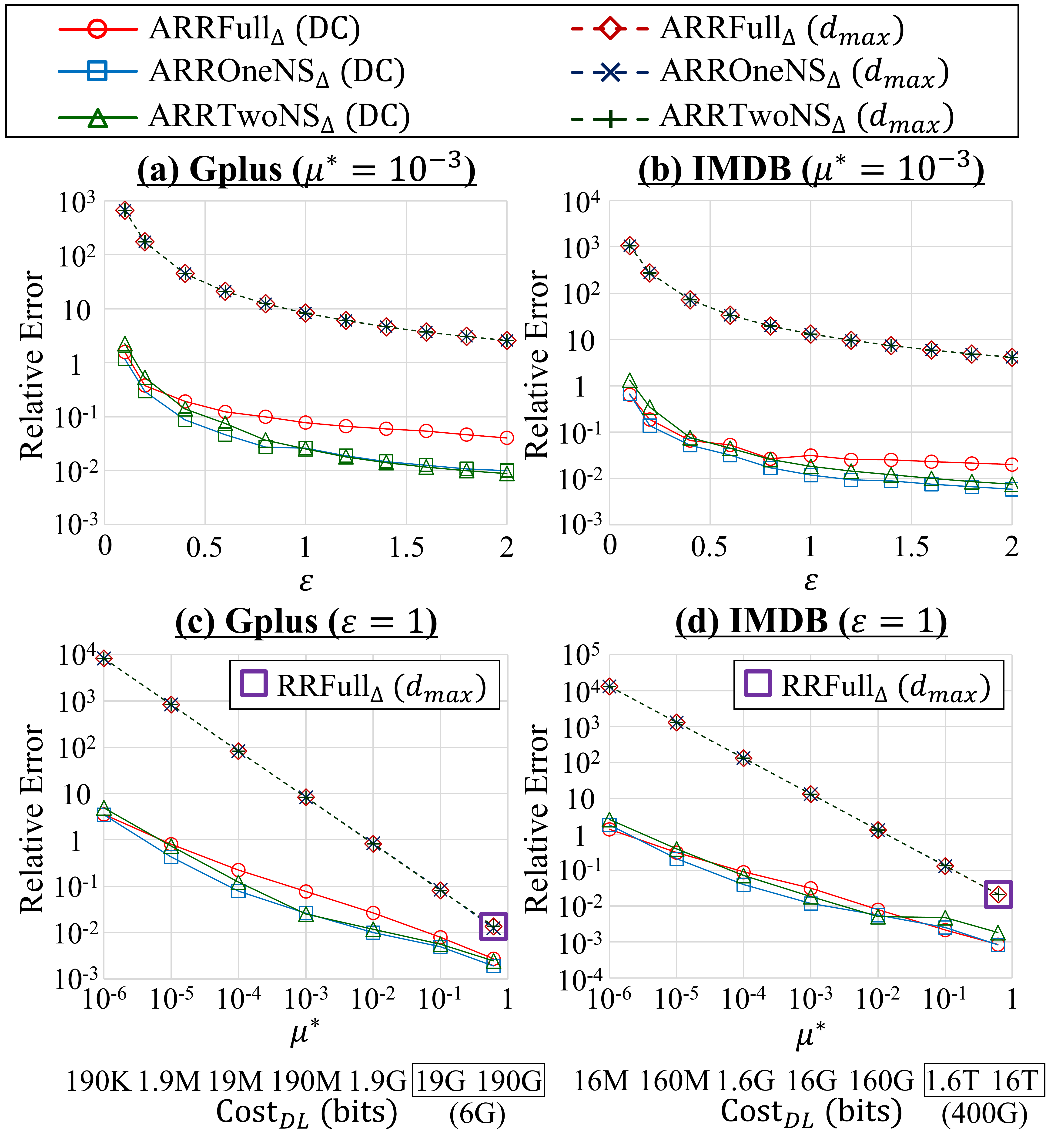}
  \vspace{-5mm}
  \caption{Relative error of our three algorithms with (``DC'') or without (``$d_{max}$'') double clipping ($n=107614$ in \GPlus{}, $n=896308$ in \IMDB{}). \AlgSec{} is the 
  algorithm in~\cite{Imola_USENIX21}. 
  $\CostDL$ is 
  an upper-bound in 
  (\ref{eq:CostDL_F}). 
  When $\mu^* \geq 0.1$, 
  $\CostDL$ can be $6$ Gbits and $400$ Gbits in \GPlus{} and \IMDB{}, respectively, by downloading only 0/1 for each pair of users $(v_j,v_k)$.} 
  \label{fig:res2_w_Lap}
\end{figure}

Figures~\ref{fig:res2_w_Lap_abst} and \ref{fig:res2_w_Lap} show the results. 
Figure~\ref{fig:res2_w_Lap_abst} highlights the relative error of our three algorithms with double clipping when $\epsilon=1$ or $2$ and $\mu^*=10^{-3}$. 
``DC'' (resp.~``$d_{max}$'') represents algorithms with (resp.~without) double clipping. 
\AlgSec{} (marked with purple square) in Figure~\ref{fig:res2_w_Lap} (c) and (d) represents the two-rounds algorithm in~\cite{Imola_USENIX21}. 
Note that this is a special case of our \AlgOne{} without sampling ($\mu =\frac{e^{\epsilon_1}}{e^{\epsilon_1}+1} = 0.62$). 
Figure~\ref{fig:res2_w_Lap} (c) and (d) also show the download cost $\CostDL$ calculated by 
(\ref{eq:CostDL_F}). 
Note that 
when $\mu^* \geq 0.1$ (marked with squares), 
$\CostDL$ can be $6$Gbits and $400$Gbits in \GPlus{} and \IMDB{}, respectively, by downloading only 0/1 for each pair of users $(v_j,v_k)$; $\CostDL = \frac{(n-1)(n-2)}{2}$ in this case. 

Figures~\ref{fig:res2_w_Lap_abst} and \ref{fig:res2_w_Lap} show that our \AlgTwo{} (DC) 
provides the best (or almost the best) performance in all cases. 
This is because \AlgTwo{} (DC) introduces the $4$-cycle trick 
and effectively reduces the global sensitivity of the Laplacian noise by double clipping. 
Later, we will 
investigate 
the effectiveness of the $4$-cycle trick in detail 
by not adding the Laplacian noise. 
We will also investigate 
the impact of the Laplacian noise 
while changing $n$. 

Figure~\ref{fig:res2_w_Lap} also shows that 
the relative error is almost the same between our three algorithms without double clipping (``$d_{max}$'') and that it is too large. 
This is because $\Lap(\frac{d_{max}}{\epsilon_2}$) is too large and dominant. 
The relative error is significantly reduced by introducing our double clipping in all cases. 
For example, when $\mu^* = 10^{-3}$, our double clipping reduces the relative error of \AlgTwo{} by two or three orders of magnitude. 
The improvement is larger for smaller $\mu^*$. 

In \conference{the full version \cite{Imola_arXiv22}}\arxiv{Appendix~\ref{sec:EC_DC}}, we also evaluate the effect of edge clipping and noisy triangle clipping independently and show that each component significantly reduces the relative error. 

\smallskip
\noindent{\textbf{Communication Cost.}}~~From Figure~\ref{fig:res2_w_Lap} (c) and (d), we can explain how much our algorithms can reduce the download cost while keeping high utility, e.g., relative error $\ll 1$. 

For example, when we use the algorithm in \cite{Imola_USENIX21}, the download cost is 
$\CostDL = 400$ Gbits in \IMDB{}. 
Thus, when the download speed is $20$ Mbps 
(recommended speed in YouTube \cite{YouTube_speed}), every user $v_i$ needs 6 hours to download the message $M_i$, which is far from practical. 
In contrast, our \AlgTwo{} (DC) can reduce it to 
$160$ 
Mbits (8 seconds when $20$ Mbps download rate) or less 
while keeping relative error $= 0.21$, 
which is practical and a dramatic improvement over \cite{Imola_USENIX21}. 

We also note that since $d_{max} \ll n$ in \IMDB{}, 
$\CostDL$ of our \AlgTwo{} (DC) 
can also be roughly approximated by $60$ Mbits (3 seconds) by replacing $\mu$ with $\mu e^{-\epsilon_1}$ in 
(\ref{eq:CostDL_F}).

\smallskip
\noindent{\textbf{4-Cycle Trick.}}~~We 
also investigated 
the effectiveness of our $4$-cycle trick in \AlgTwo{} and \AlgThree{} 
in detail. 
To this end, we evaluated our three algorithms when we did \textit{not} add the Laplacian noise at the second round. 
Note that they do not provide edge LDP, as $\epsilon_2 = \infty$. 
The purpose here is to purely investigate the effectiveness of the $4$-cycle trick 
related to our first research question RQ1. 

\begin{figure}[t]
  \centering
  \includegraphics[width=0.99\linewidth]{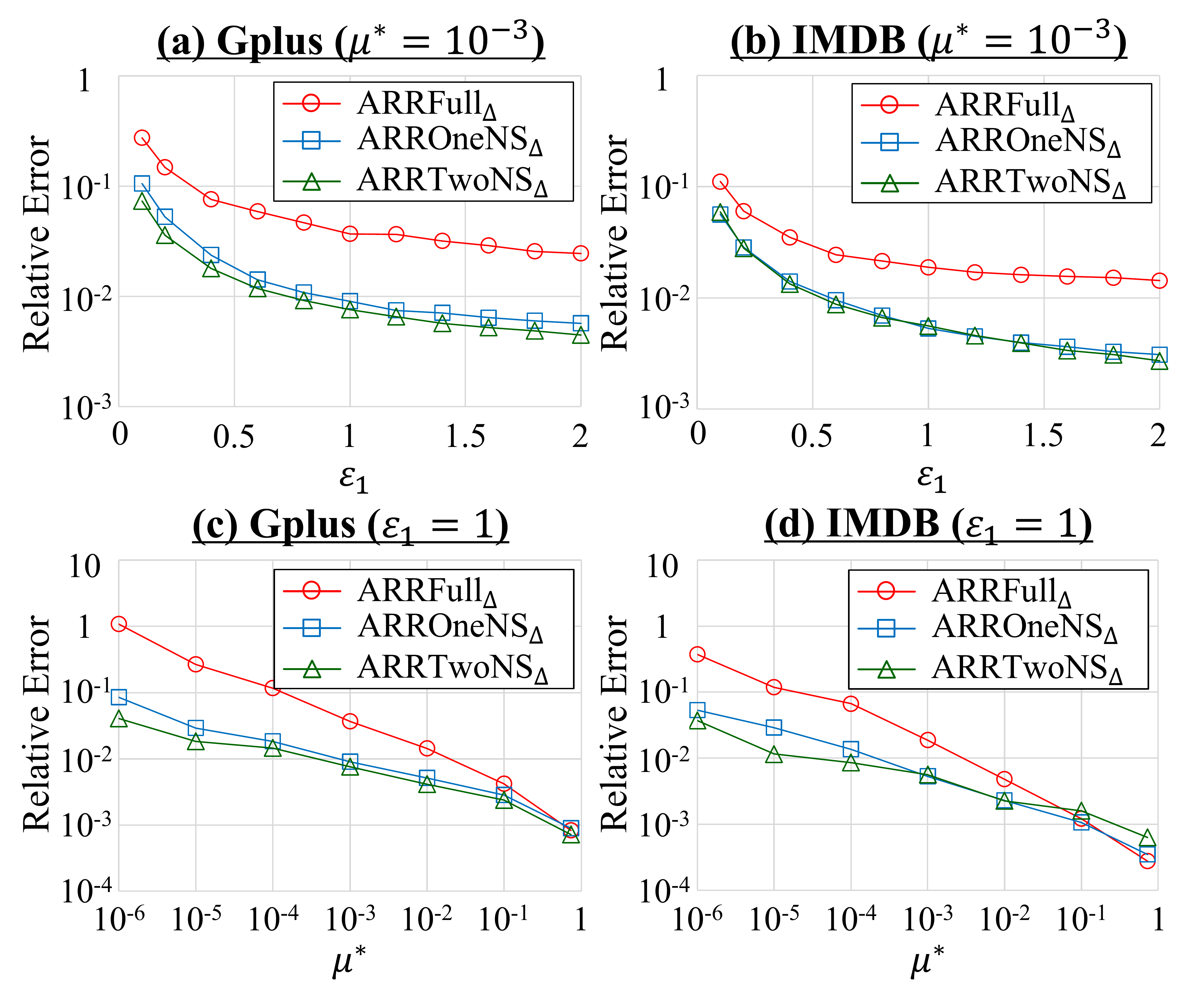}
  \vspace{-5mm}
  \caption{Relative error of our three algorithms without the Laplacian noise 
  ($n=107614$ in \GPlus{}, $n=896308$ in \IMDB{}).} 
  \label{fig:res1_wo_Lap}
\end{figure}

Figure~\ref{fig:res1_wo_Lap} shows the results, where 
$\epsilon_1$ and 
$\mu^*$ 
are changed to various values. 
Figure~\ref{fig:res1_wo_Lap} shows that \AlgTwo{} and \AlgThree{} significantly outperform \AlgOne{} when 
$\mu^*$ 
is small. 
This is because in both \AlgTwo{} and \AlgThree{}, 
the factors of 
$C_4$ (\#4-cycles) and $S_3$ (\#3-stars) 
in the expected $l_2$ loss 
diminish 
for small $\mu$, 
as explained in Section~\ref{sub:algorithms_theoretical_analysis}. 
In other words, \AlgTwo{} and \AlgThree{} effectively address the $4$-cycle issue. 
Figure~\ref{fig:res1_wo_Lap} also shows that \AlgThree{} slightly outperforms \AlgTwo{} when $\mu^*$ is small. 
This is because 
the factors of $C_4$ and $S_3$ 
diminish 
more rapidly; i.e., \AlgThree{} addresses the $4$-cycle issue more aggressively. 

However, when we add the Laplacian noise, \AlgThree{} (DC) is outperformed by \AlgTwo{} (DC), as shown in Figure~\ref{fig:res2_w_Lap}. 
This is because \AlgThree{} cannot effectively reduce the global sensitivity by double clipping. 
In Figure~\ref{fig:res2_w_Lap}, the difference between \AlgTwo{} (DC) and \AlgOne{} (DC) is also small for very small $\epsilon$ or $\mu^*$ (e.g., $\epsilon=0.1$, $\mu^*=10^{-6}$) because the Laplacian noise is dominant in this case. 

\smallskip
\noindent{\textbf{Changing $\bm{n}$.}}~~We 
finally 
evaluated our three algorithms with double clipping while changing the number $n$ of users. 
In both \GPlus{} and \IMDB{}, we randomly selected $n$ users from all users and extracted a graph with $n$ users. 
Then we evaluated the relative error while changing $n$ to various values.

\begin{figure}[t]
  \centering
  \includegraphics[width=0.99\linewidth]{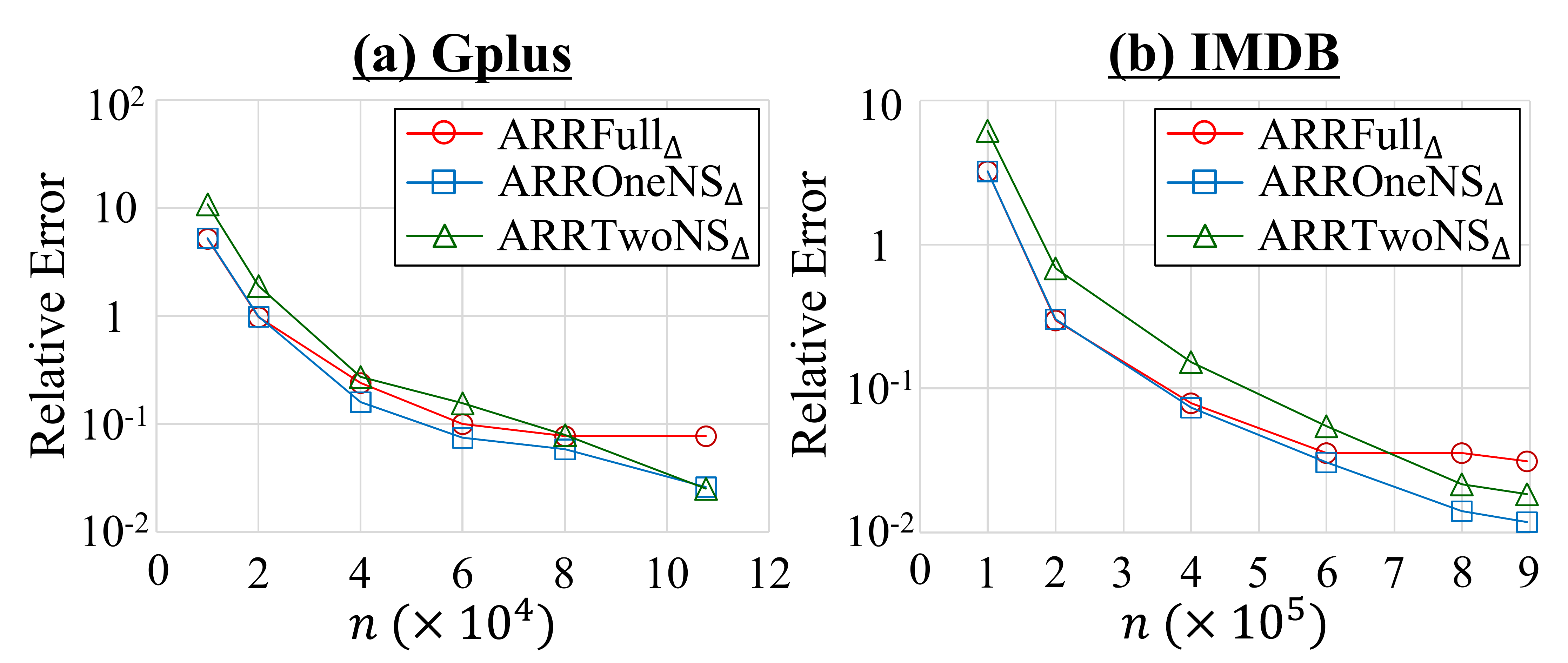}
  \vspace{-4mm}
  \caption{Relative error of our three algorithms with double clipping for various values of $n$ 
  ($\epsilon=1$, $\mu^*=10^{-3}$).} 
  \label{fig:res3_n}
\end{figure}

Figure~\ref{fig:res3_n} shows the results, where $\epsilon=1$ ($\epsilon_0=0.1$, $\epsilon_1 = \epsilon_2 = 0.45$) and $\mu^* = 10^{-3}$. 
In all 
three algorithms, the relative error decreases with increase in $n$.
This is because the expected $l_2$ loss can be expressed as 
$O(n d_{max}^3)$ or $O(n d_{max}^2)$ 
in these algorithms as shown in Section~\ref{sub:clip_theoretical_analysis} and the square of the true triangle count can be expressed as $\Omega(n^2)$.
In other words, when $d_{max} \ll n$, the relative error becomes smaller for larger $n$. 
Figure~\ref{fig:res3_n} also shows that for small $n$, \AlgThree{} provides the worst performance and 
\AlgTwo{} performs almost the same as \AlgOne{}. 
For large $n$, 
\AlgOne{} performs the worst and 
\AlgTwo{} performs the best.

To investigate the reason for this, we decomposed the estimation error into two components -- 
the first error caused by empirical estimation and the second error caused by the Laplacian noise. 
Specifically, for each algorithm, 
we evaluated the first error by calculating the relative error when we did not add the Laplacian noise ($\epsilon_1 = 0.45$). 
Then we evaluated the second error by subtracting the first error from the relative error when we used double clipping ($\epsilon_0=0.1$, $\epsilon_1 = \epsilon_2 = 0.45$). 

Figure~\ref{fig:res3_emp_Lap} shows the results for some values of $n$, where ``emp'' represents the first error by empirical estimation and ``Lap'' represents the second error by the Laplacian noise. 
We observe that the second error 
rapidly decreases with increase in $n$. 
In addition, 
the first error of \AlgOne{} is much larger than those of \AlgTwo{} and \AlgThree{} when $n$ is large. 

We also examined 
the number $C_4$ of $4$-cycles as a function of $n$. Figure~\ref{fig:res4_4cycles} shows the results. 
We observe that $C_4$ (which is $O(n d_{max}^3)$) is quartic in $n$; e.g., $C_4$ is increased by $2^4 \approx 10$ and $6^4 \approx 10^3$ when $n$ is multiplied by $2$ and $6$, respectively. 
This is because we randomly selected $n$ users from all users and $d_{max}$ is almost proportional to $n$ (though $d_{max} \ll n$). 

\begin{figure}[t]
  \centering
  \includegraphics[width=0.99\linewidth]{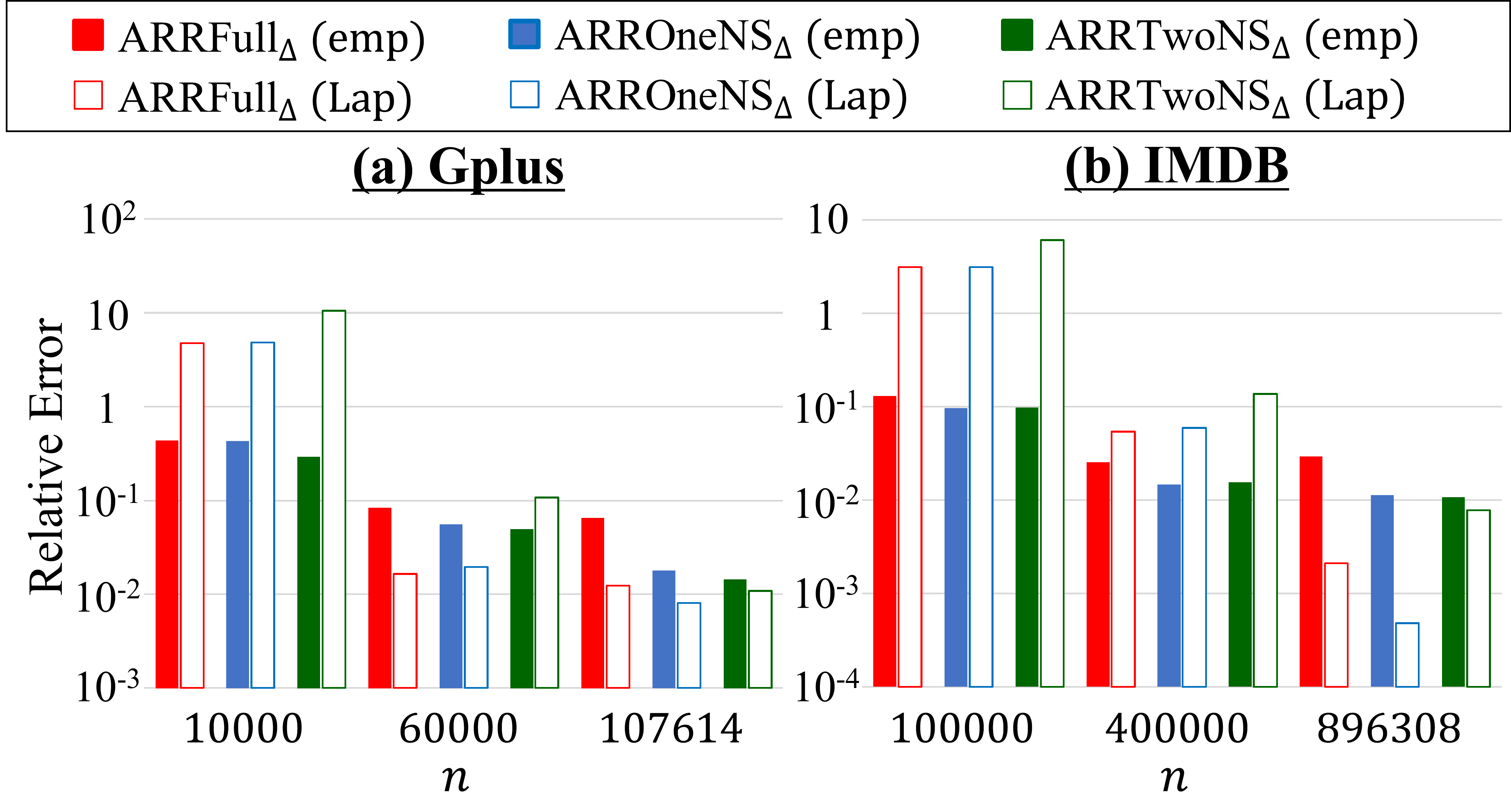}
  \vspace{-5mm}
  \caption{Relative error of empirical estimation and the Laplacian noise in our three algorithms with double clipping ($\epsilon=1$, $\mu^*=10^{-3}$).} 
  \label{fig:res3_emp_Lap}
 \vspace{1mm}
  \centering
  \includegraphics[width=0.99\linewidth]{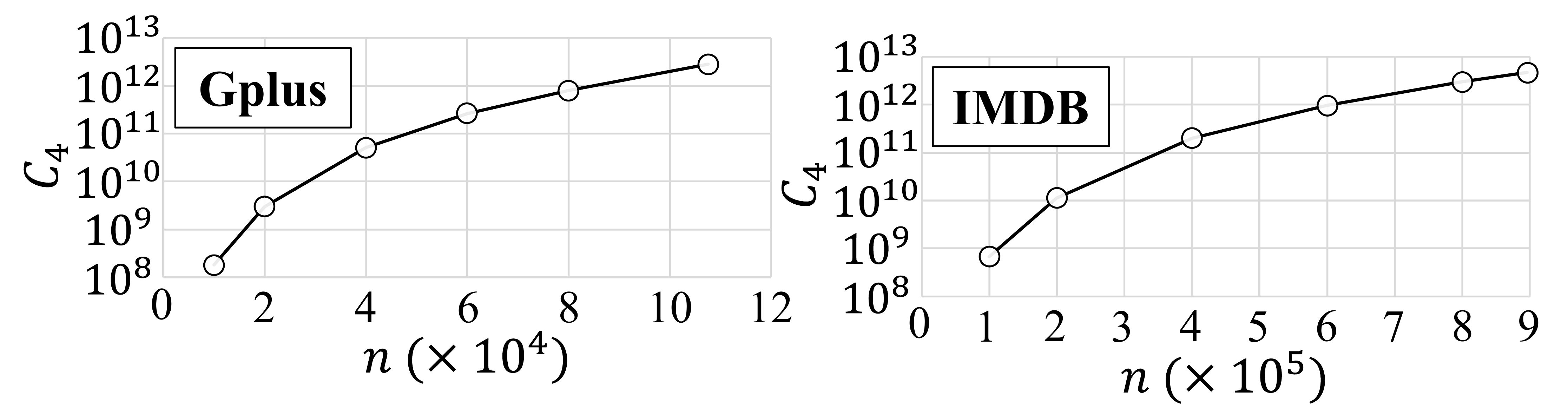}
  \vspace{-5mm}
  \caption{\#$4$-cycles $C_4$.} 
  \label{fig:res4_4cycles}
\end{figure}

Based on Figures~\ref{fig:res3_emp_Lap} and \ref{fig:res4_4cycles}, we can explain 
Figure~\ref{fig:res3_n} as follows. 
As shown in Section~\ref{sub:clip_theoretical_analysis}, the $l_2$ loss of empirical estimation can be expressed as 
$O(n d_{max}^3)$, $O(n d_{max}^2)$, and $O(n d_{max}^2)$ 
in \AlgOne{}, \AlgTwo{}, and \AlgThree{}, respectively. 
The large $l_2$ loss of \AlgOne{} is caused by a large value of $C_4$. 
The expected $l_2$ loss of the Laplacian noise is 
$O(\sum_{i=1}^n \kappa_i^2)$, 
which is much smaller than $O(n d_{max}^2)$. 
Thus, 
as $n$ increases, 
the Laplacian noise becomes relatively very small, 
as shown in Figure~\ref{fig:res3_emp_Lap}. 
Consequently, \AlgTwo{} provides the best performance for large $n$ because it addresses the $4$-cycle issue 
and effectively reduces the global sensitivity. 
This explains the results in Figure~\ref{fig:res3_n}. 
It is also interesting that when $n \approx 10^5$, \AlgOne{} performs the worst in \GPlus{} and almost the same as \AlgTwo{} in \IMDB{} (see Figure~\ref{fig:res3_n}). 
This is because \GPlus{} is more dense than \IMDB{} 
and $C_4$ is much larger in \GPlus{} when $n \approx 10^5$, as in Figure~\ref{fig:res4_4cycles}.

In other words, Figures~\ref{fig:res3_n}, \ref{fig:res3_emp_Lap}, and \ref{fig:res4_4cycles} are consistent with our theoretical results in Section~\ref{sub:clip_theoretical_analysis}. 
From these results, we conclude that \AlgTwo{} is effective especially for 
a large graph (e.g., $n \approx 10^6$) or dense graph (e.g., \GPlus{}) where the number $C_4$ of $4$-cycles is large.

\smallskip
\noindent{\textbf{Summary.}}~~In summary, our answers to our three research questions RQ1-3 
are as follows. 
[RQ1]: Our \AlgTwo{} achieves almost the smallest estimation error in all cases and outperforms the other two, especially for a large graph or dense graph where $C_4$ is large. 
[RQ2]: Our double clipping reduces the estimation error by two or three orders of magnitude. 
[RQ3]: Our entire algorithm (\AlgTwo{} with double clipping) dramatically reduces the communication cost, 
e.g., 
from $6$ hours to $8$ seconds or less (relative error $=0.21$) in \IMDB{} at a $20$ Mbps download rate \cite{YouTube_speed}. 

Thus, triangle counting under edge LDP is now 
much more 
practical. 
In Appendix~\ref{sec:cluster}, we show that the clustering coefficient can also be accurately estimated using our algorithms.

\section{Conclusions}
\label{sec:conclusions}
We proposed
triangle counting algorithms under edge LDP with a small estimation error and small communication cost.
We showed that
our entire algorithms with the 4-cycle trick and double clipping
can dramatically reduce the download cost
of
\cite{Imola_USENIX21}
(e.g., from 6 hours to 8 seconds or less). 

We assumed that each user $v_i$ honestly inputs her neighbor list $\bma_i$, as in most previous work on LDP.
However, recent studies \cite{Cao_USENIX21,Cheu_SP21} show that the estimate in LDP can be skewed by data poisoning attacks.
As future work, we would like to analyze the impact of data poisoning on our algorithms and develop defenses (e.g., detection) against it.

\section*{Acknowledgments}
Kamalika Chaudhuri and Jacob Imola would like to thank ONR under N00014-20-1-2334 and UC Lab Fees under LFR 18-548554  for research support.
Takao Murakami was supported in part by JSPS KAKENHI JP19H04113.
We thank Quentin Hillebrand and Vorapong Suppakitpaisarn for pointing out the sensitivity issue in our double clipping technique, which is described in Appendix~\ref{sec:addendum}.

\bibliographystyle{plain}
\bibliography{main_sshort}

\appendix

\section{Basic Notations}
\label{sec:notations_subgraphs}

Table~\ref{tab:notations} shows the basic notations used in this paper.

\begin{table}[t]
\caption{Basic notations.}
\centering
\hbox to\hsize{\hfil
\begin{tabular}{l|l}
\hline
Symbol		&	Description\\
\hline
$G=(V,E)$   &	    Graph with $n$ users $V$ and edges $E$.\\
$v_i$       &       $i$-th user in $V$ (i.e., $V=\{v_1,\ldots,v_n\}$).\\
$d_{max}$   &       Maximum degree of $G$.\\
$\calG$     &       Set of possible graphs with $n$ users.\\
$f_\triangle(G)$   &  Triangle count in $G$.\\
$\bmA=(a_{i,j})$	    &		Adjacency matrix.\\
$\bma_i$	&		Neighbor list of $v_i$ (i.e., $i$-th row of $\bmA$).\\
$\calR_i$     &       Local randomizer of $v_i$.\\
$M_i$     &       Message sent from the server to user $v_i$.\\
$\mu$     &       Parameter in the ARR.\\
$\mu^*$     &   $=\mu, \mu^2, \mu^3$ in \AlgOne{}, \AlgTwo{}, \\
    &   and \AlgThree{}, respectively.\\
$\td_i$     &   Noisy degree of user $v_i$.\\
$\kappa_i$ &   Clipping threshold of user $v_i$.\\
$\epsilon_0$     &       Privacy budget for edge clipping.\\
$\epsilon_1$     &       Privacy budget for the ARR.\\
$\epsilon_2$     &       Privacy budget for the Laplacian noise.\\
$\epsilon$     &       Total privacy budget.\\
\hline
\end{tabular}
\hfil}
\label{tab:notations}
\end{table}

\section{Comparison with One-Round Algorithms}
\label{sec:one-round}

Below we show that one-round triangle counting algorithms suffer from a prohibitively large estimation error.

First, we note that all of the existing one-round triangle algorithms in \cite{Imola_USENIX21,Ye_ICDE20,Ye_TKDE21} are inefficient and \textit{cannot be directly applied to a large-scale graph} such as \GPlus{} and \IMDB{} in Section~\ref{sec:experiments}.
Specifically, in their algorithms, each user $v_i$ applies Warner's RR to each bit of her neighbor list $\bma_i$ and sends the noisy neighbor list to the server.
Then the server counts the number of noisy triangles, each of which has three noisy edges,
and estimates $f_\triangle(G)$ based on the noisy triangle count. 
The noisy graph $G'$ in the server is dense, and there are $O(n^3)$ noisy triangles in $G'$. 
Thus, the time complexity of the existing one-round algorithms \cite{Imola_USENIX21,Ye_ICDE20,Ye_TKDE21} is $O(n^3)$.
It is also reported in \cite{Imola_USENIX21} that when $n=10^6$, the one-round algorithms  would require about $35$ years even using a supercomputer, due to the enormous number of noisy triangles.

Therefore, we evaluated the existing one-round algorithms by taking the following two steps.
First, we evaluate all the existing algorithms in \cite{Imola_USENIX21,Ye_ICDE20,Ye_TKDE21} using small graph datasets ($n=10000$) and show that the algorithm in \cite{Imola_USENIX21} achieves the lowest estimation error.
Second, we improve the time complexity of the algorithm in \cite{Imola_USENIX21} using the ARR (i.e., edge sampling after Warner's RR) and compare it with our two-rounds algorithms using large graph datasets in Section~\ref{sec:experiments}.

\smallskip
\noindent{\textbf{Small Datasets.}}~~We first evaluated the existing algorithms in \cite{Imola_USENIX21,Ye_ICDE20,Ye_TKDE21} using small datasets.
For both \GPlus{} and \IMDB{} in Section~\ref{sec:experiments}, we first randomly selected $n=10000$ users from all users and extracted a graph with $n$ users.
Then we evaluated the relative error of the following three algorithms:
(i) \textsf{RR (biased)} \cite{Imola_USENIX21,Ye_ICDE20},
(ii) \textsf{RR (bias-reduced)} \cite{Ye_TKDE21}, and
(iii) \textsf{RR (unbiased)} \cite{Imola_USENIX21}.
All of them provide $\epsilon$-edge LDP.

\textsf{RR (biased)} simply uses the number of noisy triangles in the noisy graph $G'$ obtained by Warner's RR
as an estimate of $f_\triangle(G)$.
Clearly, it suffers from a very large bias, as $G'$ is dense.
\textsf{RR (bias-reduced)} reduces
this bias
by using a noisy degree sent by each user.
However, it introduces some approximation to estimate $f_\triangle(G)$, and consequently, it is not clear whether the estimate is unbiased.
We used the mean of the noisy degrees as a representative degree to obtain the optimal privacy budget allocation (see \cite{Ye_TKDE21} for details).
\textsf{RR (unbiased)} calculates an unbiased estimate of $f_\triangle(G)$ via empirical estimation. It is proved that the estimate is unbiased \cite{Imola_USENIX21}.

In all of the three algorithms, each user $v_i$ obfuscates bits for smaller user IDs in her neighbor list $\bma_i$. 
We averaged the relative error over $10$ runs.

\begin{figure}[t]
  \centering
  \includegraphics[width=0.99\linewidth]{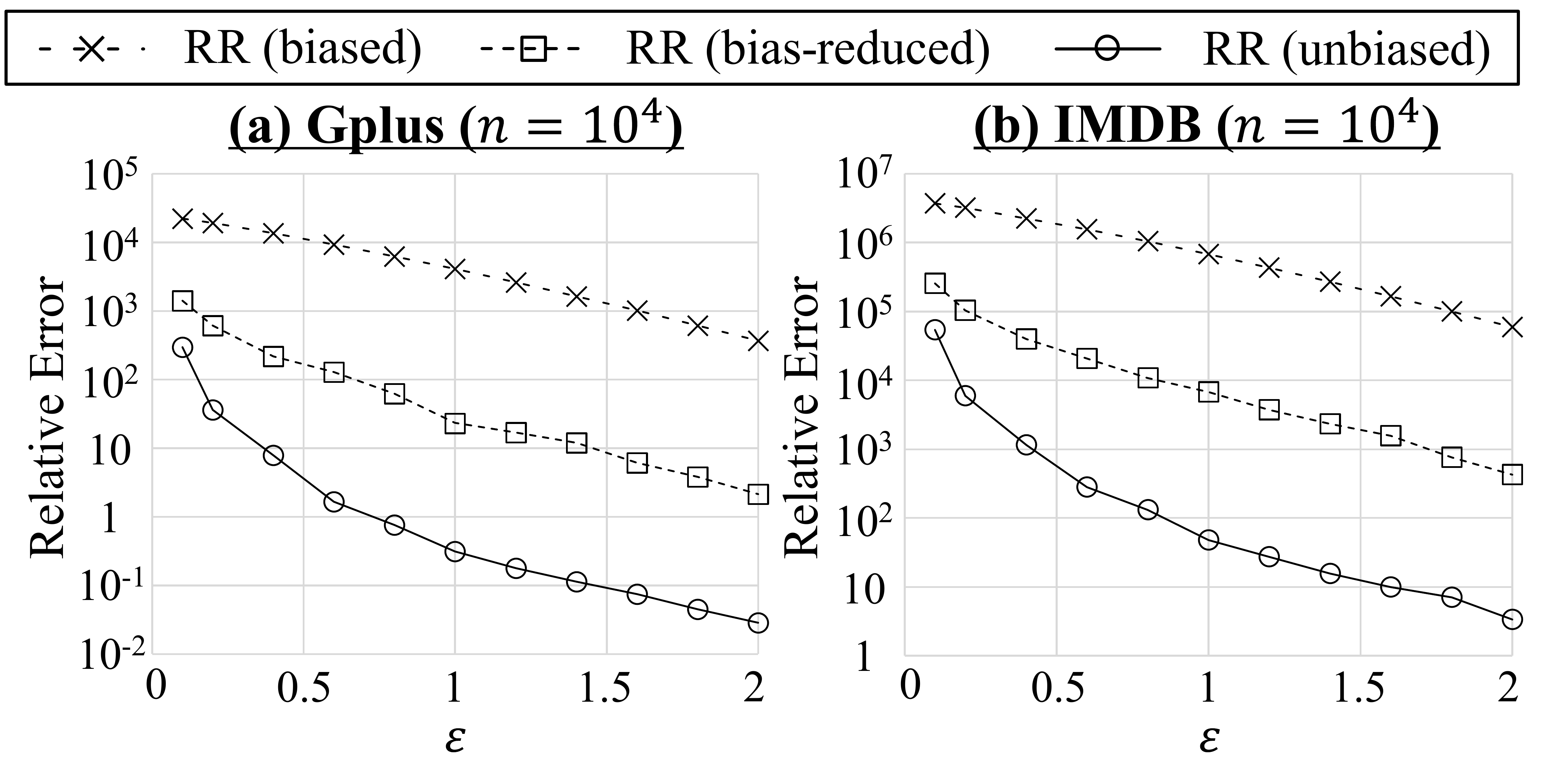}
  \vspace{-4mm}
  \caption{Relative error of one-round algorithms for small datasets ($n=10000$).}
  \label{fig:resB_small}
\end{figure}

Figure~\ref{fig:resB_small} shows the results.
\textsf{RR (bias-reduced)} significantly outperforms \textsf{RR (biased)} and is significantly outperformed by \textsf{RR (unbiased)}.
We believe 
this is caused by the fact that \textsf{RR (bias-reduced)} introduces some approximation and does not calculate an unbiased estimate of $f_\triangle(G)$.

\smallskip
\noindent{\textbf{Large Datasets.}}~~Based on Figure~\ref{fig:resB_small}, we improve the time complexity of \textsf{RR (unbiased)} using the ARR and compare it with our two-rounds algorithms in large datasets.

Specifically, \textsf{RR (unbiased)} counts \textit{triangles}, \textit{$2$-edges} (three nodes with two edges), \textit{$1$-edges} (three nodes with one edge), and \textit{no-edges} (three nodes with no edges) in $G'$ obtained by Warner's RR.
Let $m_3, m_2, m_1, m_0 \in \nnints$ be the numbers of triangles, $2$-edges, $1$-edges, and no-edges, respectively, after applying Warner's RR.
\textsf{RR (unbiased)} calculates an unbiased estimate of $f_\triangle(G)$ from these four values.
Thus, we improve \textsf{RR (unbiased)} by using the ARR, which samples each edge with probability $p_2$ after Warner's RR, and then calculating unbiased estimates of $m_3$, $m_2$, $m_1$, and $m_0$.

Let $\hat{m}_3, \hat{m}_2, \hat{m}_1, \hat{m}_0 \in \reals$ be the unbiased estimates of $m_3$, $m_2$, $m_1$, and $m_0$, respectively. 
Let $m_3^*, m_2^*, m_1^*, m_0^* \in \nnints$ be the number of triangles, 2-edges, 1-edges, no-edges, respectively, after applying the ARR.
Since the ARR samples each edge with probability $p_2$, we obtain:
\begin{align*}
    m_3^* &= \textstyle{p_2^3 \hat{m}_3} \\
    m_2^* &= \textstyle{3p_2^2(1-p_2) \hat{m}_3 + p_2^2 \hat{m}_2} \\
    m_1^* &= \textstyle{3p_2(1-p_2)^2 \hat{m}_3 + 2p_2(1-p_2) \hat{m}_2 + p_2 \hat{m}_1.}
\end{align*}
By these equations,
we obtain:
\begin{align}
    \hat{m}_3 &= \textstyle{\frac{m_3^*}{p_2^3}} \label{eq:hm_3} \\
    \hat{m}_2 &= \textstyle{\frac{m_2^*}{p_2^2} - 3(1-p_2)\hat{m}_3} \label{eq:hm_2} \\
    \hat{m}_1 &= \textstyle{\frac{m_1^*}{p_2} - 3(1-p_2)^2\hat{m}_3 - 2(1-p_2)\hat{m}_2} \label{eq:hm_1} \\
    \hat{m}_0 &= \textstyle{\frac{n(n-1)(n-2)}{6} - \hat{m}_3 - \hat{m}_2 - \hat{m}_1.} \label{eq:hm_0}
\end{align}
Therefore, after applying the ARR to the lower triangular part of $\bmA$, the server counts $m_3^*$, $m_2^*$, $m_1^*$, and $m_0^*$ in $G'$, and then calculates the unbiased estimates $\hat{m}_3$, $\hat{m}_2$, $\hat{m}_1$, and $\hat{m}_0$ by (\ref{eq:hm_3}), (\ref{eq:hm_2}), (\ref{eq:hm_1}), and (\ref{eq:hm_0}), respectively.
Finally, the server estimates $f_\triangle(G)$ from $\hat{m}_3$, $\hat{m}_2$, $\hat{m}_1$, and $\hat{m}_0$ in the same way as \textsf{RR (unbiased)}.
We denote this algorithm by \textsf{ARR (unbiased)}.
The time complexity of \textsf{ARR (unbiased)} is $O(\mu^3 n^3)$, where $\mu$ is the ARR parameter.

We compared \textsf{ARR (unbiased)} with our three algorithms with double clipping using \GPlus{} ($n=107614$) and \IMDB{} ($n=896308$).
For the sampling probability $p_2$, we set $p_2 = 10^{-3}$ or $10^{-6}$.
We averaged the relative error over $10$ runs.

\begin{figure}[t]
  \centering
  \includegraphics[width=0.99\linewidth]{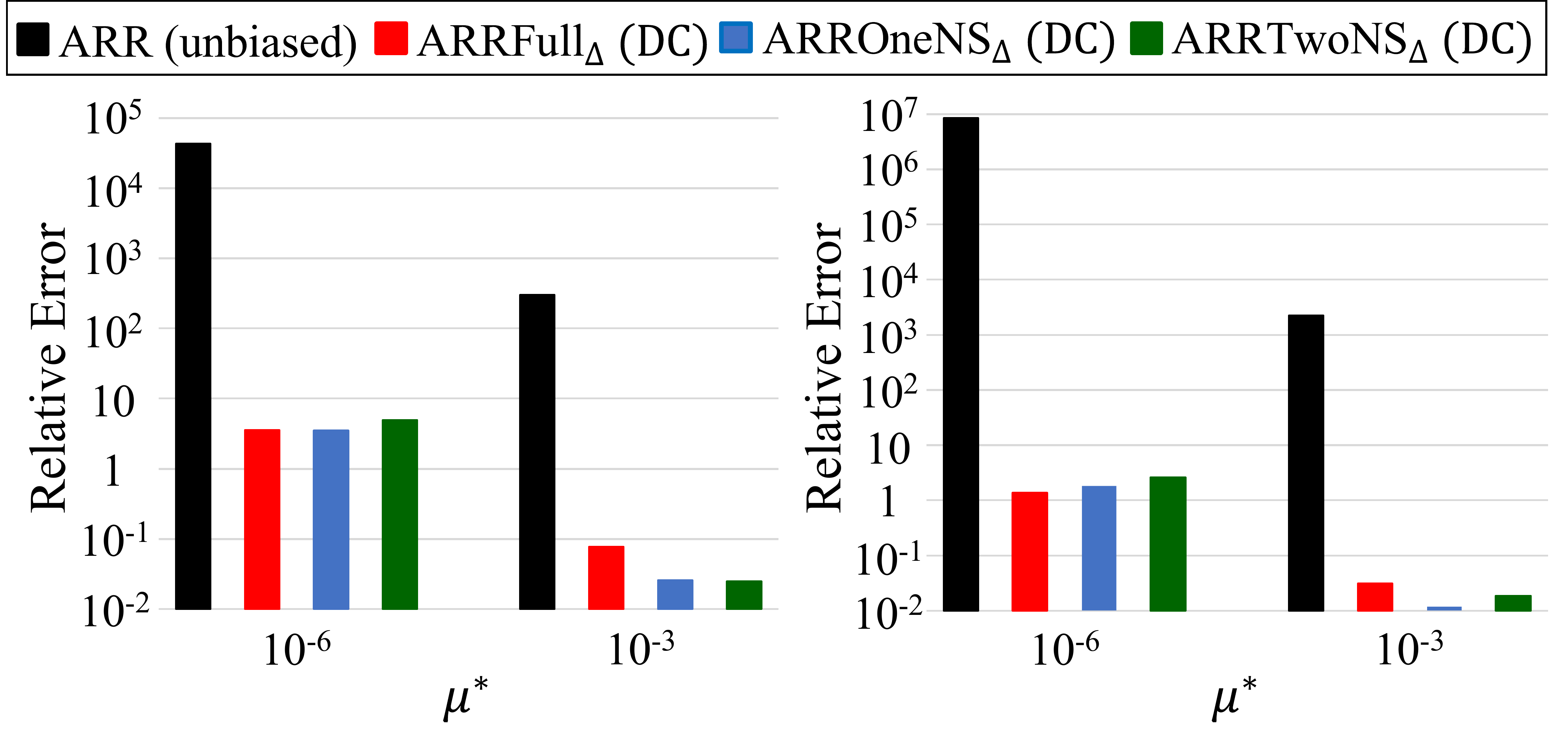}
  \vspace{-4mm}
  \caption{Relative error of the one-round algorithm \textsf{ARR (unbiased)} and our three two-rounds algorithms with double clipping for large datasets ($n=107614$ in \GPlus{}, $n=896308$ in \IMDB{}).}
  \label{fig:resB_large}
\end{figure}

Figure~\ref{fig:resB_large} shows the results, where we set
$\mu^* = 10^{-6}$ or $10^{-3}$.
In \textsf{ARR (unbiased)}, we used $\mu^*$ as 
the ARR parameter $\mu$. 
Thus, we can see \textit{how much the relative error is reduced by introducing an additional round with \AlgOne{}}.
Figure~\ref{fig:resB_large} shows that the relative error of \textsf{ARR (unbiased)} is prohibitively large; i.e., relative error $\gg 1$.
This is because three edges are noisy in any noisy triangle. 
The relative error is significantly reduced by  introducing an additional round
because only one edge is noisy in each noisy triangle at the second round.

In summary, one-round algorithms are far from acceptable in terms of the estimation error for large graphs, and two-round algorithms such as ours are necessary.

\section{Clustering Coefficient}
\label{sec:cluster}
Here we
evaluate the estimation error of the clustering coefficient using our algorithms.

We first estimated a triangle count by using our \AlgTwo{} with double clipping
($\epsilon_0 = \frac{\epsilon}{10}$ and
$\epsilon_1 = \epsilon_2 = \frac{9\epsilon}{20}$)
because it provides the best performance in Figures~\ref{fig:res2_w_Lap_abst}, \ref{fig:res2_w_Lap}, and \ref{fig:res3_n}.
Then we estimated a $2$-star count by
using the one-round $2$-star algorithm in~\cite{Imola_USENIX21} with the
edge clipping in Section~\ref{sec:double_clip}.

Specifically, we calculated a noisy degree $\td_i$ of each user $v_i$ 
by using the edge clipping with the privacy budget $\epsilon_0$.
Then we calculated the number $r_i \in \nnints$ of $2$-stars of which user $v_i$ is a center, and added $\Lap(\frac{\Delta}{\epsilon_1})$ to $r_i$, where $\Delta = \binom{\td_i}{2}$.
Let $\hr_i = r_i + \Lap(\frac{\Delta}{\epsilon_1})$ be the noisy $2$-star of $v_i$.
Finally, we calculated
the sum $\sum_{i=1}^n \hr_i$ as an estimate of the $2$-star count.
This $2$-star algorithm provides ($\epsilon_0 + \epsilon_1$)-edge privacy (see~\cite{Imola_USENIX21} for details).
For the privacy budgets $\epsilon_0$ and $\epsilon_1$, we set $\epsilon_0 = \frac{\epsilon}{10}$ and $\epsilon_1 = \frac{9\epsilon}{10}$.

Based on the triangle and $2$-star counts, we estimated the clustering coefficient as
$\frac{3 \times \hf_\triangle(G)}{\hf_{2\star}(G)}$,
where $\hf_\triangle(G)$ (resp.~$\hf_{2\star}(G)$) is the estimate of the triangle (resp.~$2$-star) count.

\begin{figure}[t]
  \centering
  \includegraphics[width=0.99\linewidth]{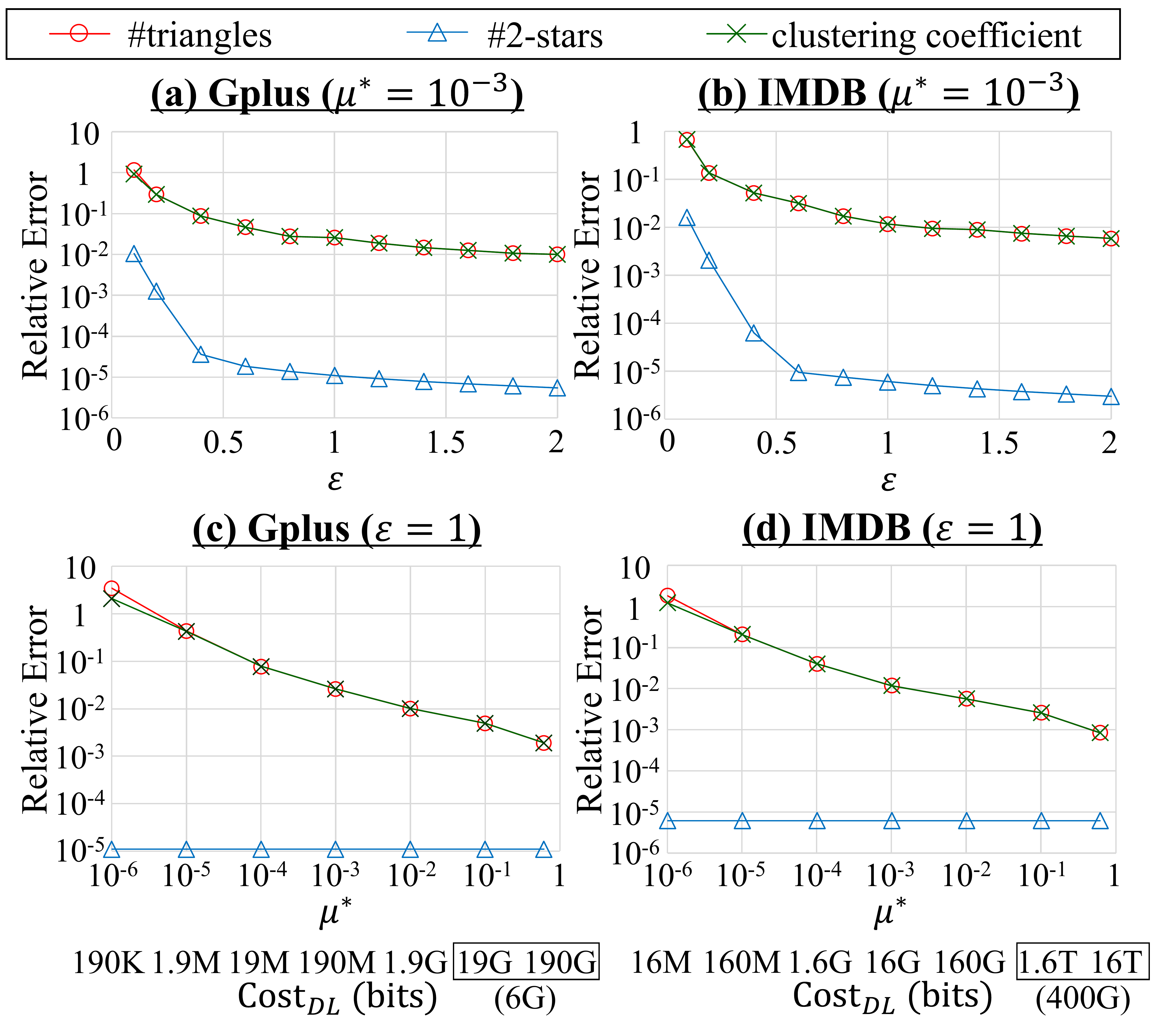}
  \vspace{-4mm}
  \caption{Relative errors of \#triangles, \#$2$-stars, and the clustering coefficient in \AlgTwo{} with double clipping.
  $\CostDL$ is calculated by (\ref{eq:CostDL_F})
  (when $\mu^* \geq 0.1$,
  $\CostDL$ can be $6$ Gbits and $400$ Gbits in \GPlus{} and \IMDB{}, respectively).
  }
  \label{fig:res5_cluster}
\end{figure}

Figure~\ref{fig:res5_cluster} shows the relative errors of the triangle count, $2$-star count, and clustering coefficient.
Note that the relative error of the 2-star count is not changed by changing
$\mu^*$
because the 2-star algorithm does not use the ARR.
Figure~\ref{fig:res5_cluster} shows that the relative error of the $2$-star count is much smaller than that of the triangle count.
This is because each user can count her 2-stars locally (whereas she cannot count her triangles), as described in Section~\ref{sec:intro}.
Consequently, the relative error of the clustering coefficient is almost the same as that of the triangle count, as the denominator $\hf_{2\star}(G)$ in the clustering coefficient is very accurate.

Note that the clustering coefficient requires the privacy budgets for
calculating both $\hf_\triangle(G)$ and $\hf_{2\star}(G)$
(in Figure~\ref{fig:res5_cluster}, $2\epsilon$ in total).
However, we can accurately
calculate $\hf_{2\star}(G)$
with a very small privacy budget, as shown in Figure~\ref{fig:res5_cluster}.
Thus, we can accurately estimate the clustering coefficient with almost the same privacy budget as
the triangle count
by assigning a very small privacy budget (e.g., $\epsilon=0.1$ or $0.2$) for
$\hf_{2\star}(G)$.

In summary, we can accurately estimate the clustering coefficient as well as the triangle count under edge LDP by using our \AlgTwo{} with double clipping.

\arxiv{
\section{Experiments Using the Barab\'{a}si-Albert Graph Datasets}
\label{sec:BAmodel}
In Section~\ref{sec:experiments}, we evaluated our algorithms using two real datasets.
Below we also evaluate our algorithms using a synthetic graph based on the BA (Barab\'{a}si-Albert) graph model~\cite{NetworkScience}, which has a power-law degree distribution.

In the BA graph model, a graph of $n$ nodes is generated by attaching new nodes one by one.
Each new node is connected to $m \in \nnints$ existing nodes, 
and each edge is connected to an existing node with probability proportional to its degree.
We used NetworkX \cite{Hagberg_SciPy08}, a Python package for complex networks, to generate synthetic graphs based on the BA graph model.

We generated a graph $G=(V,E)$ with the same number of nodes as
\GPlus{}; i.e., $n=107614$ nodes.
For the number $m$ of edges per node, we set $m=50$, $114$, or $500$.
Using these graphs, we evaluated our three algorithms with double clipping.
We set parameters in the same as Section~\ref{sec:experiments}; i.e.,
$\alpha = 150$, $\beta = 10^{-6}$, $\epsilon_0 = \frac{\epsilon}{10}$, and $\epsilon_1 = \epsilon_2 = \frac{9\epsilon}{20}$.
For each algorithm, we averaged the relative error over $10$ runs.

Figure~\ref{fig:resA_BAGraph} shows the results, where $\epsilon=1$ and
$\mu^* = 10^{-3}$.
We observe that \AlgTwo{} significantly outperforms \AlgOne{} and \AlgThree{} when $m=500$, and that \AlgTwo{} performs almost the same as \AlgOne{} when $m=50$ or $114$.

To examine the reason for this, we also decomposed the estimation error into two components (the first error by empirical estimation and the second error by the Laplacian noise) in the same way as Figure~\ref{fig:res3_emp_Lap}.
Figure~\ref{fig:resA_BAGraph_emp_Lap} shows the results.
We also show in Table~\ref{tab:resA_4cycles} the number $C_4$ of $4$-cycles in each BA graph ($m=50$, $114$, or $500$) and \GPlus{}.

From Figure~\ref{fig:resA_BAGraph_emp_Lap} and Table~\ref{tab:resA_4cycles}, we can explain Figure~\ref{fig:resA_BAGraph} as follows.
The BA graphs with $m=50$ and $114$ have a much smaller number $C_4$ of $4$-cycles than \GPlus{}, as shown in Table~\ref{tab:resA_4cycles}.
Consequently,
the Laplacian noise is relatively large and dominant for these two graphs, as shown in Figure~\ref{fig:resA_BAGraph_emp_Lap}.
In particular,
the Laplacian noise is the largest in \AlgThree{} because it cannot effectively reduce the global sensitivity by double clipping, as explained in Section~\ref{sec:double_clip}.
In contrast, the BA graph with $m=500$ has a larger number $C_4$ of $4$-cycles than \GPlus{}, and therefore the Laplacian noise is not dominant (except for \AlgThree{}).
This explains the results in Figure~\ref{fig:resA_BAGraph}.

These results
show that \AlgTwo{} outperforms \AlgOne{} especially when the number $C_4$ of $4$-cycles is large.
As we have shown in Section~\ref{sec:experiments} and Appendix~\ref{sec:BAmodel}, $C_4$ is large in a large graph (e.g., $n \approx 10^6$) or dense graph (e.g., \GPlus{}, BA graph with $m=500$).

\begin{figure}[t]
  \centering
  \includegraphics[width=0.99\linewidth]{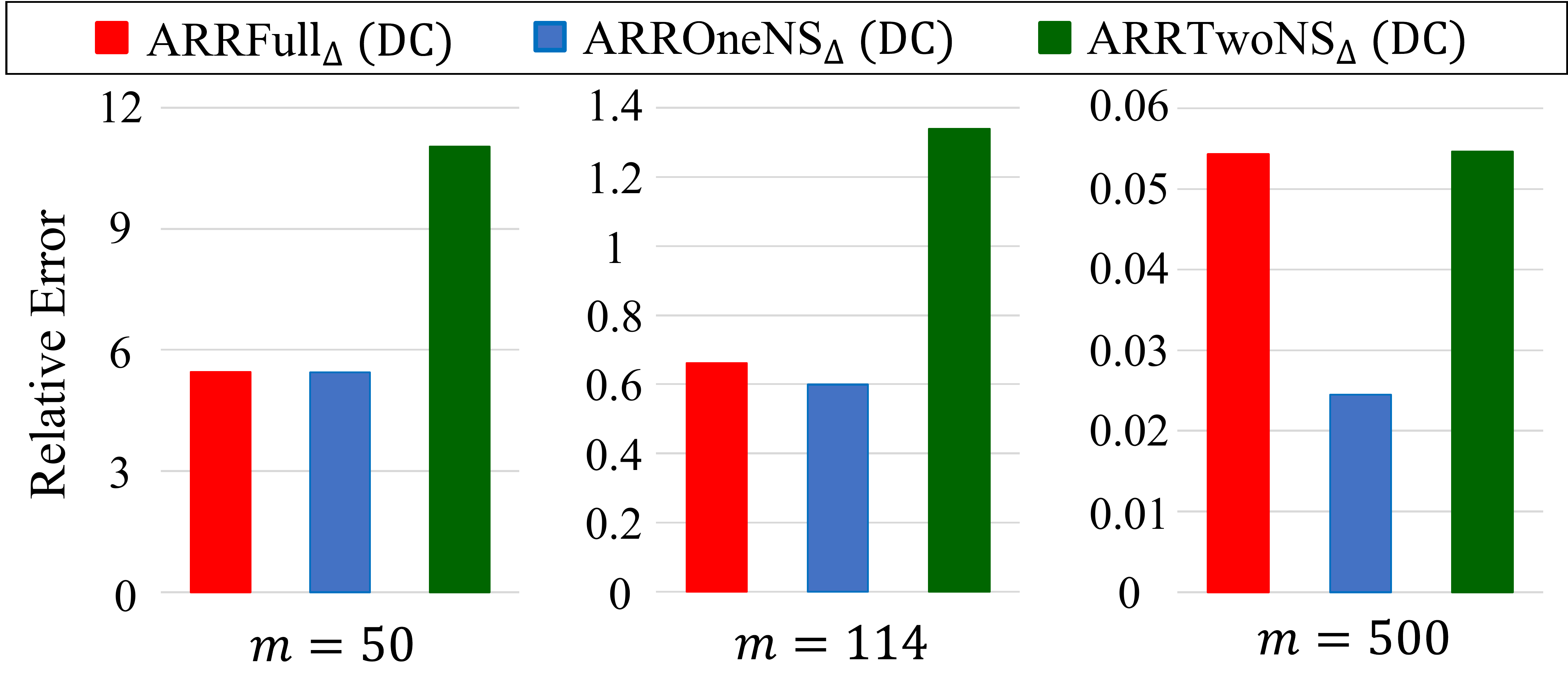}
  \vspace{-4mm}
  \caption{Relative error of our three algorithms with double clipping in the BA graphs ($n=107614$, $\epsilon=1$, $\mu^* = 10^{-3}$).}
  \label{fig:resA_BAGraph}
\vspace{5mm}
  \centering
  \includegraphics[width=0.95\linewidth]{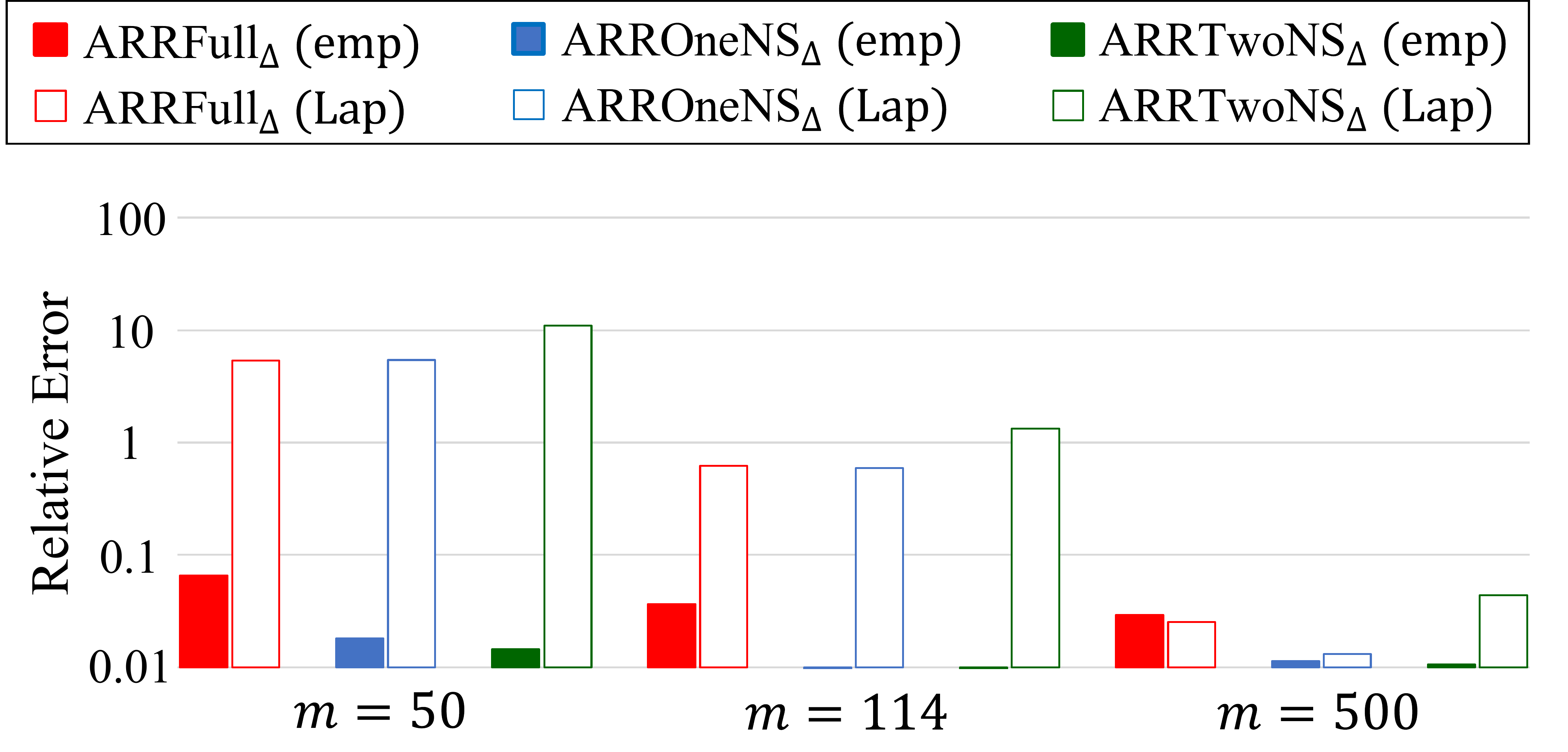}
  \vspace{-4mm}
  \caption{Relative error of empirical estimation and the Laplacian noise in our three algorithms with double clipping in the BA graphs ($n=107614$, $\epsilon=1$, $\mu^* = 10^{-3}$).}
  \label{fig:resA_BAGraph_emp_Lap}
\end{figure}

\section{Edge Clipping and Noisy Triangle Clipping}
\label{sec:EC_DC}
In Section~\ref{sec:experiments}, we showed that our double clipping significantly reduces the estimation error.
To investigate the effect of
edge clipping and noisy triangle clipping independently, we also performed the following ablation study.

We evaluated our three algorithms with only
edge clipping; i.e., each user calculates a noisy degree $\td_i$ (possibly with edge clipping) and then adds $\Lap(\frac{\td_i}{\epsilon_2}$) to her noisy triangle count.
Then we compared them with our algorithms with double clipping and without clipping.

\begin{table}[t]
\caption{\#$4$-cycles $C_4$ in each graph dataset.}
\centering
\hbox to\hsize{\hfil
\begin{tabular}{l|l|l|l|l}
\hline
		&	$m=50$   &  $m=114$   &  $m=500$   &  \GPlus{}\\
\hline
$C_4$   &   $8.8 \times 10^8$ &  $1.7 \times 10^{10}$ &   $3.1 \times 10^{12}$ &   $2.8 \times 10^{12}$\\
\hline
\end{tabular}
\hfil}
\label{tab:resA_4cycles}
\end{table}

\begin{figure}[t]
  \centering
  \includegraphics[width=0.99\linewidth]{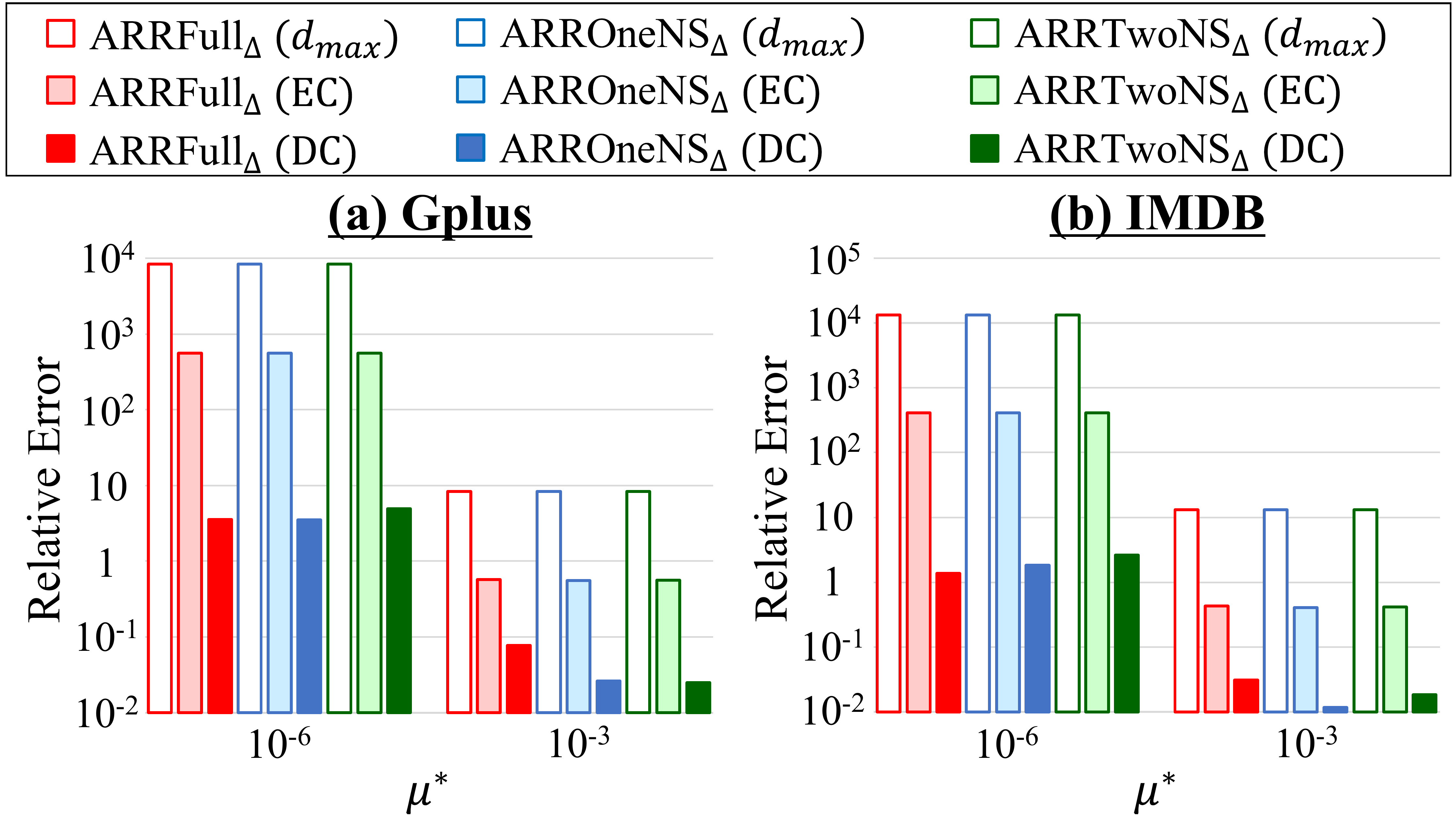}
  \vspace{-4mm}
  \caption{Relative error of our three algorithms without clipping (``$d_{max}$''), with only edge clipping (``EC''), and double clipping (``DC'') when $\epsilon=1$ and $\mu^* = 10^{-6}$ or $10^{-3}$ ($n=107614$ in \GPlus{}, $n=896308$ in \IMDB{}).}
  \label{fig:res2_w_Lap_EC}
\end{figure}

Figure~\ref{fig:res2_w_Lap_EC} shows the results, where $\epsilon=1$, $\mu^* = 10^{-6}$ or $10^{-3}$, and ``EC'' represents our algorithms with only
edge clipping.
We observe that ``EC''
outperforms ``$d_{max}$'' (w/o clipping) and is outperformed by ``DC'' (double clipping).
The difference between ``EC'' and ``DC'' is significant especially when $\mu^* = 10^{-6}$ (``DC'' is smaller than $\frac{1}{100}$ of ``EC'').
This is because our noisy triangle clipping reduces the global sensitivity by using a small value of
$\mu^*$. 
From Figure~\ref{fig:res2_w_Lap_EC}, we conclude that each component (i.e.,
edge clipping, noisy triangle clipping) is essential in our double clipping.

\section{Proof of Proposition~\ref{prop:seq_comp_edge_LDP}}
\label{sec:proof_seq_comp_edge_LDP}
Let $\calR_i(\bma_i) = (\calR_i^1(\bma_i), \calR_i^2(M_i)(\bma_i))$ be the
randomizer used by user $v_i$ in the composition. To establish that
$\calR_i(\bma_i)$ satisfies $\epsilon$-edge LDP for every $v_i \in V$, we will
prove that~\eqref{eq:edge_LDP} holds for $\calR_i(\bma_i)$. To do this, first write
\begin{align*}
  &\;\Pr[(\calR_i^1(\bma_i), \calR_i^2(M_i)(\bma_i)) = (r_i^1, r_i^2)] = \\
  &\qquad\Pr[\calR_i^1(\bma_i) = r_i^1]\Pr[\calR_i^2(M_i)(\bma_i) = r_i^2 | \calR_i^1(\bma_i) = r_i^1] \\
  &\qquad\Pr[\calR_i^1(\bma_i) = r_i^1]\Pr[\calR_i^2(M_i)(\bma_i) = r_i^2 | M_i = \lambda_i(r_i^1)],
\end{align*}
where the last equality follows because $M_i = \lambda_i(\calR_i^1(\bma_i))$ for a post-processing algorithm $\lambda_i$.
Notice that the same equalities are true when we replace $\bma_i$ with $\bma_i'$.
Because $\calR_i^1$ and $\calR_i^2(M_i)$ (for any $M_i$) satisfy $\epsilon_1, \epsilon_2$-edge LDP, respectively,
we have
\begin{align*}
  &\;\Pr[\calR_i^1(\bma_i) = r_i^1]\Pr[\calR_i^2(M_i)(\bma_i) = r_i^2 | M_i = \lambda_i(r_i^1)] \\
  &\qquad\leq e^{\epsilon_1}\Pr[\calR_i^1(\bma_i') = r_i^1]e^{\epsilon_2}\Pr[\calR_i^2(\bma_i') = r_i^2 | M_i = \lambda_i(r_i^1)] \\
  &\qquad= e^{\epsilon_1 + \epsilon_2} \Pr[(\calR_i^1(\bma_i'), \calR_i^2(M_i)(\bma_i')) = (r_i^1, r_i^2)].
\end{align*}
This establishes the result. \qed

\section{Proof of Statements in Section~\ref{sec:algorithms}}
\label{sec:proof_algorithms}
\subsection{Proof of Theorem~\ref{thm:privacy_algorithms}}
Let $\bma_i, \bma'_i \in \{0,1\}^n$ be two neighbor lists that differ in one bit.
Let $t'_i$, $s'_i$, and $w'_i$ be respectively the values of $t_i$ (line 11 of Algorithm~\ref{alg:unify}), $s_i$ (line 12), and $w_i$ (line 13) when the neighbor list of user $v_i$ is $\bma'_i$.
Let $\Delta w_i = |w'_i - w_i|$.
Then we have $t'_i - t_i \in [0,d_{max}]$ and $s'_i - s_i \in [0,d_{max}]$, and therefore $\Delta w_i = |(t'_i - t_i) - \mu^* \rho(s'_i - s_i)| \leq d_{max}$.

Since we add
$\Lap\left(\frac{d_{max}}{\epsilon_2}\right)$
to $w_i$, the second round provides $\epsilon_2$-edge LDP.
The first round uses $ARR_{\epsilon_1,\mu}$ and provides $\epsilon_1$-edge LDP.
Thus, by sequential composition (Proposition~\ref{prop:seq_comp_edge_LDP}),
Algorithm~\ref{alg:unify} provides ($\epsilon_1+\epsilon_2$)-edge LDP in total.
It also provides ($\epsilon_1+\epsilon_2$)-relationship DP because it uses only the lower-triangular part of $\bmA$ (Proposition~\ref{prop:edge_LDP_entire_edge_LDP}).
\qed

\subsection{Proof of Theorem~\ref{thm:l2loss_algorithms}}
\label{sub:prrof_l2loss_algorithms}
\noindent{\textbf{Unbiased Estimators.}}~~First, we will show that $\hf_\triangle(G)$ satisfies
$\E[\hf_\triangle(G)] = f_\triangle(G)$ for all $G \in \calG$, in \AlgOne{}, \AlgTwo{}, \AlgThree{}.
Regardless of algorithm, we have
\begin{align}
  &\;\E[\hf_\triangle(G)] \nonumber \\
  &= \frac{1}{\mu^*(1-\rho)}\sum_{i=1}^n \E[w_i] \nonumber \\
  &= \frac{1}{\mu^*(1-\rho)}\sum_{i=1}^n \E[t_i - \mu^* \rho s_i] \nonumber \\
  &= \frac{1}{\mu^*(1-\rho)}\sum_{i=1}^n \sum_{\substack{1 \leq j < k < i \leq n \\ a_{i,j} = a_{i,k} = 1}} \E[\textbf{1}_{(v_j, v_k) \in M_i} - \mu^* \rho],
  \label{eq:unbias_1}
\end{align}
where $\rho = e^{-\epsilon_1}$, and the quantites $\mu^*, t_i, s_i$ are defined in
Algorithm~\ref{alg:unify}. Given that $a_{i,j} = a_{i,k} = 1$, we have that
$\Pr[(v_i, v_j) \in E'] = \Pr[(v_i, v_k) \in E'] = \mu$ by definition of ARR.
Furthermore,
$\Pr[(v_j, v_k) \in E'] = \mu$ if $a_{j,k} = 1$, and $\Pr[(v_j, v_k) \in E'] = \mu\rho$ otherwise.
Examining~\eqref{eq:M_i_I},~\eqref{eq:M_i_II}, and~\eqref{eq:M_i_III}, we have
\[
  \Pr[(v_j, v_k) \in M_i] =
  \begin{cases}
    \mu^* & a_{j,k} = 1 \\
    \mu^*\rho & a_{j,k} = 0
  \end{cases}
\]
for all the three algorithms (note that $\mu^* = \mu$, $\mu^2$, and $\mu^3$ in \AlgOne{}, \AlgTwo{}, \AlgThree{}, respectively).
Thus, $\E[\textbf{1}_{(v_j, v_k) \in M_i}] = \mu^* (\rho + (1-\rho) a_{j,k})$
($= \mu^*$ if $a_{i,j}=1$ and $\mu^* \rho$ if $a_{i,j}=0$).
Plugging into~\eqref{eq:unbias_1}, we have
\begin{align*}
  &\;\E[\hf_\triangle(G)] \\
  &=
  \frac{1}{\mu^*(1-\rho)}\sum_{i=1}^n \sum_{\substack{1 \leq j < k < i \leq n \\ a_{i,j} = a_{i,k} =
  1}} \mu^*(\rho + (1-\rho)a_{j,k}) - \mu^* \rho \\
  &= \frac{1}{\mu^*(1-\rho)}\sum_{i=1}^n \sum_{\substack{1 \leq j < k < i \leq n \\ a_{i,j} = a_{i,k} =
  1}} \mu^*(1-\rho)a_{j,k} \\
  &= \sum_{i=1}^n \sum_{\substack{1 \leq j < k < i \leq n \\ a_{i,j} = a_{i,k} =
  1}} a_{j,k} \\
  &= f_\triangle(G).
\end{align*} 
Thus, $\hf_\triangle(G)$ is unbiased. \qed

\smallskip
\noindent{\textbf{$l_2$ Loss of Estimators.}}~~Using bias-variance decomposition, we have
for any graph $G$,
\begin{align*}
  l_2^2(f_\triangle(G), \hf_\triangle(G)) &= \E[(\hf_\triangle(G) - f_\triangle(G))^2] \\
  &= \E[(f_\triangle(G) - \E[\hf_\triangle(G)])^2] + \V[\hf_\triangle(G)] \\
  &= \V[\hf_\triangle(G)],
\end{align*}
where the last step follows because $\hf$ is unbiased.
Since $\hf_\triangle(G) = \frac{1}{\mu^*(1-\rho)} \sum_{i=1}^n \hw_i$, we have
\begin{align}
  &\;\V[\hf_\triangle(G)] \nonumber \\
  &= \frac{1}{(\mu^*)^2(1-\rho)^2}\V\left[\sum_{i=1}^n \hw_i\right] \nonumber \\
  &= \frac{1}{(\mu^*)^2(1-\rho)^2}\V\left[\sum_{i=1}^n w_i + \Lap\left(\frac{d_{max}}{\epsilon_2} \right)\right] \nonumber \\
  &= \frac{1}{(\mu^*)^2(1-\rho)^2}\left(\V\left[\sum_{i=1}^n w_i\right] +
  n\V\left[\Lap\left(\frac{d_{max}}{\epsilon_2}\right)\right]\right) \nonumber \\
  &= \frac{1}{(\mu^*)^2(1-\rho)^2}\left(\V\left[\sum_{i=1}^n w_i\right] +
  2n\frac{d_{max}^2}{\epsilon_2^2}\right), \label{eq:inter_var}
\end{align}
where the
fourth
line follows from independence of the added of Laplace noise.
Now, we will prove bounds on $\V[\sum_{i=1}^nw_i]$ for \AlgOne{}, \AlgTwo{}, and
\AlgThree{}. In the following, we let $S_k(G)$ be the number of $k$-stars in
$G$ and $C_4(G)$ be the number of $4$-cycles in $G$

\smallskip
\noindent{\textbf{Bounding the Variance in \AlgOne{}.}}~~In \AlgOne{}, $M_i$ is defined by \eqref{eq:M_i_I}. Thus we have
\begin{align*}
  \sum_{i=1}^n w_i &= \sum_{i=1}^n \sum_{\substack{1 \leq j < k < i \leq n \\ a_{i,j} = 1, a_{i,k} = 1}} \textbf{1}_{(v_j, v_k) \in M_i} \\
  &= \sum_{1 \leq j < k \leq n} \sum_{k < i \leq n} a_{i,j} a_{i,k} \textbf{1}_{(v_j, v_k) \in E'} \\
  &= \sum_{1 \leq j < k \leq n} \textbf{1}_{(v_j, v_k) \in E'}\sum_{k < i \leq n} a_{i,j} a_{i,k}
\end{align*}
For $j < k$, we introduce the constant $c_{jk} = \sum_{k < i \leq n} a_{i,j} a_{i,k}$. Notice that
for any choice of $j$ and $k$, $\textbf{1}_{(v_j, v_k) \in E'}$ for $1 \leq j \leq k$ are mutually independent,
because all edges in $E'$ are mutually independent. Furthermore, the indicator
$\textbf{1}_{(v_j, v_k) \in E'}$ is a Bernoulli random variable with parameter
either $\mu$ or $\mu\rho$, and in either case, $\V[\textbf{1}_{(v_j, v_k) \in
E'}] \leq \mu$. We have
\begin{align*}
  \V\left[\sum_{i=1}^n w_i\right]
  &= \V\left[\sum_{1 \leq j < k \leq n} \textbf{1}_{(v_j, v_k) \in E'}c_{jk}\right] \\
  &= \sum_{1 \leq j < k \leq n} \V[\textbf{1}_{(v_j, v_k) \in E'}c_{jk}] \\
  &= \sum_{1 \leq j < k \leq n} \mu c_{jk}^2.
\end{align*}
By Lemma~\ref{lem:c_ij_4cycle_2star} (which is shown at the end of Appendix~\ref{sub:prrof_l2loss_algorithms}), we have $\sum_{1 \leq j < k \leq n} c_{jk}^2 \leq 2 C_4(G) + S_2(G)$.
Plugging into~\eqref{eq:inter_var} (and substituting $\mu^* = \mu$), we obtain
\begin{align*}
  \V[\hf_\triangle(G)]
  &= \frac{1}{(1-\rho)^2}\left(\frac{1}{\mu}(2C_4(G) + S_2(G)) +
  2n\frac{d_{max}^2}{\mu^2\epsilon_2^2}\right). \\
\end{align*}
This establishes the result. \qed

\smallskip
\noindent{\textbf{Bounding the Variance in \AlgTwo{}.}}~~In \AlgTwo{}, for a fixed
$v_i \in V$, we have $(v_j, v_k) \in M_i$ if and only if
$j < k < i$, $(v_j, v_k) \in E'$, and $(v_i, v_k) \in E'$ from~\eqref{eq:M_i_II}. Thus,
\begin{align*}
\sum_{i=1}^n w_i &= \sum_{i=1}^n~\sum_{\substack{1 \leq j < k < i \leq n \\
a_{i,j} = 1, a_{i,k} = 1}} \textbf{1}_{(v_j, v_k) \in M_i} \\
&= \sum_{1 \leq j < k < i \leq n}
\textbf{1}_{(v_j, v_k) \in E'} a_{i,j}a_{i,k}\textbf{1}_{(v_i, v_k) \in E'}
\end{align*}

Define the random variable $F_{ijk} = \textbf{1}_{(v_j, v_k) \in
E'} \textbf{1}_{(v_i, v_k) \in E'}$. Substituting, we have
\begin{align*}
  \sum_{i=1}^n w_i &= \sum_{1 \leq j < k < i \leq n} a_{i,j} a_{i,k}F_{ijk} \\
  \V\left[\sum_{i=1}^n w_i\right]
  &= \sum_{\substack{1 \leq j < k < i \leq n \\ 1 \leq j' < k' <
  i' \leq n}} a_{i,j} a_{i,k} a_{i',j'} a_{i',k'} \cov(F_{ijk}, F_{i'j'k'}).
\end{align*}

The set $\{i,i',j,j',k,k'\}$ when $j < k < i$ and $j' < k' < i'$ can take between
three and six distinct values.
If $\{i,i', j,j', k,k'\}$ takes five or more distinct values, then $F_{ijk}$ and
$F_{i'j'k'}$ involve distinct edges and are independent random variables. Thus,
$\cov(F_{ijk}, F_{i'j'k'}) = 0$. Otherwise, the events are not independent, and we will
use the upper bound $\cov(F_{ijk}, F_{i'j'k'}) \leq \E[F_{ijk}F_{i'j'k'}]
=
\Pr[F_{ijk} = F_{i'j'k'} = 1]$, which holds because the $F_{ijk}$ have domain $\{0,1\}$.
Thus,
\begin{align*}
&\;\V\left[\sum_{i=1}^n w_i \right] \\
&\leq
  \sum_{\substack{1 \leq j < k < i \leq n \\ 1 \leq j' < k' <
  i' \leq n \\ |\{i,j,k,i',j',k'\}| = 3 \text{ or } 4}} a_{i,j} a_{i,k} a_{i',j'} a_{i',k'} \Pr[F_{ijk} = F_{i'j'k'} = 1].
\end{align*}

Define a choice of $(i,j,k,i',j',k') \in [n]^6$ to be a \emph{valid} choice if
$j < k < i$, $j' < k' < i'$ (ordering requirement),
$a_{i,k} = a_{i,j} = a_{i',j'} = a_{i', k'} = 1$ (edge requirement), and
$3 \leq \{i,i',j,j',k,k'\} \leq 4$ (size requirement). We can write the above sum as
\[
  \V\left[\sum_{i=1}^n w_i \right] \leq
  \sum_{i,i',j,j',k,k'\text{ valid}} \Pr[F_{ijk} = F_{i'j'k'} = 1].
\]

In the above sum, each valid choice implies there exists
a subgraph of $G$ that associates
each of $\{v_i, v_{i'}, v_j, v_{j'}, v_k, v_{k'}\}$ with one node in the subgraph and contains edges $(v_i, v_j), (v_i, v_k), (v_{i'}, v_{j'}), (v_{i'}, v_{k'})$.
Conversely,
each
subgraph of $G$ of three or four nodes can have a certain
number of valid choices mapped to it. For each
subgraph, we now go over the number of possible valid choices that can map to it:

\begin{enumerate}
    \item 4-cycle: By ordering, either $i$ or $i'$ is mapped to the node of the 4-cycle
    with maximal
    index. WLOG, suppose $i$ is mapped to this node. By edge requirements, $i'$ has an index equal
    to the opposite node in the $4$-cycle. By ordering, there is now one
    way to map $j,j',k,k'$. Thus, each 4-cycle can be associated with
    $\mathbf{2}$ valid choices.
    \item 3-path: Consider the middle node $v_\ell$ in the 3-path (path graph on 4 nodes) that has
    the second-largest index.
    By ordering, either $i=\ell$ or $i' = \ell$.
    WLOG, suppose $i = \ell$. Then, by the edge requirement $a_{i',j'} = a_{i', k'} = 1$, the middle node other than $v_\ell$ is
    $i'$. However, this means either $i = j'$ or $i = k'$, and we have $j' > i'$ or $k' > i'$.
    This violates the order requirement, and therefore there are $\textbf{0}$ valid choices.
    \item 3-star: By edge requirement, both $i,i'$ map to the central node in the 3-star.
    $j,j',k,k'$ can map to the other three nodes in any way that satisfies ordering.
    Suppose the three nodes are $v_{a}, v_{b}, v_c$ with $a < b < c$.
    Only one of
    the three nodes can be duplicated in this mapping.
    For example, if
    $a$ is duplicated, then both
    $j$ and $j'$
    map to $v_a$, and there are two remaining choices
    for how to map
    $k$ and $k'$.
    Thus,
    each $S_3$ can
    be associated with
    $\mathbf{6}$ valid choices.
    \item Triangle: Either $i$ or $i'$ maps to the maximal node in the triangle
    by ordering. WLOG, suppose $i$ does. By the edge requirement, $i'$ maps
    to a different node. However, this means $i = k'$ or $i=j'$, so $k' > i'$,
    contradicting ordering. Thus, there are $\textbf{0}$ valid choices.
    \item 2-star: By edge requirements, both $i$ and $i'$ map to the central node in the
    2-star. We then have just one mapping for the remaining indices. Thus, there
    is $\textbf{1}$ valid choice.
\end{enumerate}
Figure~\ref{fig:associated_subgraphs} shows an example of two 4-cycles, six 3-stars, and one 2-stars.
We can see that the other possible subgraphs on 3 or 4 nodes are immediately
ruled out because they have too many or too few edges, violating edge requirements.

\begin{figure}[t]
  \centering
  \includegraphics[width=0.99\linewidth]{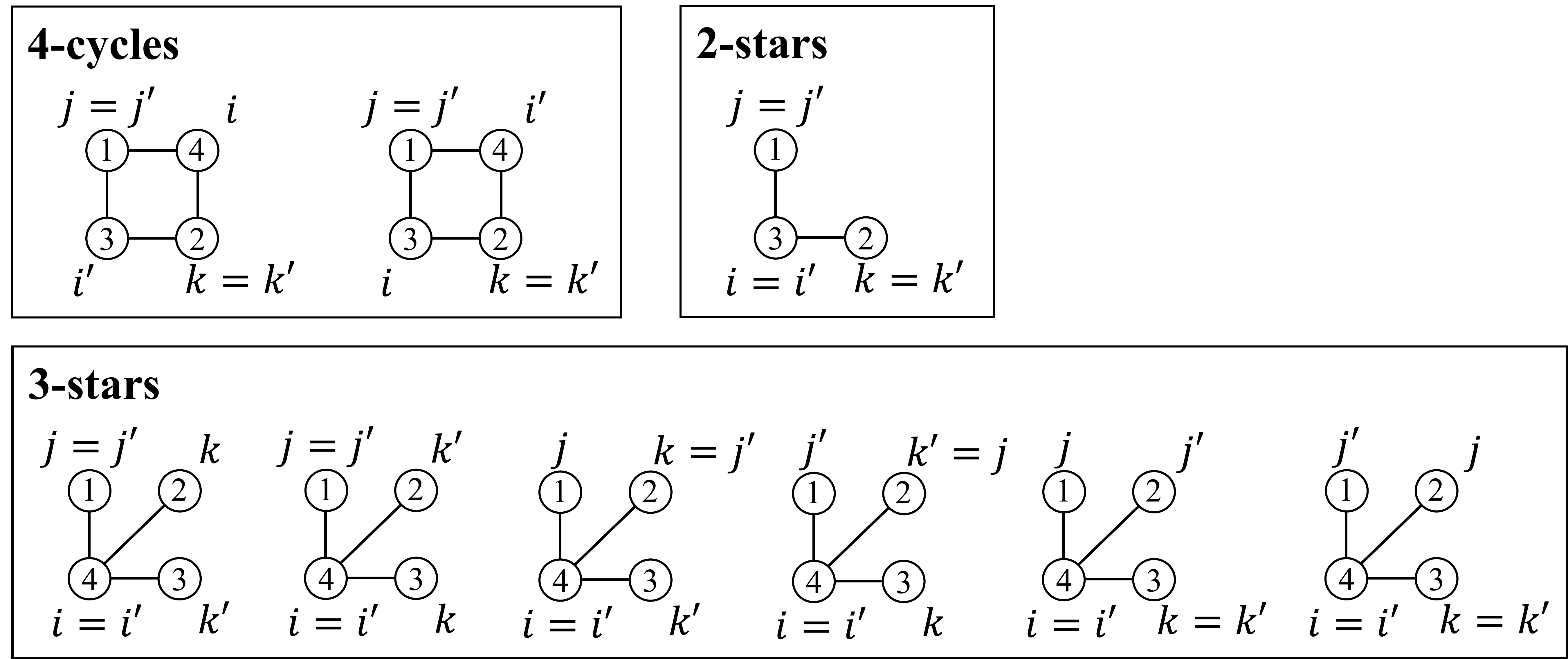}
  \vspace{-4mm}
  \caption{Examples of two 4-cycles, six 3-stars, and one 2-stars.}
  \label{fig:associated_subgraphs}
\end{figure}

In the following, let $P(i,j,k)$ be the event that $F_{ijk} = F_{i'j'k'} = 1$.
Observing Figure~\ref{fig:four-cycle}, we can see that in both possible ways in
which a valid choice maps to a $4$-cycle, then $P(i,j,k)$ holds when at least $3$
edges in $E'$ are present. Each edge in $E'$ is independent,
and is present with probability at most $\mu$. Thus, $\Pr[P(i,j,k)] \leq \mu^3$.
Next, if a valid choice maps to a $3$-star, then $P(i,j,k)$ implies at least $3$
edges in $E'$ are present. Thus, $\Pr[P(i,j,k) = 1] \leq \mu^3$.
Finally, if a valid choice maps to a $2$-star, then $P(i,j,k) = 1$
if and only if $2$ edges in $E'$ are present. Thus, $\Pr[P(i,j,k) = 1] \leq \mu^2$.

Putting this together,
\[
  \V\left[\sum_{i=1}^n w_i\right] \leq 2 C_4(G) \mu^3 + 6 S_3(G) \mu^3 + S_2(G) \mu^2.
\]
Plugging into~\eqref{eq:inter_var}, we get
\begin{align*}
  &\;\V[\hf_\triangle(G)] \\
  &\leq \frac{1}{(1-\rho)^2} \left(\frac{1}{\mu}(2C_4(G) + 6S_3(G)) + \frac{1}{\mu^2} S_2(G) + 2n\frac{d_{max}^2}{\mu^4\epsilon_2^2}\right).
\end{align*}
This establishes the result. \qed

\smallskip
\noindent{\textbf{Bounding the Variance in \AlgThree{}.}}~~In \AlgThree{}, for a fixed
$v_i \in V$, we have $(v_j, v_k) \in M_i$ if and only if
$j < k < i$, $(v_j, v_k) \in E'$, $(v_i, v_k) \in E'$, and $(v_i, v_k) \in E'$
from~\eqref{eq:M_i_III}. Thus,
\begin{align*}
\sum_{i=1}^n w_i &= \sum_{i=1}^n~\sum_{\substack{1 \leq j < k < i \leq n \\
a_{i,j} = 1, a_{i,k} = 1}} \textbf{1}_{(v_j, v_k) \in M_i} \\
&= \sum_{1 \leq j < k < i \leq n}
a_{i,j}a_{i,k}\textbf{1}_{(v_j, v_k) \in E'} \textbf{1}_{(v_i, v_k) \in E'}\textbf{1}_{(v_j, v_k) \in E'}.
\end{align*}

Define the random variable $F_{ijk} = \textbf{1}_{(v_j v_k) \in E'}
\textbf{1}_{(v_i, v_k) \in E'} \allowbreak \textbf{1}_{(v_j, v_k) \in E'}$. Following the same
steps as those in the proof of \AlgTwo{}, we have
\begin{align*}
  \V\left[\sum_{i=1}^n w_i\right]
  &= \sum_{\substack{1 \leq j < k < i \leq n \\ 1 \leq j' < k' <
  i' \leq n}} a_{i,j} a_{i,k} a_{i',j'} a_{i',k'} \cov(F_{ijk}, F_{i'j'k'}) \\
  &\leq
  \sum_{i,i',j,j',k,k'\text{ valid}} \Pr[F_{ijk} = F_{i'j'k'} = 1].
\end{align*}

As we showed in the proof for \AlgTwo{}, each $4$-cycle of $G$ has at most
2 valid choices mapped to it, each $3$-star of $G$ has at most 6 valid
choices mapped to it, and each $2$-star of $G$ has at most one valid choice
mapped to it.

In the following, let $P(i,j,k)$ be the event that $F_{ijk} = F_{i'j'k'} = 1$.
Observing Figure~\ref{fig:four-cycle}, we can see that for each possible
mapping of a valid choice to a $4$-cycle, five edges must be present in $G'$ in
order for $P(i,j,k) = 1$. Thus, $\Pr[P(i,j,k) = 1] \leq \mu^5$. For each possible
mapping of a valid choice to a $3$-star, five edges must be present in $G'$ in
order for $P(i,j,k) = 1$. Thus, $\Pr[P(i,j,k) = 1] \leq \mu^5$. For each possible
mapping of a valid choice to a $2$-star, three edges must be present in $G'$ in
order for $P(i,j,k) = 1$. Thus, $\Pr[P(i,j,k) = 1] \leq \mu^3$.

Plugging into~\eqref{eq:inter_var}, we get
\begin{align*}
  &\;\V[\hf_\triangle(G)] \\
  &\leq \frac{1}{(1-\rho)^2} \left(\frac{1}{\mu}(2C_4(G) + 6S_3(G)) + \frac{1}{\mu^3} S_2(G) + 2n\frac{d_{max}^2}{\mu^6\epsilon_2^2}\right).
\end{align*}
This establishes the result.
\qed

\begin{lemma}
  \label{lem:c_ij_4cycle_2star}
  Let $c_{ij} = \sum_{l < i \leq n} a_{l,i} a_{l,j}$. Then,
\[
    \sum_{i,j=1, i<j}^n c_{ij}^2 \leq 2 C_4(G) + S_2(G).
\]
\end{lemma}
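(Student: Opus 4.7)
\textbf{Proof plan for Lemma~\ref{lem:c_ij_4cycle_2star}.} My plan is to expand the square, split the resulting double sum into a diagonal and an off-diagonal part, and then interpret each part combinatorially as a count of $2$-stars and $4$-cycles, respectively. Specifically, writing $c_{ij} = \sum_{j < l \leq n} a_{l,i} a_{l,j}$ (which is the intended reading in light of the definition of $c_{jk}$ used earlier in the variance computation for \AlgOne{}), I would begin by expanding
\begin{align*}
  c_{ij}^2 = \sum_{l > j} a_{l,i} a_{l,j} + \sum_{\substack{l, l' > j \\ l \neq l'}} a_{l,i} a_{l,j} a_{l',i} a_{l',j},
\end{align*}
using $a_{l,i}^2 = a_{l,i}$. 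Summing over pairs $i < j$ then decomposes $\sum_{i<j} c_{ij}^2$ into a ``diagonal'' part $D = \sum_{i<j<l} a_{l,i} a_{l,j}$ and an ``off-diagonal'' part $O = \sum_{i<j} \sum_{l \neq l',\, l, l' > j} a_{l,i} a_{l,j} a_{l',i} a_{l',j}$.

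Next I would bound $D$ by $S_2(G)$. Each nonzero term corresponds to a triple $(i,j,l)$ with $i < j < l$ such that $v_l$ is adjacent to both $v_i$ and $v_j$; this is precisely a $2$-star centered at $v_l$ with leaves $\{v_i, v_j\}$ where the center has the largest index among the three. Since every $2$-star has a unique center and each such star is counted at most once, we get $D \leq S_2(G)$.

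The main work is the off-diagonal count $O \leq 2 C_4(G)$, which I would handle by a careful bipartition argument. Each nonzero term of $O$ corresponds to a $4$-tuple $(i,j,l,l')$ with $i < j$, $l \neq l'$, $l,l' > j$, such that both $v_l$ and $v_{l'}$ are adjacent to both $v_i$ and $v_j$. These four vertices span a $4$-cycle $v_i$–$v_l$–$v_j$–$v_{l'}$–$v_i$, and conversely the part $\{v_i, v_j\}$ lies on one side of the natural bipartition of that cycle. I would then show that for a fixed $4$-cycle in $G$ with bipartition $\{A, B\}$, the condition $l, l' > j \geq i$ forces $\{i,j\}$ to be the side whose maximum index is less than both indices on the other side; exactly one of the two sides can satisfy this (or neither), and once the side is fixed there are $2$ ordered choices of $(l,l')$. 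Hence each $4$-cycle contributes at most $2$ to $O$, giving $O \leq 2 C_4(G)$.

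Combining the two bounds yields $\sum_{i<j} c_{ij}^2 \leq 2 C_4(G) + S_2(G)$, as required. The main obstacle I anticipate is the case analysis for the $4$-cycle counting: one must be precise about which bipartition side plays the role of $\{i,j\}$ under the strict ordering constraints, and verify that no $4$-cycle is ever counted more than twice. The rest of the argument is routine once the expansion is set up.
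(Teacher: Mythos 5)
Your proposal is correct and takes essentially the same approach as the paper: the expansion into a diagonal part (counted against $S_2(G)$) and an off-diagonal part (counted against $2C_4(G)$) is exactly the paper's identity $c_{ij}^2 = c_{ij} + c_{ij}(c_{ij}-1)$, with $\binom{c_{ij}}{2}$ identified as 4-cycles whose two remaining nodes both lie on the low-index side. If anything you are slightly more careful than the paper, which writes $\sum_{i<j} c_{ij} = S_2(G)$ when the correct relation is $\leq$ (since $c_{ij}$ only counts common neighbors on one side of the ordering); your version states the inequality, which is what the lemma needs.
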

\begin{proof}
  \begin{align*}
      \sum_{i,j=1, i<j}^n c_{ij}^2
      &= \sum_{i,j=1, i<j}^n c_{ij} + \sum_{i,j=1, i<j}^n c_{ij}(c_{ij}-1) \\
      &= S_2(G) + \sum_{i,j=1, i<j}^n c_{ij}(c_{ij}-1).
  \end{align*}
  Let $C_{i-*-j-*-i}(G)$ be the number of 4-cycles in $G$
  such that the first and third nodes are $v_i$ and $v_j$, respectively $(i<j)$, and the remaining two nodes have smaller indices than $i$.
  From middle nodes in 2-paths starting at $v_i$ and ending at $v_j$, we can choose two nodes as the second and fourth nodes in the 4-cycles.
  $c_{ij}$ is the number of nodes that have smaller IDs than $v_i$ and are connected to $v_i$ and $v_j$.
  Thus,
  $C_{i-*-j-*-i}(G) = \binom{c_{ij}}{2}$.
  Therefore, we have
  \begin{align*}
      \sum_{i,j=1, i<j}^n c_{ij}^2
       &= S_2(G) + \sum_{i,j=1, i<j}^n 2C_{i-*-j-*-i}(G) \\
      &\leq S_2(G) + 2 C_4(G).
  \end{align*}
The last inequality comes from the fact that
two nodes with the largest indices may not be opposite to each other in some 4-cycles in $G$.
\end{proof}

\section{Proof of Statements in Section~\ref{sec:double_clip}}
\label{sec:proof_double_clip}
\subsection{Proof of Theorem~\ref{thm:privacy_DC}}
\label{sub:proof_privacy_DC}
Let $\bma_i, \bma'_i \in \{0,1\}^n$ be two neighbor lists that differ in one bit.
Let ${t_i}'$, $s'_i$, and ${w_i}'$ be respectively the values of $t_i$ (in line 8 of Algorithm~\ref{alg:clip}), $s_i$ (in line 9), and $w_i$ (in line 10) when the neighbor list of user $v_i$ is $\bma'_i$.
Let $\Delta w_i = |{w_i}' - w_i|$.

We assume that $|\bma'_i| = |\bma_i| + 1$ without loss of generality.
Let $\bar{\bma}_i, \bar{\bma}'_i \in \{0,1\}^n$ be neighbor lists corresponding to $\bma_i$ and $\bma'_i$, respectively, after edge clipping.
Note that $|\bar{\bma}_i| = |\bar{\bma}'_i| \leq \td_i$.
There are three cases for $\bar{\bma}_i$ and $\bar{\bma}'_i$:
\begin{enumerate}
    \item $\bar{\bma}_i$ is identical to $\bar{\bma}'_i$ and $|\bar{\bma}_i| = |\bar{\bma}'_i| = \td_i$.
    \item $\bar{\bma}_i$ and $\bar{\bma}'_i$ differ in one bit and $|\bar{\bma}'_i| = |\bar{\bma}_i| + 1$.
    \item $\bar{\bma}_i$ and $\bar{\bma}'_i$ differ in two bits and $|\bar{\bma}_i| = |\bar{\bma}'_i| = \td_i$.
\end{enumerate}
Note that the third case can happen when $|\bma'_i| \geq \td_i$.
For example, assume that $n=8$, $\td_i=4$, $\bma_i=(1,1,0,1,0,1,1,1)$ and $\bma'_i=(1,1,1,1,0,1,1,1)$.
If we select four ``1''s in the order of 3, 1, 4, 6, 8, 2, 7, and 5-th bit in the neighbor list,
$\bar{\bma}_i$ and $\bar{\bma}'_i$ will be:
$\bar{\bma}_i=(1,0,0,1,0,1,0,1)$ and $\bar{\bma}'_i=(1,0,1,1,0,1,0,0)$,
which differ in two bits.

If $\bar{\bma}_i$ and $\bar{\bma}'_i$ differ in one bit ($|\bar{\bma}'_i| = |\bar{\bma}_i| + 1$), then ${t_i}' - t_i \in [0,\kappa_i]$ and $s'_i - s_i \in [0,\td_i]$, hence $\Delta w_i = |({t_i}' - t_i) - \mu^*\rho(s'_i - s_i)| \leq \kappa_i$.
If $\bar{\bma}_i$ and $\bar{\bma}'_i$ differ in two bits ($|\bar{\bma}_i| = |\bar{\bma}'_i| = d_{max}$), then ${t_i}' - t_i \in [-\kappa_i,\kappa_i]$ and $s_i = s'_i = \binom{\td_i}{2}$, hence $\Delta w_i \leq \kappa_i$.

Therefore, we always have $\Delta w_i \leq \kappa_i$
(if $\bar{\bma}_i$ is identical to $\bar{\bma}'_i$, $\Delta w_i =0$).
Since we add $\Lap(\frac{1}{\epsilon_0})$
to $d_i$ and
$\Lap(\frac{\kappa_i}{\epsilon_2})$
to $w_i^*$, the second round provides ($\epsilon_0+\epsilon_2$)-edge LDP.
The first round provides $\epsilon_1$-edge LDP and we use only the lower-triangular part of $\bmA$.
Thus, by sequential composition (Proposition~\ref{prop:seq_comp_edge_LDP}) 
and Proposition~\ref{prop:edge_LDP_entire_edge_LDP},
$\calR_i$ satisfies $(\epsilon_0 + \epsilon_1 + \epsilon_2)$-edge LDP, and $(\calR_1, \ldots, \calR_n)$ satisfies $(\epsilon_0 + \epsilon_1 + \epsilon_2)$-relationship DP.
\qed

\subsection{Proof of Theorem~\ref{thm:triangle_excess}}

Recall that
$t_{i,j} = |\{(v_i,v_j,v_k) : a_{i,k} = 1, (v_j,v_k) \in M_i, j<k<i \}|$.
Let
$t'_{i,j} = |\{(v_i,v_j,v_k) : a_{i,k} = 1, (v_j,v_k) \in M_i \}|$.
Then $t_{i,j} \leq t'_{i,j}$.
Thus we have
\begin{align*}
    \Pr(t_{i,j} > \kappa_i) \leq \Pr(t_{i,j} \geq \kappa_i) \leq \Pr(t'_{i,j} \geq \kappa_i).
\end{align*}

Below we first prove (\ref{eq:AlgI_clip_bound}) and (\ref{eq:AlgII_clip_bound}). Then we prove (\ref{eq:AlgIII_clip_bound}).

\smallskip
\noindent{\textbf{Proof of (\ref{eq:AlgI_clip_bound}) and (\ref{eq:AlgII_clip_bound}).}}~~For each edge $(v_i,v_j)$, we have $\sum_{k \ne i,j} \textbf{1}_{(v_i,v_k) \in E} \leq \td_i$.
In \AlgOne{}, each edge $(v_j,v_k)$ is included in $E'$ with probability at most $\mu$, and all the events are independent.
In \AlgTwo{}, each of the edges $(v_i,v_k)$ and $(v_j,v_k)$ is included in $E'$ with probability at most $\mu$, and all the events are independent.
Thus, $\Pr(t'_{i,j} \geq \kappa_i)$ is less than or equal to the probability that the number of successes in the binomial distribution $B(\td_i, \mu^*)$ ($\mu^* = \mu$ in \AlgOne{} and $\mu^2$ in \AlgTwo{}) is larger than or equal to $\kappa_i$.

Let $X_{n,p}$ be a random variable representing the number of successes in the binomial distribution $B(n,p)$, and $F(\kappa_i;n,p) = \Pr(X_{n,p} \leq \kappa_i)$; i.e., $F$ is a cumulative distribution function of $B(n,p)$.
Since $\kappa_i \geq \mu^* \td_i$, we have
\begin{align*}
    &\Pr(t'_{i,j} \geq \kappa_i) \\
    &\leq \Pr(X_{\td_i, \mu^*} \geq \kappa_i) \\
    &= F(\td_i - \kappa_i; \td_i, 1-\mu^*) \\
    &\leq \exp \left[-\td_i D \left(\frac{\td_i - \kappa_i}{\td_i} \parallel 1-\mu^* \right) \right]  \text{(by Chernoff bound)}\\
    &= \exp \left[-\td_i D \left(\frac{\kappa_i}{\td_i} \parallel \mu^* \right) \right],
\end{align*}
which proves (\ref{eq:AlgI_clip_bound}) and (\ref{eq:AlgII_clip_bound}) (as $\Pr(t_{i,j} > \kappa_i) \leq \Pr(t'_{i,j} \geq \kappa_i)$).

\smallskip
\noindent{\textbf{Proof of (\ref{eq:AlgIII_clip_bound}).}}~~Assume that $\kappa_i \geq \mu^2 \td_i$ in \AlgThree{}.
For each edge $(v_i,v_j)$, we have $\sum_{k \ne i,j} \textbf{1}_{(i,k) \in E} \leq \td_i$.
In addition, each of the edges $(v_i,v_k)$ and
$(v_j,v_k)$ are included in $E'$ with probability at most $\mu$, and all the events are independent.

If $(v_i,v_j)$ is included in $E'$ (which happens with probability at most $\mu$), $\Pr(t'_{i,j} \geq \kappa_i)$ is less than or equal to the probability that the number of successes in the binomial distribution $B(\td_i, \mu^2)$ is larger than or equal to $\kappa_i$.
Otherwise (i.e., if $(v_i,v_j)$ is not included in $E'$), then $t'_{i,j} = 0$.

Thus, if $\kappa_i \geq \mu^2 \td_i$, we have
\begin{align*}
    &\Pr(t'_{i,j} \geq \kappa_i) \\
    &\leq \mu \Pr(X_{\td_i, \mu^2} \geq \kappa_i) \\
    &= \mu F(\td_i - \kappa_i; \td_i, 1-\mu^2) \\
    &\leq \mu \exp \left[-\td_i D \left(\frac{\td_i - \kappa_i}{\td_i} \parallel 1-\mu^2 \right) \right]  ~ \text{(by Chernoff bound)}\\
    &= \mu \exp \left[-\td_i D \left(\frac{\kappa_i}{\td_i} \parallel \mu^2 \right) \right],
\end{align*}
and therefore (\ref{eq:AlgIII_clip_bound}) holds (as $\Pr(t_{i,j} > \kappa_i) \leq \Pr(t'_{i,j} \geq \kappa_i)$).

If $\mu^3 \td_i \leq \kappa_i < \mu^2 \td_i$, (\ref{eq:AlgIII_clip_bound}) can be written as: $\Pr(t_{i,j} > \kappa_i) \leq \mu \exp \left[-\td_i D \left(\mu^2 \parallel \mu^2 \right) \right] = \mu$.
This clearly holds because each edge $(v_i,v_j)$ is included in $E'$ with probability at most $\mu$.
\qed

\section{Addendum to Double Clipping}
\label{sec:addendum}
\subsection{Sensitivity}
\label{sub:sensitivity}
After this paper was published at USENIX Security 2022\footnote{https://www.usenix.org/system/files/sec22-imola.pdf}, we found that we underestimated the sensitivity of our double clipping technique in Section~\ref{sec:double_clip}. 
For example, consider a graph $G=(V,E)$, where $V=\{v_1,v_2,v_3,v_4 \}$ and $E=\{(v_1,v_4), (v_2,v_4), (v_3,v_4) \}$. 
Assume that \AlgOne{} with double clipping is used as a triangle counting algorithm and that a message $M_4$ for $v_4$ is $M_4 = \{(v_1,v_3),(v_2,v_3)\}$. 
Then, $G$ has two noisy triangles $(v_1,v_3,v_4)$ and $(v_2,v_3,v_4)$ whose noisy edges are $(v_1,v_3)$ and $(v_2,v_3)$, respectively. 
Assume that $\kappa$ is set to $\kappa_4 = 1$. 

Recall that for $j$ such that $a_{i,j} = 1$ and $j<i$, we calculate $t_{i,j} = |\{(v_i,v_j,v_k) : a_{i,k} = 1, (v_j,v_k) \in M_i, j<k<i \}|$ (see line 5 in Algorithm~\ref{alg:clip}). 
In the above example, we have that $t_{4,1} = t_{4,2} = 1$ and $t_{4,3} = 0$. 
Thus, no triangle clipping occurs for $v_4$, and consequently, $t_4 = 2$ (see line 8 in Algorithm~\ref{alg:clip}). 

Now, consider a graph $G'=(V,E')$ obtained by removing $(v_3,v_4)$ from $G$, i.e., $E'=\{(v_1,v_4), (v_2,v_4) \}$. 
$G'$ has no noisy triangles, as $(v_3,v_4)$ is removed from two noisy triangles $(v_1,v_3,v_4)$ and $(v_2,v_3,v_4)$ in $G$. 
Thus, $t_{4,1} = t_{4,2} = t_{4,3} = 0$, and consequently, $t_4 = 0$. 
This means that the sensitivity of $t_4$ is $2$. 
Therefore, $\kappa_4$ ($=1$) is not an upper bound on the sensitivity of $t_4$. 
The root cause of this issue is that adding a single edge of user $v_i$ could increment $t_{i,j}$ for more than $\kappa_i$ choices of $v_j$ ($2$ choices in the above example), even if each $t_{i,j}$ is clipped to be at most $\kappa_i$. 

\subsection{($\epsilon,\delta$)-edge LDP}
The sensitivity issue in Appendix~\ref{sub:sensitivity} implies that adding $\Lap(\frac{\kappa_i}{\epsilon_2})$ (line 11 in Algorithm~\ref{alg:clip}) does not provide $\epsilon_2$-edge LDP. 
Consequently, our algorithms with double clipping do not satisfy $(\epsilon_0 + \epsilon_1 + \epsilon_2)$-edge LDP; i.e., Theorem~\ref{thm:privacy_DC} is incorrect. 

However, our algorithms with double clipping provide $(\epsilon,\delta)$-edge LDP and $(\epsilon,\delta)$-relationship DP, where $\epsilon = \epsilon_0 + \epsilon_1 + \epsilon_2$ and $\delta$ is immediately derived from Theorem~\ref{thm:triangle_excess} that upper bounds the triangle excess probability. 
Fortunately, our double clipping is very effective even if $\delta$ is extremely small (e.g., $\delta \ll n^{-1}$ or $n^{-2}$), as shown in Appendix~\ref{sub:clip_exp}. 

Formally, we define $(\epsilon,\delta)$-edge LDP and $(\epsilon,\delta)$-relationship DP as follows: 

\begin{definition} [$(\epsilon,\delta)$-edge LDP] \label{def:edge_LDP_delta} 
Let $\epsilon \in \nnreals$ and $\delta \in [0,1]$. 
For $i \in [n]$, let $\calR_i$ be a local randomizer of user $v_i$ that takes $\bma_i$ as input. We say $\calR_i$ provides
\emph{$(\epsilon,\delta)$-edge LDP} 
if for any two neighbor lists  $\bma_i, \bma'_i \in \{0,1\}^n$ that differ in one bit and any $S \subseteq \mathrm{Range}(\calR_i)$, 
\begin{align*}
\Pr[\calR_i(\bma_i) \in S] \leq e^\epsilon \Pr[\calR_i(\bma'_i) \in S] + \delta.
\end{align*}
\end{definition}

\begin{definition} [$(\epsilon,\delta)$-relationship DP] 
\label{def:entire_edge_LDP_delta} 
  Let $\epsilon \in \nnreals$ and $\delta \in [0,1]$. For 
  $i \in [n]$, 
  let $\calR_i$ be a 
  local randomizer of user $v_i$ that 
  takes $\bma_i$ as input. We say 
  $(\calR_1, \ldots, \calR_n)$ provides 
\emph{$(\epsilon,\delta)$-relationship DP} 
if for any two neighboring graphs $G, G' \in \calG$ that differ in one edge and 
  any $S \subseteq \mathrm{Range}(\calR_1) \times \ldots \times \mathrm{Range}(\calR_n)$, 
\begin{align*}
  &\Pr[(\calR_1(\bma_1), \ldots, \calR_n(\bma_n)) \in S] \nonumber\\
  &\leq e^\epsilon \Pr[(\calR_1(\bma'_1), \ldots, \calR_n(\bma'_n)) \in S] + \delta,
\end{align*}
  where $\bma_i$ (resp. $\bma_i'$) $\in \{0,1\}^n$ is the $i$-th row of the
  adjacency matrix of graph $G$ (resp. $G'$).
\end{definition}

\begin{proposition} [$(\epsilon,\delta)$-edge LDP and $(\epsilon,\delta)$-relationship DP] 
\label{prop:edge_LDP_entire_edge_LDP_delta} 
  If 
  each 
  of local randomizers $\calR_1, \ldots, \calR_n$ 
  provides 
  $(\epsilon,\delta)$-edge LDP, then $(\calR_1, \ldots, \calR_n)$ provides 
  $(2\epsilon,2\delta)$-relationship DP. 
  Additionally, if each $\calR_i$ uses only bits $a_{i,1}, \ldots, a_{i,i-1}$ for users with smaller IDs (i.e., only the lower triangular part of $\bmA$), then $(\calR_1, \ldots, \calR_n)$ provides 
  $(\epsilon,\delta)$-relationship DP. 
\end{proposition}
\begin{proof}
Immediately derived from group privacy \cite{DP}. 
\end{proof}

Then, our algorithms with double clipping provide the following privacy guarantees:
\begin{theorem}\label{thm:privacy_DC_delta}
  For $i \in [n]$, 
  let $\calR_i^1, \calR_i^2(M_i)$ be the randomizers used by user $v_i$ in
  rounds $1$ and $2$ of our algorithms with double clipping (Algorithms~\ref{alg:unify} and \ref{alg:clip}). 
  Assume that the clipping threshold $\kappa_i$ is set so that the upper-bound in Theorem~\ref{thm:triangle_excess} is smaller than or equal to $\beta \in [0,1]$; i.e., 
  \begin{align}
    \hspace{-1mm} \textstyle{\exp \left[-\td_i D \left(\frac{\kappa_i}{\td_i} \parallel \mu \right) \right]} &\leq \beta \label{eq:AlgI_delta} \\
    \hspace{-1mm} \textstyle{\exp \left[-\td_i D \left(\frac{\kappa_i}{\td_i} \parallel \mu^2 \right) \right]} &\leq \beta\label{eq:AlgII_delta}\\
    \hspace{-1mm} \textstyle{\mu \exp \left[-\td_i D \left(\frac{\max\{\kappa_i,\mu^2 \td_i\}}{\td_i} \parallel \mu^2 \right) \right]} &\leq \beta
    \label{eq:AlgIII_delta}
  \end{align}
  in \AlgOne{}, \AlgTwo{}, and \AlgThree{}, respectively. 
  Let $\calR_i(\bma_i) = (\calR_i^1(\bma_i), \calR_i^2(M_i)(\bma_i))$ 
  be the composition of the two randomizers. 
  Then,
  $\calR_i$ satisfies $(\epsilon, \delta)$-edge LDP, 
  and $(\calR_1,
  \ldots, \calR_n)$ satisfies $(\epsilon, \delta)$-relationship DP, where $\epsilon = \epsilon_0 + \epsilon_1 + \epsilon_2$ and $\delta = n \beta$. 
\end{theorem}

\begin{proof}
We first show that 
$\calR_i$ satisfies $(\epsilon, \delta)$-edge LDP, where $\epsilon = \epsilon_0 + \epsilon_1 + \epsilon_2$ and $\delta = n\beta$. 
For $S \subseteq \mathrm{Range}(\calR_i)$, let $S_0 \subseteq S$ be the set such that 
\begin{align}
\Pr[\calR_i(\bma_i) = s] \leq e^\epsilon \Pr[\calR_i(\bma'_i) = s]
\label{eq:calR_i_less_than_eps}
\end{align}
holds for any $s \in S_0$ and 
\begin{align*}
\Pr[\calR_i(\bma_i) = s] > e^\epsilon \Pr[\calR_i(\bma'_i) = s]
\end{align*}
holds for any $s \in S \setminus S_0$. 
Then, we have
\begin{align*}
\Pr[\calR_i(\bma_i) \in S] 
&= \Pr[\calR_i(\bma_i) \in S_0] + \Pr[\calR_i(\bma_i) \in S \setminus S_0] \\
&\leq e^\epsilon  \Pr[\calR_i(\bma'_i) \in S_0] + \delta,
\end{align*}
where
\begin{align*}
\delta = \Pr[\calR_i(\bma_i) \in S \setminus S_0].
\end{align*}
By (\ref{eq:AlgI_delta}), (\ref{eq:AlgII_delta}), (\ref{eq:AlgIII_delta}), and Theorem~\ref{thm:triangle_excess}, we always have 
\begin{align*}
\Pr(t_{i,j} > \kappa_i) \leq \beta.
\end{align*}
When $t_{i,1}, \cdots, t_{i,i-1} \leq \kappa_i$, 
$\kappa_i$ upper bounds the sensitivity of $t_i$, 
hence (\ref{eq:calR_i_less_than_eps}) holds (see Appendix~\ref{sub:proof_privacy_DC} for the proof in this case). 
Thus, $\delta$ can be written as follows:
\begin{align*}
\delta 
\leq 1 - \Pr(t_{i,1}, \cdots, t_{i,i-1} \leq \kappa_i) 
\leq \sum_{j=1}^{i-1} \Pr(t_{i,j} > \kappa_i) 
\leq n\beta. 
\end{align*}
Therefore, $\calR_i$ satisfies $(\epsilon, \delta)$-edge LDP, where $\epsilon = \epsilon_0 + \epsilon_1 + \epsilon_2$ and $\delta = n\beta$. 

Next, we show that $(\calR_1,
\ldots, \calR_n)$ satisfies $(\epsilon, \delta)$-relationship DP. 
This is immediately derived from Proposition~\ref{prop:edge_LDP_entire_edge_LDP_delta} and the fact that each $\calR_i$ uses only the lower triangular part of $\bmA$.
\end{proof}

\smallskip
\noindent{\textbf{Remark.}}~~In line 8 of Algorithm~\ref{alg:clip}, we calculate the total number $t_i$ of noisy triangles for user $v_i$ 
by clipping the number of noisy triangles; i.e., 
$t_i = \sum_{a_{i,j}=1,j<i} \min\{t_{i,j}, \kappa_i\}$. 
Theorem~\ref{thm:privacy_DC_delta} holds even when we do not clip the number of noisy triangles and calculate $t_i$ by simply summing up $t_{i,j}$; i.e., $t_i = \sum_{a_{i,j}=1,j<i} t_{i,j}$. 
In this case, we perform calculate $\td_i$ by edge clipping 
and then set $\kappa_i$ so that the triangle excess probability is smaller than or equal to $\beta$; 
i.e., (\ref{eq:AlgI_delta}), (\ref{eq:AlgII_delta}), and (\ref{eq:AlgIII_delta}) hold. 
Then, the failure probability for user $v_i$ is upper bounded by $\delta = n\beta$, and consequently, our algorithms provide $(\epsilon, \delta)$-edge LDP and $(\epsilon, \delta)$-relationship DP. 
When $\delta = n \beta$ is extremely small (e.g., $\delta \ll n^{-1}$ or $n^{-2}$), triangle removal rarely occurs. 
Thus, whether or not we clip the number of noisy triangles in line 8 of Algorithm~\ref{alg:clip} makes little difference to the utility. 
Therefore, in Appendix~\ref{sub:clip_exp}, we evaluate our algorithms without modifying line 8 of Algorithm~\ref{alg:clip}.

\subsection{Experimental Results}
\label{sub:clip_exp}

We evaluated the relationship between the relative error and $\delta$ $(=n \beta)$ in our algorithms with double clipping. 
We set the triangle excess probability $\beta$ to various values from $10^{-6}$ to $10^{-36}$. 
The other settings are the same as Section~\ref{sec:experiments}.  

Figure~\ref{fig:resC_delta} shows the results when $\epsilon=1$, $\mu^*=10^{-3}$, and $\beta = 10^{-6}$, $10^{-12}$, $10^{-18}$, $10^{-24}$, $10^{-30}$, or $10^{-36}$. 
We observe that our double clipping dramatically reduces the relative error even when $\delta$ is extremely small. 
It is well known that $\delta$ should be much smaller than $n^{-1}$ in tabular data \cite{DP}. 
In graph data, an ideal value of $\delta$ might be much smaller than $n^{-2}$, as a graph has at most $\binom{n}{2}$ edges. 
This is achieved by, for example, setting $\beta = 10^{-24}$ (in this case, $\delta = 1.08 \times 10^{-19} \ll n^{-2}$ in \GPlus{} and $\delta = 8.96 \times 10^{-19} \ll n^{-2}$ in \IMDB{}). 
Figure~\ref{fig:resC_delta} shows that our double clipping is very effective even when we set $\beta = 10^{-24}$ so that $\delta \ll n^{-2}$. 

\begin{figure}[t]
  \centering
  \includegraphics[width=0.96\linewidth]{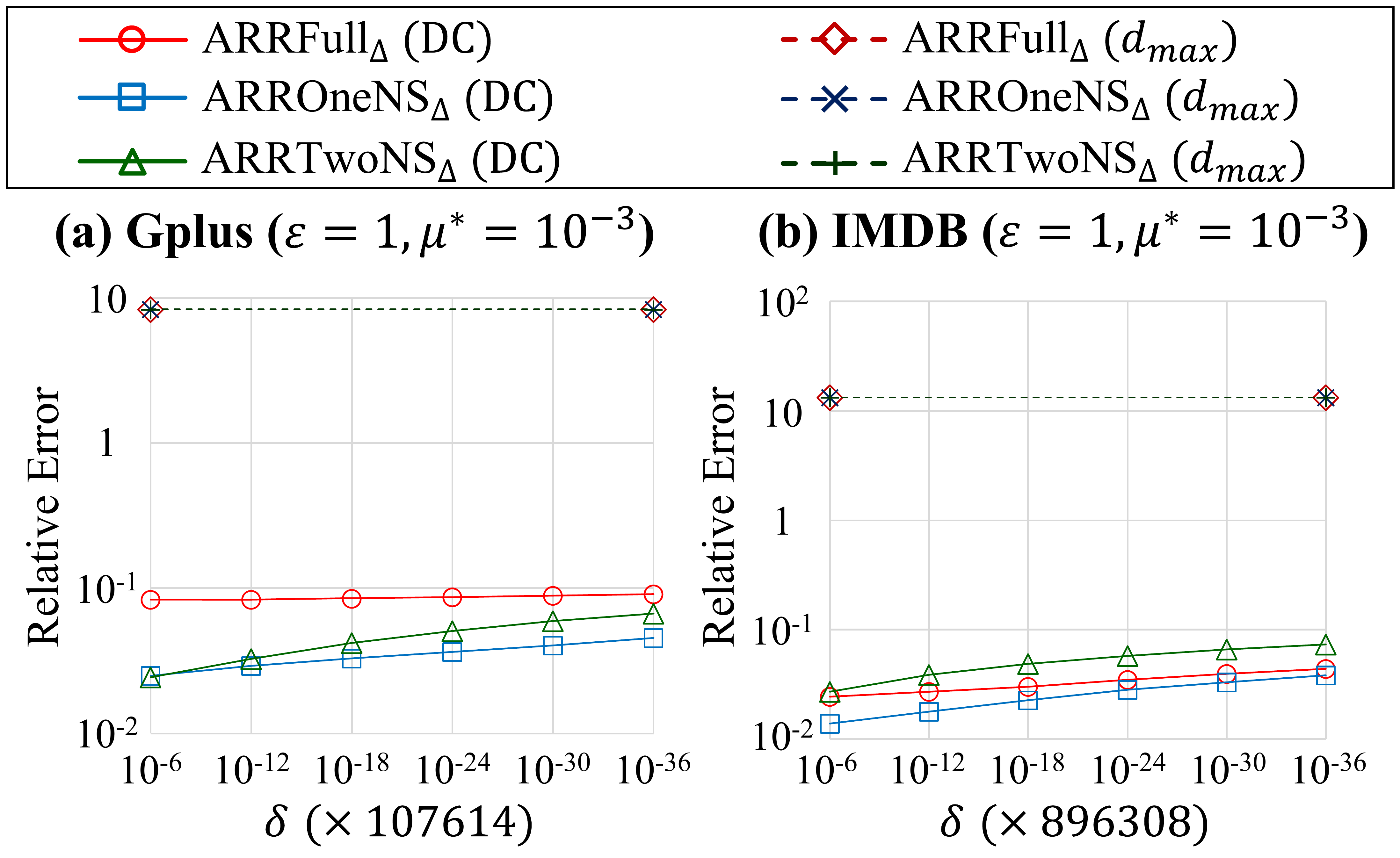}
  \vspace{-4mm}
  \caption{Relationship between the relative error and $\delta$ $(=n \beta)$ in our three algorithms with double clipping (``DC'') ($\epsilon=1$, $\mu^*=10^{-3}$, $\beta = 10^{-6}, \cdots, 10^{-30}$, or $10^{-36}$, $n=107614$ in \GPlus{}, $n=896308$ in \IMDB{}). 
  Note that $\delta = 0$ in our three algorithms without double clipping (``$d_{max}$'').
  } 
  \label{fig:resC_delta}
\end{figure}

\begin{figure}[t]
  \centering
  \includegraphics[width=0.99\linewidth]{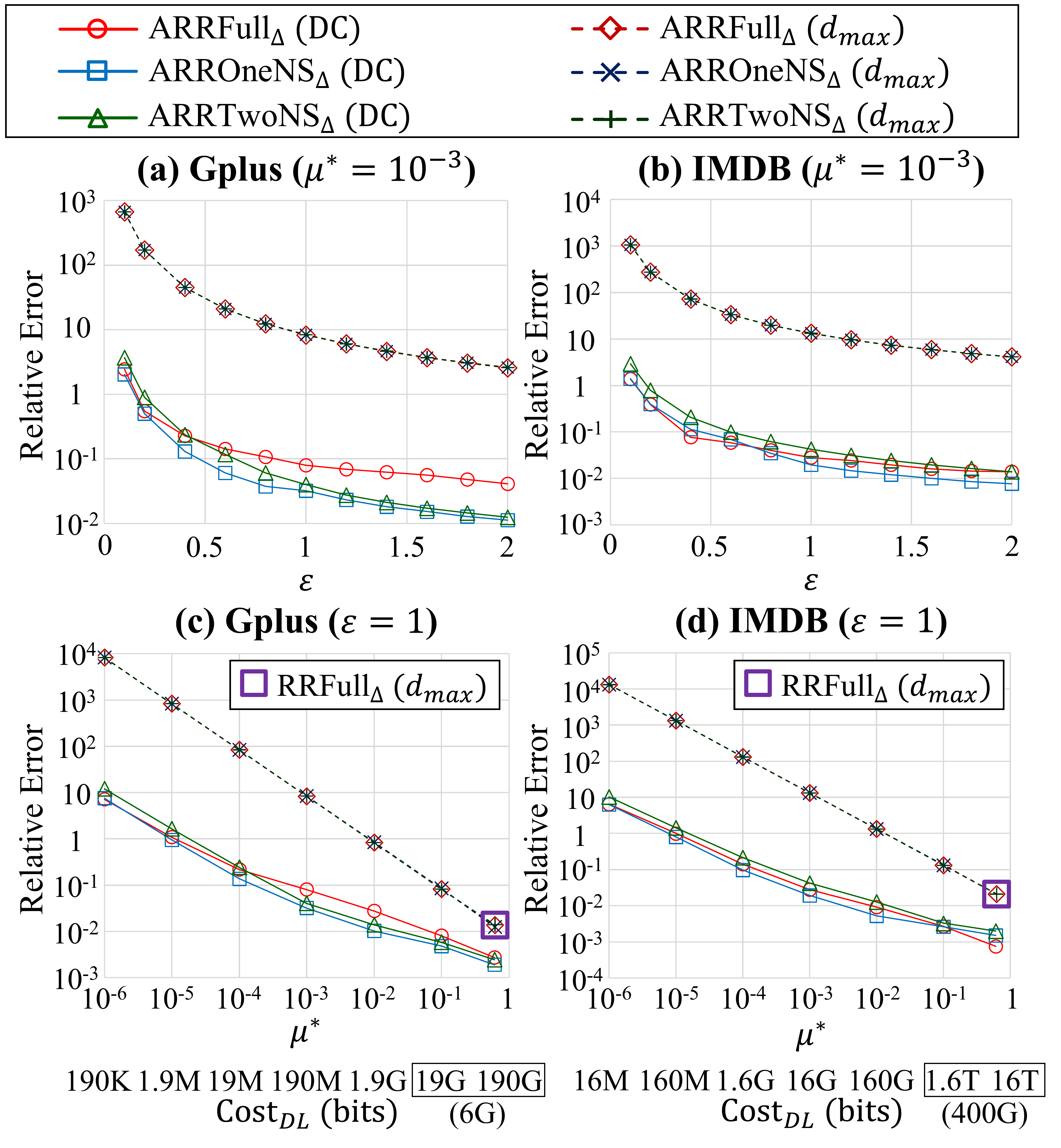}
  \vspace{-4mm}
  \caption{Relative error of our three algorithms with (``DC'') or without (``$d_{max}$'') double clipping when $\beta = 10^{-14}$ ($n=107614$ and 
  $\delta = 1.08 \times 10^{-9} \ll n^{-1}$ in \GPlus{}; 
  $n=896308$ and $\delta = 8.96 \times 10^{-9} \ll n^{-1}$ in \IMDB{}). 
  \AlgSec{} is the 
  algorithm in~\cite{Imola_USENIX21}. 
  $\CostDL$ is 
  an upper-bound in 
  (\ref{eq:CostDL_F}). 
  } 
  \label{fig:resC_middle_delta}
\end{figure}

\begin{figure}[t]
  \centering
  \includegraphics[width=0.99\linewidth]{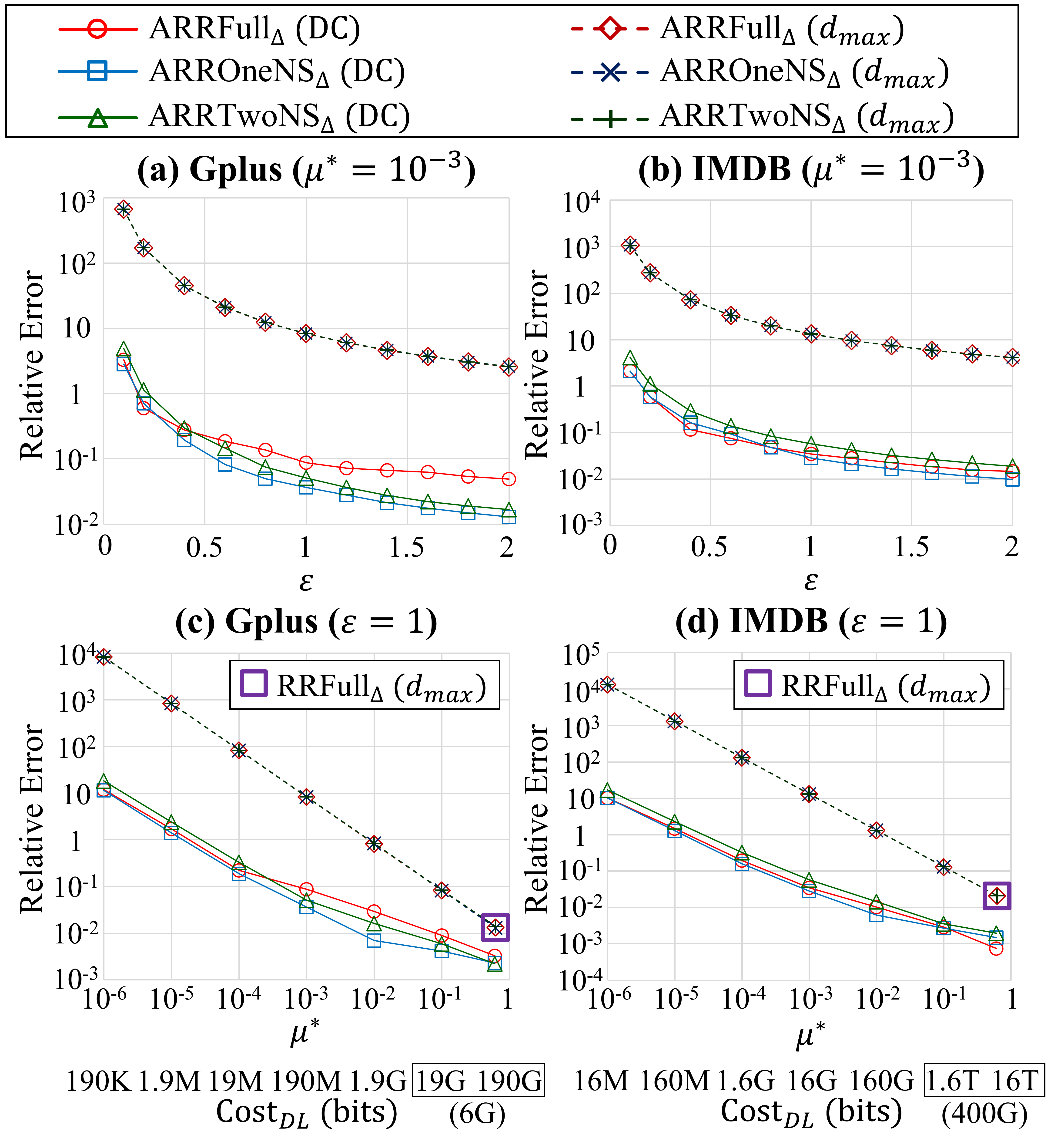}
  \vspace{-4mm}
  \caption{Relative error of our three algorithms with (``DC'') or without (``$d_{max}$'') double clipping when $\beta = 10^{-24}$ ($n=107614$ and 
  $\delta = 1.08 \times 10^{-19} \ll n^{-2}$ in \GPlus{}; 
  $n=896308$ and $\delta = 8.96 \times 10^{-19} \ll n^{-2}$ in \IMDB{}). 
  \AlgSec{} is the 
  algorithm in~\cite{Imola_USENIX21}. 
  $\CostDL$ is 
  an upper-bound in 
  (\ref{eq:CostDL_F}). 
  } 
  \label{fig:resC_small_delta}
\end{figure}

Figures~\ref{fig:resC_middle_delta} and \ref{fig:resC_small_delta} show the relative error when we set 
$\beta = 10^{-14}$ (i.e., $\delta \ll n^{-1}$) and 
$\beta = 10^{-24}$ (i.e., $\delta \ll n^{-2}$), respectively. 
These figures show very similar tendencies to Figure~\ref{fig:res2_w_Lap}. 
For example, 
our \AlgTwo{} (DC) with $\beta = 10^{-14}$ (resp.~$10^{-24}$) reduces the communication cost in \GPlus{} from $6$ Gbits (300 seconds when $20$ Mbps) to $19$ Mbits (0.95 seconds) or less while keeping relative error $= 0.14$ (resp.~$0.21$). 
For \IMDB{}, our \AlgTwo{} (DC) with $\beta = 10^{-14}$ (resp.~$10^{-24}$) reduces the communication cost from $400$ Gbits (6 hours) to $1.6$ Gbits (80 seconds) or less while keeping relative error $= 0.097$ (resp.~$0.16$). 

In summary, our algorithms with double clipping provide $(\epsilon,\delta)$-edge LDP and $(\epsilon,\delta)$-relationship DP, and dramatically reduce the communication cost of the algorithm in \cite{Imola_USENIX21} even when $\delta \ll n^{-1}$ or $n^{-2}$.
\newpage
\hspace{5mm}
}

\end{document}